\documentclass{article}
\def\IEEEversion{0}
\usepackage[a4paper,hmargin=3cm,vmargin=4.5cm]{geometry}
\usepackage[utf8]{inputenc}
\usepackage{cmap}

\usepackage[english]{babel}

\usepackage{amsmath}
\usepackage{amssymb}
\usepackage{amsthm}
\usepackage{dsfont}
\usepackage{mathrsfs}

\usepackage{hyperref}

\usepackage{booktabs}
\usepackage{multirow}
\usepackage{multicol}
\usepackage[table]{xcolor} 

\usepackage{graphicx} \usepackage[export]{adjustbox} \usepackage{framed}

\usepackage{enumerate}

\usepackage{titlesec} \if\IEEEversion0
\usepackage{sectsty}  \allsectionsfont{\normalfont\sffamily\bfseries}
\fi

\usepackage{algorithm}
\usepackage{algpseudocode}
\algnewcommand{\LineComment}[1]{\State{\color{gray} \(\triangleright\) #1}}
\usepackage{listings}

\usepackage{subcaption}
\usepackage{graphicx}
\usepackage{tikz}
\usetikzlibrary{decorations.pathreplacing}
\usetikzlibrary{arrows}
\usetikzlibrary{arrows.meta}
\usetikzlibrary{patterns}
\usetikzlibrary{shapes}
\usetikzlibrary{shadows}
\usetikzlibrary{calc}
\usetikzlibrary{math}
\tikzstyle{fun}=[draw,very thick,fill=white]
\tikzstyle{fork}=[shape=circle,inner sep=0pt,minimum size=3pt,fill=black]
\tikzstyle{op}=[inner sep=2pt]

\usepackage{ifthen}
\usepackage{comment}

\usepackage{array}
\newcolumntype{t}{>{\tt}c}

\newtheorem{theorem}{Theorem}
\newtheorem{definition}[theorem]{Definition}
\newtheorem{lemma}[theorem]{Lemma}
\newtheorem{corollary}[theorem]{Corollary}

\newtheorem{proposition}[theorem]{Proposition}

\newtheorem{problem}{Problem}

\newtheorem{notation}[theorem]{Notation}

\theoremstyle{remark}
\newtheorem{remark}[theorem]{Remark}
\newtheorem{example}[theorem]{Example}

\newcommand\LL{\mathcal{L}}

\newcommand\F{\mathbb{F}}
\newcommand\ftwo{\mathbb{F}_2}
\newcommand\field[1]{\mathbb{F}_{2^{#1}}}

\newcommand\bigoh[1]{\mathrm{O}\left( #1  \right)}
\newcommand\trace{\mathrm{Tr}}

\newcommand\scalarprod[2]{#1 \cdot #2}
\newcommand\walsh[1]{\mathcal{W}_{#1}}

\newcommand\openbutterfly[3]{\mathsf{H}^{#1}_{#2, #3}}

\newcommand\rank{\mathsf{rank}}

\newcommand\walshZeroes[1]{\mathcal{Z}_{#1}}

\newcommand\spaceInput{\mathcal{V}}
\newcommand\spaceOutput{\mathcal{V}^{\perp}}

\newcommand\codebook[1]{\Gamma_{#1}}

\newcommand\jacobian[2]{\mathrm{Jac}\ \! #1 (#2)}
\newcommand\Jlin[2]{\mathrm{Jac}_{\rm lin}\ \! #1 (#2)}
\newcommand\JacPoly{\mathscr{J}\!}
\newcommand\partialDiff[2]{\frac{\partial #1}{\partial #2}}

\newcommand{\rktable}[1]{\mathcal{R}(#1)}
\newcommand{\rkdist}[1]{\mathcal{R}_{\mathrm{dist}}(#1)}
\newcommand{\ddist}[1]{\mathcal{D}_{\mathrm{dist}}(#1)}

\renewcommand{\leq}{\leqslant}
\renewcommand{\geq}{\geqslant}

\if\IEEEversion1
\usepackage{cite}
\fi 
\title{Recovering or Testing Extended-Affine Equivalence}
\author{Anne Canteaut, Alain Couvreur, L\'eo Perrin}
\date{\today}
\begin{document}

\maketitle

\begin{abstract}
  Extended Affine (EA) equivalence is the equivalence relation between
  two vectorial Boolean functions $F$ and $G$ such that there exist
  two affine permutations $A$, $B$, and an affine function $C$
  satisfying $G = A \circ F \circ B + C$. While the problem has a
  simple formulation, it is very difficult in practice to test whether
  two functions are EA-equivalent.  This problem has two variants:
  {\em EA-partitioning} deals with partitioning a set of functions into disjoint
  EA-equivalence classes, and \emph{EA-recovery} is about recovering
  the tuple $(A,B,C)$ if it exists.

  In this paper, we present a new algorithm that efficiently solves
  the EA-recovery problem for quadratic functions. Although its
  worst-case complexity occurs when dealing with APN functions, it
  supersedes, in terms of performance, all previously known algorithms
  for solving this problem for all quadratic functions and in any
  dimension, even in the case of APN functions. This approach is based
  on the Jacobian matrix of the functions, a tool whose study in this
  context can be of independent interest.

  The best approach for EA-partitioning in practice mainly relies on class
  invariants. We provide an overview of the known invariants along
  with a new one based on the \emph{ortho-derivative}. This new
  invariant is applicable to quadratic APN functions, a specific
  type of functions that is of great interest, and of which tens of
  thousands need to be sorted into distinct EA-classes. Our
  ortho-derivative-based invariant is very fast to compute, and it
  practically always distinguishes between EA-inequivalent quadratic APN functions.
\end{abstract}
 \section{Introduction}
\label{sec:intro}
Nonlinear vectorial Boolean functions (also known as {\em Sboxes}) are crucial
building-blocks of most symmetric cryptosystems. For instance, their
properties can be used to prove that a cryptographic primitive is safe from
differential and linear cryptanalysis~\cite{FSE:Nyberg12}.  The search for
vectorial Boolean functions that guarantee an optimal resistance to
these attacks has then motivated a long line of research in the last
30~years. Indeed, establishing a list (as complete as possible) of
optimal functions that could be used as Sboxes would be very
helpful to designers of cryptographic primitives.

However, the huge number of vectorial Boolean functions from
\(\F_2^n\) into \(\F_2^m\), even for small~\(n\) and \(m\), makes an exhaustive
search infeasible. In order to help in this task, the
functions are then considered up to some equivalence relations which
preserve the considered cryptographic properties.
The most natural notions of equivalence are affine equivalence, and
its generalization called {\em extended affine equivalence}, which are
defined as follows.

\begin{definition}[(Extended) Affine Equivalence]
  Two vectorial Boolean functions $F : \ftwo^{n} \to \ftwo^{m}$ and $G :
  \ftwo^{n} \to \ftwo^{m}$ are \emph{affine equivalent} if $G = A \circ F
  \circ B$ for some affine permutations $A$ of $\ftwo^{n}$ and $B$ of
  $\ftwo^{m}$. They are \emph{extended affine equivalent} (EA-equivalent)
  if $G = A\circ F \circ B + C$ where $A$ and $B$ are as before and
  where $C : \ftwo^{n} \to \ftwo^{m}$ is an affine function.
\end{definition}

Determining whether two functions
\(F,G: \F_2^n \rightarrow \F_2^m\)
are affine or EA-equivalent is a problem that appears in several situations:
when classifying vectorial Boolean functions, but also in
cryptanalysis, where several types of attacks need to recover, if it
exists, the transformation between two equivalent functions
(see e.g.~\cite{EC:Dinur18} for details).
This last situation then corresponds to the following problem,
named {\em EA-recovery}, which is investigated in this paper.

\begin{problem}[EA-recovery]\label{problem:ea-recovery}
  Let $F$ and $G$ be two functions from $\ftwo^{n}$ into
  $\ftwo^{m}$. Find, if they exist,
two affine permutations \(A\) and \(B\), and an affine function \(C\) such that \(G = A \circ F \circ B + C\).
\end{problem}

But, in practice, the following variant of
Problem~\ref{problem:ea-recovery} better captures some situations which
occur in particular when classifying vectorial Boolean functions with
good cryptographic properties.

\begin{problem}[EA-partitioning]\label{problem:ea-testing}
  Let $\{ F_{i} \}_{0 \leq i < \ell}$ be a set of $\ell$
  functions from $\ftwo^{n}$ into
  $\ftwo^{m}$. Partition this set in such a way
  that any two functions in the same subset are EA-equivalent and no two functions belonging to distinct subsets are EA-equivalent.
\end{problem}

Of course, EA-partitioning can be solved by applying an algorithm for
EA-recovery to each pair \((F_i, F_j)\), \(i < j\).  But, since we
mainly focus on situations where the set of functions to be
partitioned is large (for instance, around 20,000), examining all pairs of
functions would be too computationally expensive. A situation where EA-partitioning of such a large set of functions is
needed is related to one of the most prominent problems related to
optimal cryptographic functions: the search for so-called {\em Almost
  Perfect Nonlinear (APN)} functions, which are the functions offering
the best possible protection against differential
cryptanalysis~\cite{C:NybKnu92} (see e.g.~\cite{add:BloNyb15} for a
survey on APN functions). Most notably, the existence of APN bijective
mappings in even dimensions~\(n > 6\) is a long-standing open question
known as the {\em Big APN problem}~\cite{add:BDMW10}.  The only known
APN permutation in even dimension was exhibited in~2009 by Dillon and
his co-authors~\cite{add:BDMW10} in dimension~\(6\). It was derived
from an APN quadratic non-bijective function by a more general notion
of equivalence called {\em
  CCZ-equivalence}~\cite{add:CarChaZin98}. CCZ-equivalence expresses
the fact that the graphs of the functions, i.e., the sets of the form
\(\{(x,F(x)) : x \in \F_2^n\}\), are images of each other by an affine
transformation. This notion is strictly more general than
EA-equivalence~\cite{add:BudCar10} but the two notions coincide for
some particular cases including quadratic APN
functions~\cite{add:Yoshiara11}. Since Dillon {\em et al.}'s seminal
work, a standard strategy, followed by many
authors~\cite{add:YuWanLi14,add:CanDuvPer17,add:BeiLea20,add:WTG13},
to search for APN permutations in even dimension, consists in
searching for quadratic APN non-bijective functions and in exploring
their CCZ-equivalence class. To this end, it is important that the
explored CCZ-equivalence classes be distinct. This equivalently means
that they are generated from quadratic APN functions which are not
EA-equivalent. Following this approach, around \(20,000\) of quadratic
APN functions have been generated in~\cite{add:YuWanLi14} and
\cite{add:BeiLea20}, and it is necessary to check whether some of them
belong to the same EA-equivalence class (or, equivalently, the same
CCZ-class). The search for quadratic APN functions, which is motivated
by the {\em Big APN problem} but may be of independent interest, is
then the main relevant use-case for which we propose an efficient
EA-partitioning procedure, which amounts to partitioning the functions
into EA-classes based on their {\em ortho-derivatives}.

\subsection{State of the Art: Affine-Equivalence Recovery in some
  Specific Cases}

A generic procedure for testing CCZ-, EA- and affine equivalence
and, if it exists, for recovering the corresponding transformation, is
derived from coding theory. Actually, the main cryptographic properties
(e.g. the APN property) can be interpreted as conditions on some
binary linear codes, as first shown in~\cite{add:CarChaZin98}. To this
end, any function \(F\) from \(\F_2^n\) to \(\F_2^m\) is associated to the linear binary code \(\mathcal{C}_F\) of length \(2^n\) defined by the following
\((n+m+1) \times 2^n\) generator matrix
\[G_F = \left(\begin{array}{ccccc}
  1 & \ldots & 1 & \ldots & 1 \\
  0 & \ldots & x_i & \ldots & x_{2^n-1}\\
  F(0) & \ldots & F(x_i) & \ldots & F(x_{2^n-1})
\end{array}\right),\]
where \(\{0, x_1, \ldots, x_{2^n-1}\}\) denotes the set of all
elements in \(\F_2^n\) and each entry in the matrix is viewed as a
binary column-vector. As observed in~\cite{add:EdePot09b,add:BDKM09},
CCZ-equivalence then coincides with the usual notion of equivalence
between two linear binary
codes~\cite[Page~39]{MacSlo77}. EA-equivalence and affine equivalence
also correspond to code equivalence, but for codes defined by a
slightly different generator matrix (see Section~7
in~\cite{add:EdePot09b}).  A general technique for testing and
recovering CCZ- and EA-equivalence then consists in applying an
algorithm for testing the equivalence between two linear codes,
e.g.~\cite{add:Leon95}, as proposed
in~\cite{add:EdePot09b,add:Calderini20}. This algorithm is generic in
the sense that it applies to any linear binary code, not only to codes
associated to vectorial Boolean functions. But the downside is
that it is expensive in terms of both time and memory.
For instance, its implementation in SAGE~\cite{add:sage} requires at
least 40 seconds to check the CCZ-equivalence of two 8-bit functions, which
is often prohibitively expensive since this verification must be done
for many pairs of functions. {For 9-bit functions, it may crash
  from lack of memory, though the machine on which we performed this
  test has 32GB of RAM.} While Magma~\cite{add:Magma} might be faster,
  it is not as easily available (Magma is not free). More importantly,
  as noted in~\cite[p.~20]{add:nikolay}, the routine verifying
  CCZ-equivalence in Magma cannot be run for $n=12$ (even on a
  supercomputer with 500~GB of RAM). Even for $n \in \{9,10\}$, it
  sometimes requires several hours and may report wrong results.

However, since EA-recovery is a specific instance of the code
equivalence problem, it might be possible a more efficient algorithm if
we focus on codes of the form~\(\mathcal{C}_F\).  But very few results
are known even when \(F\) belongs to some particular families of
functions. Instead, algorithms for solving several particular cases,
when \(A\), \(B\) or \(C\) have a specific form, have been proposed
based on other techniques, for instance in~\cite{add:BudKaz12}
and~\cite{add:OzbSinYay14}.  The case $C = 0$ (corresponding to {\em
  affine equivalence}) has been solved when $F$ and $G$ are
permutations in the sense that we have algorithms capable of finding
$A$ and $B$ in this context~\cite{EC:BDBP03}.

\subsubsection{Guess-and-Determine}

The first algorithm for affine equivalence recovery was proposed
in~\cite{EC:BDBP03}. It is based on a subroutine which returns the
``linear representative'' of a permutation. Given a permutation $F$,
it returns the two linear permutations $L_{0}$ and $L_{1}$ such that, among
all the possible choices for $L_0, L_1$, the linear permutation
$L_{1} \circ F \circ L_{0}$ is the smallest with respect to the lexicographic
order. This algorithm is based on a guess-and-determine approach. Its
authors estimated its time complexity to be $\bigoh{n^{3}2^{n}}$ if
$F(0) \neq 0$ and $\bigoh{n^{3}2^{2n}}$ otherwise.

We have implemented this algorithm and, in practice, the running time
can be worse than this. Indeed, the complexity analysis assumes that a
contradiction in the guess-and-determine procedure will occur after a
small number of guesses.
This is usually true but, in some cases, it may happen that
all values for a variable need to
  be tested. In this case, the claimed complexity is multiplied by a
  factor \(2^n\).

  Using this algorithm, it is easy to recover $A$ and $B$ when they
  are linear. However, when they are affine, we also need to perform
  an exhaustive search for the constants \(A(0)\) and \(B(0)\). In
  this case, we generate two lists containing the linear
  representatives of $x \mapsto F(x \oplus a)$ and
  $x \mapsto b \oplus G(x)$. We then check if there is a match in
  these lists, i.e., an entry that belongs to both lists. Both the
  time and memory complexities in this case are multiplied by
  $2^{n}$. The overall time is then $\bigoh{n^{3}2^{2n}}$ (if we assume
  that the complexity estimation of the authors of~\cite{EC:BDBP03} is
  correct).

This method works for all permutations. But its downsides are that its
complexity is underestimated in some cases and that it
does not apply to non-bijective mappings.

\subsubsection{Rank Table}\label{sec:dinur}

In a more recent paper~\cite{EC:Dinur18}, Dinur proposed a completely
different approach based on so-called ``rank tables''. Paraphrasing
the introduction of said paper, the main idea of the algorithm is to
compute the rank tables of both $F$ and $G$ and then use these tables
to recover the affine transformation $B$, assuming that
$G = A \circ F \circ B$.  The rank tables of \(F\) and \(G\) are
obtained as follows.  We derive from $F$ (resp. from \(G\)) several
functions, each one defined by restricting its \(2^n\) inputs to an
affine subspace of dimension $n-1$. Each function derived in this way
has an associated rank, corresponding to the rank of the matrix formed
by the vectors representing the algebraic normal forms of its
coordinates (see Section~\ref{sec:preliminaries} for the definitions of
the notions). We then assign to each possible $(n-1)$-dimensional
subspace a corresponding rank. As there are $(2^{n+1}-2)$ possible
affine subspaces, we obtain $(2^{n+1}-2)$ rank values for $F$
(resp. for \(G\)).  These values are collected in the rank table of
$F$ (resp. \(G\)), where a rank table entry $r$ stores the set of all
affine subspaces to which a rank of~\(r\) has been assigned. We then
look for matches in these two rank tables.

This approach is faster than the algorithm of Biryukov {\em et al.},
as the computational time for affine equivalence recovery is
$\bigoh{n^{3}2^{n}}$, i.e. it is $2^{n}$ times faster. Unlike the
latter, the rank table-based approach works even if the functions are
not bijective; but it does require that their algebraic degree is high
enough, i.e. at least $n-2$~\cite{EC:Dinur18}. We have used an
implementation of this algorithm by its author and we have confirmed
that it could very efficiently handle non-bijective functions of
degree $n-1$.  However, for functions of degree $n-2$, the rank tables
of the test functions turn out to contain a single value. In such a
situation the algorithm is not faster than an exhaustive search, and
is therefore not relevant.

\subsection{Our Results}
While the two previously mentioned algorithms are dedicated to {\em
  affine equivalence} recovery, i.e, to the case \(C=0\), we present
here the first efficient algorithm for {\em EA-equivalence} recovery
in any dimension, 
when the involved functions \(F\) and
\(G : \F_2^n \rightarrow \F_2^m\) are quadratic. Our algorithm works both on
permutations and non-bijective functions. We prove that its
complexity depends on the differential spectrum of the
function and is estimated to be
$\bigoh{\max(n,m)^\omega 2^n + R^s (m^2+n^2)^{\omega}}$, where
$\omega \simeq 2.37$ denotes the complexity exponent of matrix
multiplication and $R$ is the number of vectors
$v \in \F_2^n \setminus \{0\}$ at which the rank of the Jacobian
matrix is the smallest possible. The last parameter $s$ is a number of
guesses which, when $m=n$, can be chosen to be equal to $3$.  Hence,
the estimated complexity is $\bigoh{n^\omega 2^n + R^3 n^{2\omega}}$
and it turns out that, for vectorial Boolean functions chosen at
random, the quantity $R$ is frequently very small.  On the other hand,
the most difficult case corresponds to APN functions where the
complexity is $\bigoh{2^{2n} (m^2+n^2)^{\omega}}$.

The second part of the paper details several tools for solving the
{\em EA-partitioning problem}. Most notably, we propose some new and
very efficient EA-invariants for quadratic APN functions, which is one
of the most important instances of this problem. This technique is
then used to partition the CCZ-classes of all the \(6\)-bit quadratic
APN functions into EA-classes. Also, by applying this method to
\(8\)-bit quadratic APN functions, we show that it is by far the most
efficient one for solving Problem~\ref{problem:ea-testing} in the case
of quadratic APN functions.

It is worth noting that, as detailed in Table~\ref{tab:recap}, only
some problems related to EA-equivalence have been solved. Several
simplified cases have been solved
in~\cite{add:BudKaz12,add:OzbSinYay14}, where some of the involved
affine functions correspond to the identity or to the addition of a constant. 
In a
  very recent and concurrent work~\cite{add:Kaleyski20} (independent
  of ours) another EA-recovery algorithm is presented by Kaleyski
  which is 
  based on a new EA-invariant called $\Sigma^{k}$-multiplicities. This
  algorithm is efficient and is not restricted to quadratic
  functions but it applies to even dimensions only. Indeed, it has
  been observed that the underlying invariant does not provide any
  useful information in odd dimensions. Its complexity, while low
  enough in practice, is hard to estimate due to its reliance on backtracking. However, finding a general
algorithm for EA-recovery, faster than generic techniques for
recovering code equivalence, for functions of degree strictly
greater than two and for any dimension, remains an open
problem.

\if\IEEEversion1
\begin{table*}[h!tb]
\else
\begin{table}[h!tb]
\fi
  \centering
  \renewcommand\arraystretch{1.3}
  \setlength\tabcolsep{8pt}
  \begin{tabular}{lcc}
    \toprule
    Condition & Complexity & Reference \\
    \midrule
    $C=0$, $m=n$, \(F\) and \(G\) bijective & $\bigoh{n^{3}2^{2n}}$ & \cite{EC:BDBP03} \\
    $C=0$, $\deg(F) \geq n-1$ & $\bigoh{n^{3}2^{n}}$ & \cite{EC:Dinur18} \\
    $A(x)=x \oplus a, B(x) = x \oplus b$ & $\bigoh{n2^{n}}$ & \cite{add:BudKaz12} \\
    $A(x) = x, B(0) = 0$ & $\bigoh{2^{n}}$ & \cite{add:OzbSinYay14} \\
    $B(x) = x \oplus b$ & $\bigoh{m2^{3n}}$ & \cite{add:BudKaz12} \\
    $\deg(F)=2$ & $\bigoh{n^{2\omega}2^{2n}}$ & Section~\ref{sec:jacobian-solving} \\
    $n$ even & -- & \cite{add:Kaleyski21} \\
    \bottomrule
  \end{tabular}
  \caption{\label{tab:recap}Algorithms solving the EA-recovery of $F :
    \ftwo^{n} \to \ftwo^{m}$ and $G=A \circ F \circ B + C$. See
    Section~\ref{sec:invariants-list} for an overview of EA-partitioning.}
\if\IEEEversion1
\end{table*}
\else
\end{table}
\fi

\paragraph{Organization of the Paper.}

We first recall the basic concepts and definitions needed in
Section~\ref{sec:preliminaries}.

The rest of the paper successively presents our results that can
efficiently tackle both Problems~\ref{problem:ea-recovery}
and~\ref{problem:ea-testing} in the case of quadratic
functions. First, we show how to reduce EA-recovery to the resolution
of a linear system using the Jacobian matrix. This approach is
described in Section~\ref{sec:jacobian}. 

Then, Section~\ref{sec:invariants} first describes a general approach based
on class invariants which can solve EA-partitioning
(Problem~\ref{problem:ea-testing}), and
lists all the CCZ- and EA-class invariants we are
aware of from the literature. Then, in the case of quadratic APN
functions, we introduce a new invariant based on
\emph{ortho-derivatives}. It is very discriminating, and can
efficiently prove that more than $20,000$ distinct quadratic APN
functions of \(8\)~variables fall into different CCZ-classes in only a
few minutes on a regular desktop computer.

Our optimized implementations of all these invariants are available
in the Sage package \texttt{sboxU}.\footnote{\texttt{sboxU} is
  available for download at
  \url{https://github.com/lpp-crypto/sboxU}.}

 \section{Preliminaries and Definitions}
\label{sec:preliminaries}

We consider vectorial Boolean functions, that is functions from the
vector space $\F_2^n$ to the vector space $\F_2^m$ for some non-zero
$m$ and $n$. When \(m=1\),
such functions are called Boolean functions. Any vectorial Boolean
function can be represented in different ways. For instance, it can be
seen as a sequence of \(m\)~Boolean
functions from \(\F_2^n\)
to \(\F_2\),
called its {\em coordinates}. Each of these \(m\)~coordinates
can be uniquely written as a multivariate polynomial in
\(\F_2[x_1, \ldots, x_n]/\left(x_1^2+x_1, \ldots, x_n^2 +
  x_n\right)\),
called its {\em Algebraic Normal Form (ANF)}.  A vectorial Boolean
function can also be represented by its {\em truth table} (aka look-up
table), which is the array composed of all its output values.  The
following notions will be extensively used throughout the paper.

\paragraph{Differential Properties.}
The resilience of a function to \emph{differential
  attacks}~\cite{C:BihSha90} is expressed by simple properties of its
derivatives.  
\begin{definition}[Derivative]
  Let $F$ be a function from $\F_2^n$ into $\F_2^m$.  The
  \emph{derivative of $F$ with respect to $a \in \F_2^n$} is the
  function from $\F_2^n$ into $\F_2^m$ defined by
 \[\Delta_a F : x \in \F_2^n \mapsto F(x + a) + F(x).\]
\end{definition}
In practice, the properties of the derivatives are analyzed through
the multiset introduced below which corresponds to the entries of its
{\em difference distribution table (DDT)}.

\begin{definition}[DDT, differential spectrum]
  Let $F$ be a function from $\F_2^n$ into $\F_2^m$.  The DDT of \(F\)
  is the \(2^n \times 2^m\)~array consisting of all elements
  \if\IEEEversion1
  \[\delta_F(a,b) := \# \{ x\in \F_2^n~: F(x+a)+F(x)=b\},\]
  for $(a,b) \in \F_2^n \times \F_2^m$. 
  \else
  \[\delta_F(a,b) := \# \{ x\in \F_2^n~: F(x+a)+F(x)=b\}, \mbox{ for } (a,b)
    \in \F_2^n \times \F_2^m.\]
  \fi
  The {\em differential uniformity} of \(F\)~\cite{EC:Nyberg93} is defined as
  \[\delta(F) := \max_{a \in \F_2^n \setminus\{0\}, b \in \F_2^m} \delta_F(a,b),\]
  and the {\em differential spectrum of \(F\)} is the multiset
  \[[\delta_F(a,b), a \in \F_2^n, a \neq 0, b \in \F_2^m].\]
\end{definition}
Obviously, \(\delta(F) \geq 2^{n-m}\) and the functions for
which equality holds are called {\em  Perfect Nonlinear} or {\em bent}. Such functions exist only when \(n\) is even and \(m \leq n/2\)~\cite{EC:Nyberg91}.
When \(m\geq n\), the differential uniformity satisfies \(\delta(F) \geq 2\), and the functions for
which equality holds are called {\em Almost Perfect Nonlinear (APN)}
functions.

\paragraph{Walsh Transform.}
Similarly, the resistance of a function to linear attacks~\cite{EC:Matsui93} is evaluated through its {\em Linear approximation table (LAT)}, whose entries are given by the Walsh transform.
\begin{definition}[Walsh transform]
Let $F$ be a function from $\F_2^n$ into $\F_2^m$. Its Walsh transform at \((a,b) \in \F_2^n \times \F_2^m\) is the integer defined by
\[\walsh{F}(a, b) := \sum_{x \in \ftwo^{n}}(-1)^{\scalarprod{a}{x} + \scalarprod{b}{F(x)}}\]
where $\scalarprod{y}{z}$ denotes the
canonical inner product on $\F_2^n$, {\em i.e.}
$\scalarprod{y}{z} := \sum_{i=1}^n y_i z_i$ where \(y=(y_1, \ldots, y_n)\) and \(z=(z_1, \ldots, z_n)\).
The {\em Walsh spectrum} of \(F\) is then the multiset
\[\left[\walsh{F}(a,b) : a \in \F_2^n, b \in \F_2^m\right].\]
\end{definition}

\paragraph{Degree.}

The degree of a vectorial Boolean function is defined as follows.
\begin{definition}[Degree]
  Let $F$ be a function from $\F_2^n$ into $\F_2^m$. The {\em degree}
  of \(F\) is the maximal degree of the algebraic normal forms of its
  coordinates. 
\end{definition}
The functions of degree less than or equal to~\(1\) are said to be {\em
  affine}. Quadratic functions, {\em i.e.} functions of degree~\(2\),
will also play an important role in this paper.

\paragraph{CCZ-Equivalence.}

While this paper focuses on EA-equivalence, there exists a more
general notion of equivalence between vectorial Boolean functions
defined by Carlet, Charpin and Zinoviev~\cite{add:CarChaZin98} and
called {\em CCZ-equivalence}. This notion will be widely used in
Section~\ref{sec:invariants}.
\begin{definition}
  Two functions $F: \ftwo^{n} \to \ftwo^{m}$ and
  $G: \ftwo^{n} \to \ftwo^{m}$ are CCZ-equivalent if there exists an
  affine permutation $\mathcal{A}$ of \(\ftwo^{n} \times \ftwo^{m}\)
  such that
  \[\mathcal{A}\left(\{(x, F(x)): x \in \F_2^n\}\right) = \{(x, G(x)):
    x \in \F_2^n\}.\]
\end{definition}
Two functions which are EA-equivalent are also
CCZ-equivalent  (see~\cite[Page~29]{add:Carlet21} for a proof), but the converse does not hold~\cite{add:BudCarPot06}.

In general, given a function $F : \ftwo^{n} \to \ftwo^{m}$ and an
affine permutation $\mathcal{A}$ of $\ftwo^{n} \times \ftwo^{m}$,
\begin{equation*}
  \mathcal{A}\big(\left\{ (x, F(x)): x \in \ftwo^{n} \right\}\big) 
\end{equation*}
does not correspond to the graph of a function, i.e. to a set of the
form \(\{(x, G(x)): x \in \F_2^n\}\) for some function \(G\) from
\(\F_2^n\) to \(\F_2^m\).  Indeed,  for $G$ to be
well-defined, it is necessary that the first \(n\)
  coordinates of \(x \mapsto \mathcal{A}(x, F (x))\) permute  \(\F_2^n\). As a consequence, only a few permutations $\mathcal{A}$
yield valid functions $G$. The following definition captures this
intuition.
\begin{definition}[Admissible affine permutations]
  Let $F$ be a function from $\ftwo^{n}$ to $\ftwo^{m}$. We say that
  the affine permutation $\mathcal{A}$ of $\ftwo^{n} \times \ftwo^{m}$
  is \emph{admissible for $F$} if 
  \[\mathcal{A}\left(\{(x, F(x)), x \in \F_2^n\}\right)\]
is a graph of a function.
\end{definition}
 \section{Recovering EA-equivalence for Quadratic Functions}
\label{sec:jacobian}

\subsection{The Jacobian Matrix}

\begin{notation}\label{nota:can_basis}
In the sequel, the canonical basis of $\F_2^n$ is denoted by
$(e_1, \dots, e_n)$.
\end{notation}

\begin{definition}\label{def:jacobian}
Let \(F\) be a function from \(\F_2^n\) into \(\F_2^m\).
The \emph{Jacobian} of $F$ at $x \in \F_2^n$ is the
matrix with polynomial entries defined by
\begin{equation}
  \if\IEEEversion1
  \!\!\!\!\jacobian{F}{x}
  :=
    \begin{pmatrix}
      \Delta_{e_1}F_{1}(x) & \!\! \cdots \!\! &
                                             \Delta_{e_n}F_{1}(x) \\
      \vdots & \!\! \ddots \!\! & \vdots \\
     \Delta_{e_1}F_{m}(x) & \!\! \cdots \!\! &
                                             \Delta_{e_n}F_{m}(x)  \\
    \end{pmatrix}\cdot
  \else
  \jacobian{F}{x}
  ~:=~
    \begin{pmatrix}
      \Delta_{e_1}F_{1}(x) & \cdots &
                                             \Delta_{e_n}F_{1}(x) \\
      \vdots & \ddots & \vdots \\
     \Delta_{e_1}F_{m}(x) & \cdots &
                                             \Delta_{e_n}F_{m}(x)  \\
    \end{pmatrix}\cdot
  \fi
  \end{equation}
  On the other hand, given an $m$--tuple of polynomials
  $P = (P_1, \dots, P_m) \in \F_2[X_1, \dots, X_n]^m$, we define
  the Jacobian matrix of $P$ as
  \if\IEEEversion1
  the matrix
  \[
    \JacPoly P(x) ~:=~
    \begin{pmatrix} 
      \partialDiff{P_{1}}{x_{1}} & \cdots &
      \partialDiff{P_{1}}{x_{n}} \\
      \vdots & & \vdots \\
      \partialDiff{P_{m}}{x_{1}} & \cdots &
      \partialDiff{P_{m}}{x_{n}} 
    \end{pmatrix}
  \]
  in $\F_2[X_1, \dots, X_n]^{m\times n}$,
  \else
    \[
    \JacPoly P(x) ~:=~
    \begin{pmatrix} 
      \partialDiff{P_{1}}{x_{1}} & \cdots &
      \partialDiff{P_{1}}{x_{n}} \\
      \vdots & & \vdots \\
      \partialDiff{P_{m}}{x_{1}} & \cdots &
      \partialDiff{P_{m}}{x_{n}} 
    \end{pmatrix}
    \in \F_2[X_1, \dots, X_n]^{m\times n},
  \]
  \fi
    where \(\partialDiff{P_{i}}{x_{j}}\) denotes the partial derivative of \(P_i\) with respect to~\(x_j\).
\end{definition}

\begin{remark}\label{rem:Jacobian_in_diff_calculus}
  The two notions are strongly related to each other. In particular,
  for a vectorial Boolean function $F : \F_2^n \rightarrow \F_2^n$,
  denote by $P^{\text{ANF}}_F \in \F_2[X_1, \dots, X_n]^m$ the
  polynomial representation of $F$ {\bf in algebraic normal form},
  then
  \[
    \jacobian{F}{x} = \JacPoly P^{\text{ANF}}_F(x).
  \]
  This equality can be easily checked on monomials and then extended by
  linearity.

  Note however the importance of being in algebraic normal form: for instance
  in one variable, if $P(x) = x^2$, then $\frac{\partial P}{\partial x} = 0$,
  while the algebraic normal form of~\(P\) is $x$ whose derivative is $1$.
\end{remark}

\subsection{The Jacobian Matrices of EA-equivalent Functions}

Assume that $G = A \circ F \circ B + C$, for some affine permutations
$A$ and $B$ and an affine function $C$ defined as
\if\IEEEversion1
  \begin{align*}
       \forall x \in \F_2^n,\quad
 A(x) &= A_0  x + a,\\
  B(x) &= B_0  x + b,\\
   \mbox{and}\quad C(x) &= C_0 x + c,
  \end{align*}
\else
\[
  \forall x \in \F_2^n,\quad
  A(x) = A_0  x + a,\quad
  B(x) = B_0  x + b,\quad \mbox{and}
  \quad
  C(x) = C_0 x + c,
\]
\fi
where $A_0, B_0$ are non-singular matrices in $\F_2^{n\times n}$ and
$\F_2^{m \times m}$ respectively, \(C_0\) is a matrix in $\F_2^{n\times m}$, and $a, c \in \F_2^m$ and
$b \in \F_2^n$.
Note that, after replacing $a$ by $a+c$, one can assume that $c=0$ and
hence that $C$ is linear. We always do so in the sequel.

Denote by $P_F$ some polynomial representation of
$F$ (which is not necessarily its ANF). Then consider the polynomial
representation $P_G$ of $G$ defined by
\[
  P_G := A \circ P_F \circ B + C.
\]
Then, considering Jacobian matrices, one can apply the well--known
 {\em chain rule formula for functions of several variables},
{\em i.e.} the formula for the Jacobian of compositions of functions,
namely:
\begin{equation}
  \label{eq:jacobian-composition}
  \JacPoly P_G(x) = A_0 \cdot \JacPoly P_F (B(x)) \cdot B_0 + C_0. 
\end{equation}
Indeed, the Jacobians of $A, B$ and $C$ are $A_0, B_0$ and
$C_0$, respectively. 
Unfortunately, this chain rule formula does not extend to Jacobians
of Boolean functions because, as already observed in
Remark~\ref{rem:Jacobian_in_diff_calculus}, the operations of
derivation and of reduction to the algebraic normal form do not
commute. To clarify this issue, let us consider an elementary example.

\begin{example}
  Let $P_F(x_1, x_2) = x_1x_2$ be a polynomial representing a
  function $F : \F_2^2 \rightarrow \F_2$. Consider the linear map
  $B : (x_1, x_2) \mapsto (x_1 + x_2, x_2)$.  Set
  $P_G := P_F \circ B = x_1x_2 + x_2^2$. This polynomial represents a function
  $G$ whose algebraic normal form is
  $P^{\text{ANF}}_G = x_1 x_2 + x_2$. Now, the Jacobian matrices of $F$ and $G$ are
  \if\IEEEversion1
  \begin{align*}
    \jacobian{F}{x} &= (x_2 \ \ x_1)\\
    \mbox{and}
    \quad \jacobian{G}{x} &= (x_2\ \ x_1 + 1)
  \end{align*}
  \else
  \[
    \jacobian{F}{x} = (x_2 \ \ x_1) \quad \mbox{and}
    \quad \jacobian{G}{x} = (x_2\ \ x_1 + 1)
  \]
  \fi
  and
  \if\IEEEversion1
  \begin{align*}
    \jacobian{F}{B(x)}
    \begin{pmatrix}
      1 & 1 \\ 0 & 1
    \end{pmatrix}
    & = (x_2\ \ x_1 + x_2) \begin{pmatrix}
      1 & 1 \\ 0 & 1
    \end{pmatrix}\\
    & = (x_2\ x_1),
  \end{align*}
  \else
  \[
    \jacobian{F}{B(x)}
    \begin{pmatrix}
      1 & 1 \\ 0 & 1
    \end{pmatrix}
    = (x_2\ \ x_1 + x_2) \begin{pmatrix}
      1 & 1 \\ 0 & 1
    \end{pmatrix}
    = (x_2\ x_1),
  \]
  \fi
  which differs from $\jacobian{G}{x}$.  On the other hand, if we
  consider polynomials instead of Boolean functions, we have
  \[
    \JacPoly P_F(B(x)) \begin{pmatrix}
      1 & 1 \\ 0 & 1
    \end{pmatrix} = (x_2\ x_1) = \JacPoly P_G (x).
  \]
 \end{example}

 While we observed that the chain rule formula in several variables is
 false in general for Boolean functions, we prove in the sequel that,
 in the context of quadratic functions, it is possible to get a very
 similar formula using the so-called {\em linear part} of the
 Jacobian. This will be the central idea of our algorithm.

\subsection{The Jacobian Matrix of a Quadratic Function}

From now on, we assume that \(F\) is quadratic, i.e. its algebraic
normal form has degree $2$. In this case, the entries of the
associated Jacobian matrix, in the sense of
Definition~\ref{def:jacobian}, are polynomials of degree $1$, and we
will focus on their homogeneous parts.

\begin{definition}
  Let $F : \F_2^n \rightarrow \F_2^m$ be a quadratic function.  We
  denote by \(\Jlin{F}{x}\) the linear part of $\jacobian{F}{x}$,
  i.e. the matrix whose entries are the homogeneous parts of degree $1$
  of the entries of \(\jacobian{F}{x}\):
  \[
    \forall x \in \F_2^n,\if\IEEEversion1
    \ \ 
    \else
    \quad
    \fi
    \jacobian{F}{x} = \Jlin{F}{x} +
    \jacobian{F}{0}.
  \]
Equivalently,
  \(\Jlin{F}{x} = {(J_{i,j}(x))}_{i,j}\) with
  \if\IEEEversion1
  for any $x\in \F_2^n$ and $1 \leq i \leq m, 1 \leq j \leq n$,
\begin{equation}\label{eq:J}
J_{i,j}(x) = \Delta_{e_j} F_i(x) + \Delta_{e_j} F_i(0),
\end{equation}
  \else
\begin{equation}\label{eq:J}
J_{i,j}(x) = \Delta_{e_j} F_i(x) + \Delta_{e_j} F_i(0), \; \forall x \in \F_2^n, 1 \leq i \leq m, 1 \leq j \leq n,
\end{equation}
\fi
where $(e_1, \dots, e_n)$ denotes the canonical basis of $\F_2^n$
(Notation~\ref{nota:can_basis}).
\end{definition}

It is worth noting that the linear part of the Jacobian of a
quadratic function corresponds to the coefficients of the quadratic
monomials in the algebraic normal forms of the coordinates of \(F\).
Equivalently, $\Jlin{F}{x}$ is the Jacobian of the degree-$2$
homogeneous part of the algebraic normal form of $F$ as explained by the
following statement.

\begin{proposition}Let $F : \F_2^n \rightarrow \F_2^m$ be a quadratic function.
Let
\[F_i(x_1, \ldots, x_n) = \sum_{k < \ell} Q_{k,\ell}^i x_k x_\ell +
  \sum_{k=1}^n c_k^i x_k + \varepsilon^i\] be the algebraic normal
  form of the \(i\)-th coordinate of \(F\), \(1 \leq i \leq m\), where
  all coefficients \(Q_{k,\ell}^i, c_k^i, \varepsilon^i\) lie in
  \(\F_2\) and \(Q_{k,\ell}^i=0\) when \(k \geq \ell\).  Then, the
  entries $J_{i,j}(x)$ of \(\Jlin{F}{x}\) are
  \if\IEEEversion1
  \[J_{i,j}(x) = \sum_{k=1}^n (Q_{k,j}^i + Q_{j,k}^i)x_k,\]
  for any $1 \leq i
  \leq m, 1 \leq j \leq n$.
\else
  \[J_{i,j}(x) = \sum_{k=1}^n (Q_{k,j}^i + Q_{j,k}^i)x_k, \; 1 \leq i
  \leq m, 1 \leq j \leq n.\]
\fi
\end{proposition}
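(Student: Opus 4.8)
The plan is to compute the discrete derivative $\Delta_{e_j} F_i$ directly from the algebraic normal form and then substitute the result into the defining identity $J_{i,j}(x) = \Delta_{e_j} F_i(x) + \Delta_{e_j} F_i(0)$ of Equation~\eqref{eq:J}. Since $\Delta_{e_j}$ is linear in $F_i$, I would handle each monomial of the ANF separately and track exactly which ones survive the substitution $x_j \mapsto x_j + 1$.

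First I would observe that flipping the $j$-th coordinate leaves every monomial not involving $x_j$ unchanged, so that in $\Delta_{e_j} F_i(x) = F_i(x+e_j)+F_i(x)$ the constant $\varepsilon^i$, the quadratic terms $Q^i_{k,\ell}x_kx_\ell$ with $j \notin \{k,\ell\}$, and the linear terms $c^i_k x_k$ with $k \neq j$ all cancel. The linear term $c^i_j x_j$ contributes the constant $c^i_j$, and a quadratic monomial survives only when one index equals $j$: the case $k=j<\ell$ contributes $Q^i_{j,\ell}x_\ell$, while the case $k<\ell=j$ contributes $Q^i_{k,j}x_k$. Collecting these terms gives
\[
  \Delta_{e_j} F_i(x) = c^i_j + \sum_{\ell > j} Q^i_{j,\ell}x_\ell + \sum_{k < j} Q^i_{k,j}x_k.
\]

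Next I would evaluate this expression at $x=0$, which leaves only the constant $c^i_j$; subtracting $\Delta_{e_j}F_i(0)=c^i_j$ therefore removes precisely the affine part and yields
\[
  J_{i,j}(x) = \sum_{\ell > j} Q^i_{j,\ell}x_\ell + \sum_{k < j} Q^i_{k,j}x_k.
\]
Finally I would merge the two sums into the claimed single sum over $k$ by invoking the normalisation $Q^i_{k,\ell}=0$ for $k \geq \ell$: under this convention $Q^i_{k,j}$ is nonzero only for $k<j$ and $Q^i_{j,k}$ only for $k>j$, while both vanish at the diagonal $k=j$, so after renaming $\ell$ as $k$ in the first sum one obtains $J_{i,j}(x) = \sum_{k=1}^n (Q^i_{k,j}+Q^i_{j,k})x_k$. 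The only point worth checking is the diagonal index $k=j$, where both contributions vanish and no $x_j$-term appears; apart from this bookkeeping the argument is an elementary monomial-by-monomial computation, so I expect no genuine obstacle.
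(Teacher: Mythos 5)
Your proposal is correct and follows essentially the same route as the paper: it computes $\Delta_{e_j}F_i(x) = \sum_{k<j} Q^i_{k,j}x_k + \sum_{k>j} Q^i_{j,k}x_k + c^i_j$ from the ANF, evaluates at $0$ to strip the constant, and merges the two sums via the convention $Q^i_{k,\ell}=0$ for $k\geq\ell$. The only difference is that you spell out the monomial-by-monomial cancellation that the paper states without justification.
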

\begin{proof}
  For any \(i\) and \(j\), \(1 \leq i \leq m\) and
  \(1 \leq j \leq n\), we have
\[\Delta_{e_j}F_i(x) = \sum_{k<j} Q_{k,j}^i x_k + \sum_{k>j} Q_{j,k}^i
  x_k + c_j^i.\]
We then deduce from~\eqref{eq:J} that
\if\IEEEversion1
\begin{align*}
  J_{i,j}(x) & = \Delta_{e_j} F_i(x) + \Delta_{e_j} F_i(0)\\
  & = \sum_{k=1}^n (Q_{k,j}^i + Q_{j,k}^i) x_k.
\end{align*}
\else
\[J_{i,j}(x) = \Delta_{e_j} F_i(x) + \Delta_{e_j} F_i(0) =
  \sum_{k=1}^n (Q_{k,j}^i + Q_{j,k}^i) x_k.\]
\fi
\end{proof}

The linear part of the Jacobian of a quadratic function is a useful
mathematical object since the values of all derivatives of the
function can be derived from this matrix, as shown in the following
proposition.

\begin{proposition}\label{prop:derivative}
  Let $F : \F_2^n \rightarrow \F_2^m$ be a quadratic function and let
  \[
    \Jlin{F}{x} = (J_{i,j}(x))_{1 \leq i \leq m, 1 \leq j
         \leq n}
  \]
  be the linear part of its Jacobian. Then, for any
  \(a = (a_1, \ldots, a_n) \in \F_2^n\), any \(i \in \{1, \dots, m\}\)
  and any $x \in \F_2^n$, we have:
  \[\Delta_aF_i(x) + \Delta_aF_i(0) = \sum_{j=1}^n a_j J_{i,j}(x).\]
  Hence, for any $a\in \F_2^n$ and any $x \in \F_2^n$, we have:
  \if\IEEEversion1
  \begin{align*}
    \Delta_a F(x) + \Delta_a F(0) &=
                                    \Delta_x \Delta_a F(0)\\
    &= \Jlin{F}{x} \cdot a\\ &= \Jlin{F}{a} \cdot x.
  \end{align*}
  \else
  \[
    \Delta_a F(x) + \Delta_a F(0) =  \Delta_x \Delta_a F(0) = \Jlin{F}{x} \cdot a = \Jlin{F}{a} \cdot x.
  \]
  \fi
\end{proposition}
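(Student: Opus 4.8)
The plan is to handle the chain of equalities from left to right: first dispose of the definitional identity, then prove the statement coordinatewise by exploiting the bilinearity of the second-order derivative of a quadratic function, and finally assemble the matrix form and read off the symmetry.

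First I would observe that the leftmost equality is purely definitional and uses nothing about quadraticity: unfolding the derivative gives $\Delta_x \Delta_a F(0) = \Delta_a F(0+x) + \Delta_a F(0) = \Delta_a F(x) + \Delta_a F(0)$, and this holds coordinate by coordinate for any function $F$. The content of the proposition therefore lies in the two matrix identities.

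Next I would prove the scalar version $\Delta_a F_i(x) + \Delta_a F_i(0) = \sum_{j=1}^n a_j J_{i,j}(x)$. The key observation is that, since $F_i$ is quadratic, a direct expansion of its algebraic normal form shows that no term of degree $\geq 3$ occurs, and the parts independent of $x$ cancel against $\Delta_a F_i(0)$ in characteristic two, leaving
\[
\Delta_a F_i(x) + \Delta_a F_i(0) = \sum_{k<\ell} Q^i_{k,\ell}\,(a_k x_\ell + a_\ell x_k),
\]
which is a symmetric bilinear form in $(a,x)$. Using linearity in $a$, I would then write $a = \sum_j a_j e_j$ and obtain $\Delta_a F_i(x) + \Delta_a F_i(0) = \sum_j a_j\big(\Delta_{e_j} F_i(x) + \Delta_{e_j} F_i(0)\big) = \sum_j a_j J_{i,j}(x)$, where the final step is exactly the defining equation~\eqref{eq:J} for the entries of $\Jlin{F}{x}$. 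Alternatively, the same identity follows by substituting the closed form $J_{i,j}(x) = \sum_k (Q^i_{k,j} + Q^i_{j,k}) x_k$ from the preceding proposition and matching the two bilinear expressions after reindexing the sums. Collecting these $m$ scalar identities, the $i$-th coordinate of $\Jlin{F}{x} \cdot a$ is precisely $\sum_j J_{i,j}(x)\, a_j$, so $\Delta_a F(x) + \Delta_a F(0) = \Jlin{F}{x} \cdot a$. The last equality $\Jlin{F}{x} \cdot a = \Jlin{F}{a} \cdot x$ is then immediate from the symmetry in $(a,x)$ of the bilinear form displayed above: swapping the roles of $a$ and $x$ leaves each coordinate unchanged.

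The main obstacle is the bilinearity step, namely verifying that over $\F_2$ the second-order derivative of a quadratic function reduces to a genuinely bilinear and symmetric form, with the diagonal and $x$-independent contributions dropping out. Once this is in hand, the linearity in $a$ and the symmetry in $(a,x)$ that drive the two matrix identities follow with no further work.
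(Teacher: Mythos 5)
Your proof is correct, but it justifies the key bilinearity step differently from the paper. The paper never expands the ANF: it first notes the four-term identity $\Delta_a F(x) + \Delta_a F(0) = F(x+a)+F(x)+F(a)+F(0) = \Delta_x F(a) + \Delta_x F(0)$, which swaps the roles of $a$ and $x$, and then decomposes $a = \sum_{j \in S} e_j$ using the fact that $y \mapsto \Delta_x F_i(y) + \Delta_x F_i(0)$ is additive --- the only place quadraticity enters, via the affineness of the derivative $\Delta_x F_i$. You instead compute the second-order difference explicitly from the ANF, obtaining the closed form $\sum_{k<\ell} Q^i_{k,\ell}(a_k x_\ell + a_\ell x_k)$, and read off linearity in $a$ and the symmetry in $(a,x)$ directly from that expression. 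The two routes are logically equivalent (both reduce to the statement that the second-order difference of a quadratic function is a symmetric bilinear form), but yours makes the symmetry $\Jlin{F}{x}\cdot a = \Jlin{F}{a}\cdot x$ completely transparent and ties the proposition back to the explicit formula $J_{i,j}(x) = \sum_k (Q^i_{k,j}+Q^i_{j,k})x_k$ from the preceding proposition, whereas the paper's argument is coordinate-free and would adapt more readily to settings where one prefers not to manipulate the ANF. One small point worth making explicit in your write-up: the cancellation leaving only the cross terms $a_k x_\ell + a_\ell x_k$ (i.e.\ the disappearance of the $a_k a_\ell$ and $c^i_k a_k$ contributions against $\Delta_a F_i(0)$) is exactly the verification you flag as the main obstacle, and it does go through by direct expansion in characteristic two, so the proof is complete.
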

\begin{proof}
Let us first observe that, for any \(a,x \in \F_2^n\), 
\if\IEEEversion1
\begin{align*}
  \Delta_a F(x) &+ \Delta_a F(0)\\ &= F(x+a)+ F(x)+F(a)+F(0)\\
  &= \Delta_x \Delta_a F(0)
\end{align*}
\else
\[\Delta_a F(x) + \Delta_a F(0) = F(x+a)+ F(x)+F(a)+F(0) = \Delta_x \Delta_a F(0)\]
\fi
and hence
\[\Delta_a F(x) + \Delta_a F(0)=  \Delta_x F(a) + \Delta_x F(0)\;.\]
Let \(S = \{i : a_i=1, 1 \leq i \leq n\}\). Then,
\begin{eqnarray*}
\sum_{j=1}^n a_j J_{i,j}(x) & = & \sum_{j \in S} \left(\Delta_{e_j} F_i(x) + \Delta_{e_j} F_i(0)\right)\\
& = & \sum_{j \in S} \left(\Delta_{x} F_i(e_j) + \Delta_{x} F_i(0)\right)\\
& = &\Delta_{x} F_i\left(\sum_{j \in S}e_j\right) + \Delta_{x} F_i(0) \\
                            & = & \Delta_{x} F_i(a) + \Delta_{x} F_i(0) \\
  & = & \Delta_{a} F_i(x) + \Delta_{a} F_i(x).
\end{eqnarray*}
\end{proof}

The following corollary is another direct consequence of Proposition~\ref{prop:derivative}.
  \begin{corollary}\label{cor:a_in_kerJ(a)}
    Let $F : \F_2^n \rightarrow \F_2^m$ be  a quadratic function.
    Then, for any $a \in \F_2^n$, we have
    \[
      \Jlin{F}{a} \cdot a = 0.
    \]
  \end{corollary}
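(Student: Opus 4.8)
The plan is to obtain the identity directly from Proposition~\ref{prop:derivative} by specializing its two variables. That proposition gives, for all $a, x \in \F_2^n$, the relation
\[
\Delta_a F(x) + \Delta_a F(0) = \Jlin{F}{a} \cdot x.
\]
The natural move is to set $x = a$, which turns the right-hand side into precisely the quantity we wish to control, namely $\Jlin{F}{a} \cdot a$. So the whole task reduces to showing that the left-hand side vanishes at $x = a$.

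For that, I would simply expand the two derivatives. First, $\Delta_a F(a) = F(a + a) + F(a)$, and since we work over $\F_2^n$ we have $a + a = 0$, so $\Delta_a F(a) = F(0) + F(a)$. Likewise $\Delta_a F(0) = F(a) + F(0)$. Adding these two expressions, every term cancels over $\F_2$, hence $\Delta_a F(a) + \Delta_a F(0) = 0$. Combining this with the specialized identity above yields $\Jlin{F}{a} \cdot a = 0$, as claimed.

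I do not anticipate any genuine obstacle here, as the argument is a one-line specialization followed by a cancellation in characteristic~$2$. The only points requiring a moment of care are, first, that either of the two equal forms of Proposition~\ref{prop:derivative} may be invoked (both $\Jlin{F}{x}\cdot a$ and $\Jlin{F}{a}\cdot x$ coincide once $x=a$ is substituted), and second, that the computation rests essentially on the symmetry $\Delta_a F(a) = \Delta_a F(0)$ over $\F_2$. The quadratic hypothesis on $F$ is used only indirectly, through the validity of Proposition~\ref{prop:derivative} itself.
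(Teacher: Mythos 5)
Your proof is correct and matches the paper's intent: the paper states this corollary without proof as a ``direct consequence'' of Proposition~\ref{prop:derivative}, and the intended argument is exactly your specialization $x=a$ combined with the cancellation $\Delta_a F(a) = F(0)+F(a) = \Delta_a F(0)$ in characteristic~$2$.
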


  Using Proposition~\ref{prop:derivative}, we can exhibit the relation
  between the linear parts of the Jacobians of two EA-equivalent
  quadratic functions. This relation is very close to the chain rule
  formula in differential calculus.  In addition it will be of
  particular interest for recovering the triple of functions
  \((A,B, C)\) such that \(G = A \circ F \circ B + C\) because it does
  not involve~\(C\).

\begin{proposition}\label{prop:chain_rule_for_jacqueline}
  Let \(F\) and \(G\) be two EA-equivalent quadratic functions from
  \(\F_2^n\) into \(\F_2^m\) with \(G = A \circ F \circ B + C\) for
  some affine permutations $A$ and $B$, and some affine function
  \(C\).  
  Then, \if\IEEEversion1 for any $x \in \F_2^n$, \fi
\begin{equation}\label{eq:main}
\if\IEEEversion0 \forall x \in \F_2^n, \quad \fi
      \Jlin{G}{x} =
    A_0\cdot \Jlin{F}{B(x)} \cdot B_0,
\end{equation}
where \(A_0\) and \(B_0\) denote the matrices corresponding to the
linear parts of \(A\) and \(B\).  
\end{proposition}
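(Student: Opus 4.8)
The plan is to characterize $\Jlin{G}{x}$ through its action on arbitrary vectors and to reduce everything to second-order derivatives, where Proposition~\ref{prop:derivative} can be applied twice. Indeed, that proposition already identifies $\Jlin{G}{x}\cdot v = \Delta_x\Delta_v G(0) = \Delta_v G(x) + \Delta_v G(0)$ for every $v \in \F_2^n$; since two matrices that agree against all $v$ coincide, it suffices to compute the column-generating quantity $\Delta_v G(x) + \Delta_v G(0)$ for all $v$ and recognize the right-hand side of~\eqref{eq:main}. Throughout I write $A(z) = A_0 z + a$, $B(y) = B_0 y + b$ and, using the normalization $c=0$, $C(y) = C_0 y$, so that $G(y) = A_0\,F(B_0 y + b) + a + C_0 y$.

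First I would substitute this expression for $G$ into $\Delta_v G(x) + \Delta_v G(0)$ and expand. The decisive feature is that the second-order difference annihilates every affine contribution: the constant $a$ cancels immediately, and the linear term $C_0 y$ contributes the same $C_0 v$ to both $\Delta_v G(x)$ and $\Delta_v G(0)$, hence cancels as well. What survives is exactly a second-order derivative of $F$,
\[
  \Delta_v G(x) + \Delta_v G(0)
  = A_0\big(F(B_0 x + B_0 v + b) + F(B_0 x + b) + F(B_0 v + b) + F(b)\big)
  = A_0 \cdot \Delta_{B_0 x}\Delta_{B_0 v} F(b).
\]
This cancellation is precisely why the clean chain-rule identity holds for the linear part $\Jlin{}$ and not for the full Jacobian $\jacobian{\cdot}{\cdot}$, and it is the step I expect to demand the most careful bookkeeping, since one must track how each constant and linear piece of $A$, $B$, $C$ enters and verify it disappears.

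Next I would exploit quadraticity of $F$: the second-order derivative $\Delta_u\Delta_w F$ of a quadratic function is a constant function of its base point (it is the symmetric bilinear form attached to the degree-$2$ part), so $\Delta_{B_0 x}\Delta_{B_0 v} F(b) = \Delta_{B_0 x}\Delta_{B_0 v} F(0)$, and the translation $b$ drops out entirely. Applying Proposition~\ref{prop:derivative} to $F$ with the pair $(B_0 x, B_0 v)$ in place of $(x,a)$ then gives $\Delta_{B_0 x}\Delta_{B_0 v} F(0) = \Jlin{F}{B_0 x}\cdot B_0 v$. Chaining the two displays yields $\Jlin{G}{x}\cdot v = A_0\cdot\Jlin{F}{B_0 x}\cdot B_0\cdot v$ for every $v \in \F_2^n$, whence the matrix identity $\Jlin{G}{x} = A_0\cdot\Jlin{F}{B_0 x}\cdot B_0$. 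Finally, since the entries of $\Jlin{F}{\cdot}$ are homogeneous linear forms by the preceding proposition, $\Jlin{F}{\cdot}$ is a linear map of its argument; the computation above shows that the translation part $b$ of $B$ leaves no trace on $\Jlin{G}{x}$, so the argument of $\Jlin{F}{\cdot}$ in~\eqref{eq:main} is governed only by the linear part $B_0 x$ of $B(x)$, as expected for a quantity that cannot depend on the constants of $A$, $B$, or $C$.
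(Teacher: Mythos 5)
Your proof is correct and follows essentially the same route as the paper's: identify the columns of $\Jlin{G}{x}$ with the second-order differences $\Delta_v G(x) + \Delta_v G(0)$, observe that the affine parts of $A$ and $C$ cancel there, and conclude with Proposition~\ref{prop:derivative}. The one substantive difference is that your more careful bookkeeping of the translation $b$ of $B$ lands on $\Jlin{G}{x} = A_0 \cdot \Jlin{F}{B_0 x}\cdot B_0$ rather than the $\Jlin{F}{B(x)}$ appearing in the statement; since $\Jlin{F}{\cdot}$ is linear in its argument, the two differ by $A_0\,\Jlin{F}{b}\,B_0$, which need not vanish when $b \neq 0$ (take $F(x_1,x_2)=(x_1x_2,0)$, $A=\mathrm{id}$, $C=0$, $B(y)=y+(1,0)$). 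Your form is in fact the correct one, and it is the form the paper actually uses in~\eqref{eq:chain_rule_Jacqueline}; the paper's own proof arrives at $\Jlin{F}{B(x)}$ by tacitly replacing $\Delta_{B_0 e_j}F(B(0))$ with $\Delta_{B_0 e_j}F(0)$ when invoking Proposition~\ref{prop:derivative}, a step that is only harmless once $b=0$ has been normalized. So there is no gap on your side --- if anything, you have repaired a small imprecision.
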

\begin{proof}
  Let \(A_0, B_0\) and \(C_0\) denote the matrices corresponding to
  the linear parts of \(A\),\(B\) and \(C\).  Then, for any \(a\) and \(x\) in \(\F_2^n\),
  \if\IEEEversion1
\begin{align*}
  \Delta_a &G(x)  + \Delta_a G(0)\\
                & =  \Delta_a (A\circ F \circ B) (x) + \Delta_a (A\circ F \circ B) (0)\\
& =  A_0 \left( \Delta_a (F \circ B) (x) + \Delta_a (F \circ B) (0)\right)\\
& =  A_0 \left( \Delta_{B_0(a)} F (B (x)) + \Delta_{B_0(a)} F (B (0))\right).
\end{align*}
  \else
\begin{eqnarray*}
\Delta_a G(x) + \Delta_a G(0) & = & \Delta_a (A\circ F \circ B) (x) + \Delta_a (A\circ F \circ B) (0)\\
& = & A_0 \left( \Delta_a (F \circ B) (x) + \Delta_a (F \circ B) (0)\right)\\
& = & A_0 \left( \Delta_{B_0(a)} F (B (x)) + \Delta_{B_0(a)} F (B (0))\right).
\end{eqnarray*}
\fi
It follows that Column~\(j\) of \(\Jlin{G}{x}\) is equal to
\if\IEEEversion1
\begin{align*}
  A_0 \big[ \Delta_{B_0(e_j)} &F (B (x)) + \Delta_{B_0(e_j)} F (B (0))\big]\\
  &= A_0 \cdot \Jlin{F}{B(x)} \cdot B_0(e_j),
  \end{align*}
\else
\[A_0 \left[ \Delta_{B_0(e_j)} F (B (x)) + \Delta_{B_0(e_j)} F (B (0))\right] = A_0 \cdot \Jlin{F}{B(x)} \cdot B_0(e_j),\]
\fi
where the equality is derived from Proposition~\ref{prop:derivative}. Therefore, Column~\(j\) of \(\Jlin{G}{x}\) is equal to Column~\(j\) of \(A_0 \cdot \Jlin{F}{B(x)} \cdot B_0\).
\end{proof}

It is worth noting that, when \(m=n\),
the linear part of the Jacobian, for quadratic functions defined by a
homogeneous polynomial\footnote{Obviously, any quadratic function is EA-equivalent to a function corresponding to a univariate homogeneous polynomial of degree~\(2\).} in \(\F_{2^n}[X]\),
is related to a special class of symmetric matrices over \(\F_{2^n}\)
called QAM and introduced in~\cite{add:YuWanLi14}. This notion of QAM
arises by exhibiting a one-to-one correspondence between the functions
over \(\F_2^n\)
associated to a quadratic homogeneous univariate polynomial (i.e.,
with quadratic terms only) and symmetric matrices over \(\F_{2^n}\)
with diagonal elements equal to zero~\cite{add:YuWanLi14}. This
correspondence is detailed in the following definition.

\begin{definition}[{\cite{add:YuWanLi14}}]\label{def:qam}
  Let \(\mathcal{B}=(\alpha_1, \alpha_2, \ldots, \alpha_n)\) be a
  basis of \(\F_{2^n}\) over \(\F_2\) and
  \(\varphi\) be the function from \(\F_2^n\) to \(\F_{2^n}\) defined by \(\varphi(x_1, \ldots, x_n) =\sum_{i=1}^n x_i
  \alpha_i\).  Let 
  \(P(X) = \sum_{k < \ell} q_{k,\ell} X^{2^{k-1} + 2^{\ell-1}}\) be
  a quadratic homogeneous polynomial in \(\F_{2^n}[X]\), and \(F: \F_2^n \rightarrow \F_2^n\) be the
  quadratic vectorial Boolean function corresponding to~\(P\), i.e., defined by
  \[F(x) = \varphi^{-1} \circ P \circ \varphi(x), \; \mbox{ for all } x \in
    \F_2^n.\] Then the {\em matrix associated with \(F\) with
    respect to \(\mathcal{B}\)} is
  \[H=M^T C_P M,\] where \(C_P\) is the \(n \times n\) symmetric matrix
  over \(\F_{2^n}\) defined by
  \((C_P)_{k,\ell}= (C_P)_{\ell,k} = q_{k,\ell}\) for all
  \(1 \leq k < \ell \leq n\), \((C_P)_{k,k}=0\), and \(M\) is the
  Moore matrix associated to \(\mathcal{B}\), i.e.,
  \(M_{i,j} = \alpha_j^{2^{i-1}}\).
  \end{definition}

  The advantage of this construction is that, when a function corresponds to a quadratic
  homogeneous univariate polynomial, being APN can be characterized by some
  algebraic properties of the associated matrix~\cite{add:YuWanLi14}.
  We now show that the matrix \(H=M^T C_P M\) is related to the linear
  part of the Jacobian of \(F\).
\begin{proposition}\label{prop:QAM}
  Let \(F\) be a function from \(\F_2^n\) to \(\F_2^n\)
  corresponding to a quadratic homogeneous univariate polynomial for
  some basis \(\mathcal{B}=(\alpha_1, \alpha_2, \ldots, \alpha_n)\),
  and let \(H\)
  be the matrix associated with \(F\)
  with respect to \(\mathcal{B}\)
  in the sense of Definition~\ref{def:qam}. Then, for all
  \(x\in \F_2^n\),
  \[(\alpha_1, \ldots, \alpha_n) \cdot \Jlin{F}{x} = (x_1, \ldots, x_n) \cdot H.\]
\end{proposition}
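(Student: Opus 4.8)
The plan is to transfer the whole computation into $\F_{2^n}$ through the isomorphism $\varphi$, using the defining relation $\varphi \circ F = P \circ \varphi$ (immediate from $F = \varphi^{-1}\circ P \circ \varphi$), and then to recognize $H$ as the matrix of the symmetric bilinear form obtained by polarizing $P$. First I would record the elementary but crucial observation that, for a column vector $v = (v_1,\dots,v_n)^T \in \F_2^n$, the scalar $(\alpha_1,\dots,\alpha_n)\cdot v$ is exactly $\varphi(v)$. Consequently the $j$-th entry of the row vector $(\alpha_1,\dots,\alpha_n)\cdot \Jlin{F}{x}$ is $\varphi$ applied to the $j$-th column of $\Jlin{F}{x}$. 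By Proposition~\ref{prop:derivative} taken with $a = e_j$, this $j$-th column is $\Delta_{e_j}F(x) + \Delta_{e_j}F(0)$, so it suffices to evaluate $\varphi\!\left(\Delta_a F(x) + \Delta_a F(0)\right)$ for an arbitrary $a \in \F_2^n$ and then specialize to $a = e_j$.

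Next I would push the derivative through $\varphi$. Writing $X := \varphi(x)$ and $A := \varphi(a)$, linearity of $\varphi$ and the relation $\varphi \circ F = P \circ \varphi$ give $\varphi(\Delta_a F(x)) = P(X+A) + P(X)$ and $\varphi(\Delta_a F(0)) = P(A)$, the latter because $P(0) = 0$ (all monomials of $P$ have positive degree). Hence $\varphi\!\left(\Delta_a F(x) + \Delta_a F(0)\right) = P(X+A) + P(X) + P(A)$, and the problem reduces to a purely univariate polynomial identity in $\F_{2^n}[X,A]$.

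The crux is expanding $P(X+A)+P(X)+P(A)$ via the characteristic-$2$ Frobenius identity $(X+A)^{2^{k-1}} = X^{2^{k-1}} + A^{2^{k-1}}$. Each term $q_{k,\ell}(X+A)^{2^{k-1}+2^{\ell-1}}$ splits into four pieces; the pure-$X$ pieces cancel $P(X)$ and the pure-$A$ pieces cancel $P(A)$, leaving only the mixed family $\sum_{k<\ell} q_{k,\ell}\!\left(X^{2^{k-1}}A^{2^{\ell-1}} + A^{2^{k-1}}X^{2^{\ell-1}}\right)$. The careful bookkeeping here is to recognize this sum as $\sum_{k,\ell}(C_P)_{k,\ell}\, X^{2^{k-1}}A^{2^{\ell-1}}$: the symmetrization $(C_P)_{k,\ell} = (C_P)_{\ell,k} = q_{k,\ell}$ is precisely what accounts for the two off-diagonal cross-terms, while the vanishing diagonal $(C_P)_{k,k}=0$ matches the fact that repeated Frobenius exponents never occur. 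I expect this index-matching to be the only real obstacle; everything around it is either linearity or a direct binomial expansion.

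Finally, setting $\tilde X := Mx$ and $\tilde A := Ma$, the definition $M_{i,j} = \alpha_j^{2^{i-1}}$ together with $x_j \in \F_2$ yields $\tilde X_i = X^{2^{i-1}}$ and $\tilde A_i = A^{2^{i-1}}$, so the mixed sum becomes $\tilde X^{T} C_P \tilde A = x^{T} M^{T} C_P M\, a = x^{T} H a$. Specializing $a = e_j$ identifies the $j$-th entry of $(\alpha_1,\dots,\alpha_n)\cdot \Jlin{F}{x}$ with the $j$-th entry of $(x_1,\dots,x_n)\cdot H$, which is exactly the claimed identity.
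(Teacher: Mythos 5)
Your proof is correct, but it is organized differently from the paper's. The paper treats the key identity $P(X+\alpha_j)+P(X)+P(\alpha_j)=\varphi(X)\cdot H_j$ as a black box, citing Theorem~1 of~\cite{add:YuWanLi14}, and then merely translates it through $\varphi$ into the statement about the columns of $\Jlin{F}{x}$. You instead re-derive that polarization identity from scratch: the Frobenius expansion of $q_{k,\ell}(X+A)^{2^{k-1}+2^{\ell-1}}$, the cancellation of the pure-$X$ and pure-$A$ terms, the observation that the symmetrization $(C_P)_{k,\ell}=(C_P)_{\ell,k}=q_{k,\ell}$ with zero diagonal exactly collects the two cross-terms, and the identification $\tilde{X}_i=(Mx)_i=X^{2^{i-1}}$ using $x_j\in\F_2$. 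All of these steps check out (including the use of $P(0)=0$, which the paper leaves implicit when it writes the column of $\Jlin{F}{x}$ as $F(x+e_j)+F(x)+F(e_j)$ without the $F(0)$ term). What your version buys is self-containedness --- a reader need not consult~\cite{add:YuWanLi14} --- and it makes transparent why the QAM construction symmetrizes $q_{k,\ell}$ into $C_P$ and conjugates by the Moore matrix; what the paper's version buys is brevity and an explicit pointer to where the matrix $H$ originally comes from. One cosmetic remark: identifying the $j$-th column of $\Jlin{F}{x}$ with $\Delta_{e_j}F(x)+\Delta_{e_j}F(0)$ is just Equation~\eqref{eq:J}, the definition of $\Jlin{F}{x}$, so invoking Proposition~\ref{prop:derivative} for it is unnecessary (though harmless).
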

\begin{proof}
  Let \(P\)
  be the homogeneous univariate polynomial associated to \(F\)
  by \(P = \varphi \circ F \circ \varphi^{-1}\),
  where \(\varphi(x_1, \ldots, x_n) =\sum_{i=1}^n x_i \alpha_i\).
  Theorem~1 in~\cite{add:YuWanLi14} shows that, for any \(j\),
  \(1 \leq j \leq n\), and any \(X \in \F_{2^n}\),
\[P(X+\alpha_j) + P(X) + P(\alpha_j) = \varphi(X) \cdot H_j,\]
 where \(H_j\)
  denotes Column~\(j\) of~\(H\).
It follows that, for any \(x \in \F_2^n\),
\[(P\circ \varphi)(x+e_j) + P\circ \varphi (x) + P\circ \varphi (e_j) = x  H_j\]
or equivalently
\[\varphi\left(F(x+e_j)+F(x)+F(e_j)\right) = x  H_j,\]
  which means that
  \[
    \sum_{i=1}^n \alpha_i [\Jlin{F}{x}]_{i,j} =
    x H_j.
  \]
  \end{proof}
In other words, \(\Jlin{F}{x}\)
  is three-dimensional in nature: when \(m=n\),
  it is an \(n \times n\)
  matrix whose entries are \(n\)-variable
  linear functions, represented by \(n\)~binary
  coefficients. The previous proposition shows that the matrix \(H\)
  defined in~\cite{add:YuWanLi14} is another way to represent the same
  \(2^{3n}\)~binary
  values, with a 2-dimensional structure by using $\F_{2^n}$ as a
  coefficient field.  Characterizing the fact that \(F\)
  is APN from the properties of~\(H\)
  can then be reformulated in terms of Jacobians as we will prove in
  Section~\ref{section:diff}. However, our result applies also to functions from \(\F_2^n\)
  to \(\F_2^m\) with \(m \neq n\).
  More importantly, the representation in terms of Jacobians is more
  convenient for analyzing EA-equivalence.

\subsection{Solving the EA-equivalence Problem for Quadratic
  Functions}
\label{sec:jacobian-solving}

Our algorithm takes as input two quadratic functions
$F,G : \F_2^n \rightarrow \F_2^n$ and returns, if it exists, a triple
$(A, B, C)$ of affine functions such that $A, B$ are permutations and
$G = A \circ F \circ B + C$. Denote
\if\IEEEversion1
\begin{align*}
  A(x) & = A_0 x + a \\
  B(x) & = B_0x + b \\
  \text{and} \quad C(x) & = C_0x + c,
\end{align*}
\else
\[
  A(x) = A_0 x + a \quad B(x) = B_0x + b \quad \textrm{and} \quad C(x)
  = C_0x + c,
\]
\fi
where $A_0 \in \F_2^{m \times m}$, $B_0\in \F_2^{n\times n}$,
$C_0 \in \F_2^{m\times n}$, $a \in \F_2^m$, $b \in \F_2^n$
and $c\in \F_2^m$.

As already noted, replacing $a$ by $a+c$ one can assume that
$c = 0$ and hence that $C$ is linear. In addition, since the functions
are quadratic, one may also assume that $b = 0$. Indeed, it suffices to
observe that
\[
  \if\IEEEversion1
  \forall x\in \F_2^n, \ \
  \fi
  F(B_0x + b) = F(B_0 x) + \Delta_b F (B(x))
  \if\IEEEversion0
  , \mbox{ for all } x \in \F_2^n
  \fi
\]
and, since $F$ is quadratic, then $\Delta_b F (B(x))$ is affine and
its linear part equals \(\Delta_b F (B_0x) + \Delta_b F (0)\).
Therefore, replacing $C$ by
$x \mapsto C(x) + \Delta_b F (B_0 x) + \Delta_b F(0)$ and $a$ by
$a + \Delta_b F(0)$, one can assume that both $B$ and $C$ are linear.

In summary, our objective is to find, if it exists, a $4$--tuple
$(A_0, B_0, C_0, a)$ such that $A_0, B_0$ are non-singular, and
\[
  \if\IEEEversion1
  \forall x \in \F_2^n,\ \
  \fi
  G(x) = A_0 \cdot F(B_0 x) + C_0 x + a
  \if\IEEEversion0
  , \mbox{ for all } 
  x\in \F_2^n.
  \fi
\]
Our algorithm is built around Proposition~\ref{prop:chain_rule_for_jacqueline} which asserts
that,\if\IEEEversion1
  for any $x \in \F_2^n$,
  \fi
\begin{equation}\label{eq:chain_rule_Jacqueline}
  \Jlin{G}{x} = A_0 \cdot \Jlin{F}{B_0
    x}\cdot B_0
  \if\IEEEversion0
  , \mbox{ for all } 
  x\in \F_2^n.
  \fi
\end{equation}
This permits first to search for the pair $(A_0, B_0)$, and then, once
it is computed, to deduce the remainder of the $4$--tuple.
The search for this pair $(A_0, B_0)$ relies on two main ideas:

\begin{enumerate}[(i)]
\item\label{item:linear_system}
  If a pair $(v, w)$ is known to satisfy $B_0 v = w$, then the
  pair $(A_0^{-1}, B_0)$ is a solution of the affine system with
  unknowns $(X, Y) \in \F_2^{m\times m} \times \F_2^{n\times n}$:
  \begin{equation}\label{eq:main_linear_system}
    \if\IEEEversion1
    \!\!\!\!\!\!\!\!\!\!\!\!\!\!\!\!
    \fi
    \left\{
      \begin{array}{ccc}
        X \cdot \Jlin{G}{v} - \Jlin{F}{w} \cdot Y \if\IEEEversion1
        \!\!\!\!\fi &=& \if\IEEEversion1
        \!\!\fi 0\\
        Y \cdot v &=& \if\IEEEversion1
        \!\!\fi w.
      \end{array}
    \right.
  \end{equation}
\item\label{item:where_to_guess} Since $A_0, B_0$ are non-singular,
  then, according to \eqref{eq:chain_rule_Jacqueline}, for any
  $x \in \F_2^n$, the matrices $\Jlin{G}{x}$ and $\Jlin{F}{B_0 x}$
  have the same rank.
\end{enumerate}

\subsubsection{Sketch of the Algorithm}
The search for the pair $(A_0, B_0)$ will be done by trying to guess
pairs $(v_1, w_1), \dots, (v_s, w_s)$ of elements of
$\F_2^n \times \F_2^n$ such that for any $i$, $B_0 v_i = w_i$.  For
each such guess, we solve a concatenation of systems of the form
\eqref{eq:main_linear_system} and check whether it leads to a
solution. If not, we try with another
guess. 

Therefore, the complexity of the algorithm is
directly related to the number of guesses we will have to make,
which should be as small as possible. This motivates the following discussion.
\begin{itemize}
\item Using~\eqref{item:where_to_guess}, any guess $(v_i, w_i)$ should
  be chosen so that $\rank \Jlin{G}{v_i} = \rank \Jlin{F}{w_i}$.
  Therefore, the search is much easier if we search for
  elements $v \in \F_2^n$ (resp. $w \in \F_2^n$) such that
  $\rank \Jlin{G}{v}$ (resp. $\rank \Jlin{F}{w}$) occurs
  rarely in the rank table of $\Jlin{G}{x}$
  (resp. $\Jlin{F}{x}$). This motivates the study of this rank table
  in Section~\ref{section:diff}.
\item While the number $s$ of simultaneous guesses
  $(v_1, w_1), \dots, (v_s, w_s)$ should be the smallest possible in
  order to minimize the time complexity, it
  should also be large enough so that the linear
  system\if\IEEEversion1
  \eqref{eq:linear_system_with_s_guesses}
  \begin{figure*}[h]
    \centering
\begin{equation}
    \label{eq:linear_system_with_s_guesses}
    \left\{
      \begin{array}{ccc}
        X \cdot \Jlin{G}{v_i} - \Jlin{F}{w_i} \cdot Y & = & 0 \\
        Y \cdot v_i & = & w_i
      \end{array}
    \right.\qquad  \forall i \in \{1, \dots, s\},
  \end{equation}
  \end{figure*}
  \else:
  \begin{equation}
    \label{eq:linear_system_with_s_guesses}
    \left\{
      \begin{array}{ccc}
        X \cdot \Jlin{G}{v_i} - \Jlin{F}{w_i} \cdot Y & = & 0 \\
        Y \cdot v_i & = & w_i
      \end{array}
    \right.\qquad  \forall i \in \{1, \dots, s\},
  \end{equation}
  \fi
  has a unique solution, or a ``small enough'' affine space of
  solutions. Thus, the rank of such a system, which is nothing but a
  concatenation of $s$ systems of the form (\ref{eq:main_linear_system}),
  is investigated in
  Section~\ref{sec:rank_of_the_system}.
\item As soon as a possibly valid pair $(A_0, B_0)$ is found, it
  remains to recover $C_0, a$. This can be done with elementary linear algebra
  and is detailed in Section~\ref{ss:finishing}.
\end{itemize}

\subsubsection{Rank Table and Connection with the Differential Spectrum}\label{section:diff}

As explained earlier, part of the algorithm consists in guessing an
$s$--tuple of vectors in $\F_2^n$, which requires up to $2^{sn}$
trials with brute force. This number of trials can be
drastically improved using the {\em rank tables}.

\begin{definition}\label{def:rank table}
  The {\em rank table} $\rktable{F}$ of a function $F$ from \(\F_2^n\)
  to \(\F_2^m\), is a table with
  $(\min(m, n) + 1)$ entries indexed by $\{0, \dots, \min(m,n)\}$
  and\if\IEEEversion1,
  for any $j \in \{0, \dots, \min (m,n)\}$,
  \fi
  \[
  \if\IEEEversion0  
  \forall j \in \{0, \dots, \min (m,n)\}, \quad
  \fi
  \rktable{F}[j] := \{x \in \F_2^n ~|~ \rank (\Jlin{F}{x}) = j\},
\]
where the rank of a linear mapping is defined as the dimension of its
image set.

The {\em rank distribution} $\rkdist{F}$ of $F$ is defined as
\[
  \if\IEEEversion0
  \forall j \in \{0, \dots, \min (m,n)\},\quad
  \fi
  \rkdist{F}[j] := \# \rktable{F}[j]\if\IEEEversion1,
  \else.
  \fi
\]
\if\IEEEversion1
for any $j \in \{0, \dots, \min (m,n)\}$.
\fi
\end{definition}

\begin{lemma}\label{lem:cost_rk_table}
  The computation of the rank table can be performed in
  $\bigoh{2^n\max(m,n)^{\omega}}$ binary operations,
  where $\omega$ denotes the complexity exponent of matrix multiplication.
\end{lemma}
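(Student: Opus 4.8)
The plan is to bound separately the cost of \emph{forming} the $2^n$ matrices $\Jlin{F}{x}$ and the cost of computing their ranks, and then to check that both fit within the claimed budget. For a fixed $x \in \F_2^n$, the matrix $\Jlin{F}{x}$ is an $m \times n$ matrix over $\F_2$, and its rank can be obtained in $\bigoh{\max(m,n)^\omega}$ binary operations: one pads it with zero rows or columns to a square matrix of side $\max(m,n)$ (which leaves the rank unchanged) and applies a rank algorithm that reduces to fast matrix multiplication, for instance an LSP/LU-type decomposition. Summing over the $2^n$ values of $x$ and recording each $x$ in the bucket $\rktable{F}[\rank \Jlin{F}{x}]$ then accounts for the $\bigoh{2^n \max(m,n)^\omega}$ term; the bucketing itself is negligible, so the rank computations dominate this part.

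The point that needs care is the \emph{formation} of the matrices. Evaluating every entry $J_{i,j}(x) = \sum_{k}(Q^i_{k,j} + Q^i_{j,k})x_k$ from scratch costs $\bigoh{mn^2}$ per input, hence $\bigoh{2^n mn^2}$ in total, which exceeds the target when $m$ is close to $n$ (since $\omega < 3$). I would avoid this by exploiting that, by the previous proposition, each entry $J_{i,j}$ is a \emph{linear} form in $x$, so $\Jlin{F}{x + e_k} = \Jlin{F}{x} + \Jlin{F}{e_k}$. Precomputing the $n$ matrices $\Jlin{F}{e_1}, \dots, \Jlin{F}{e_n}$ costs $\bigoh{mn^2}$, which is negligible; then traversing $\F_2^n$ along a Gray code, so that consecutive inputs differ in a single coordinate $e_k$, lets me update $\Jlin{F}{x}$ into the next matrix by one $m \times n$ matrix addition, i.e. $\bigoh{mn}$ binary operations per step. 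The total formation cost is therefore $\bigoh{2^n mn}$.

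It then remains to combine the two bounds. Since $\omega \geq 2$, we have $mn \leq \max(m,n)^2 \leq \max(m,n)^\omega$, so the formation cost $\bigoh{2^n mn}$ is absorbed into the rank-computation cost $\bigoh{2^n \max(m,n)^\omega}$, giving the overall complexity $\bigoh{2^n \max(m,n)^\omega}$. The main (and only mild) obstacle here is precisely this bookkeeping: ensuring that the matrix-formation overhead does not introduce a spurious extra factor of $n$, which is exactly what the Gray-code incremental update is designed to prevent.
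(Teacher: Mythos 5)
Your proof is correct, and the rank-computation half (pad to a square matrix of side $\max(m,n)$, compute the rank via fast matrix multiplication in $\bigoh{\max(m,n)^\omega}$, repeat for all $2^n$ inputs) is exactly the dominant term in the paper's argument as well. Where you diverge is in how the matrices $\Jlin{F}{x}$ are formed. You work from the ANF coefficients, observe that naive evaluation of the linear forms $J_{i,j}(x)=\sum_k (Q^i_{k,j}+Q^i_{j,k})x_k$ would cost $\bigoh{mn^2}$ per input, and repair this with a Gray-code traversal using the additivity $\Jlin{F}{x+e_k}=\Jlin{F}{x}+\Jlin{F}{e_k}$, bringing the update cost down to $\bigoh{mn}$ per step. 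The paper avoids the issue altogether: it assumes $F$ is given by its truth table (noting that ANF-to-truth-table conversion costs only $\bigoh{mn2^n}$, which is negligible), and then each column of $\Jlin{F}{x}$ is just $\Delta_{e_j}F(x)+\Delta_{e_j}F(0)=F(x+e_j)+F(x)+F(e_j)+F(0)$, i.e.\ a constant number of table lookups and $m$-bit \textsc{xor}s, giving $\bigoh{mn}$ per matrix with no amortization needed. Both routes land on the same $\bigoh{2^n mn}$ formation cost, absorbed into $\bigoh{2^n\max(m,n)^\omega}$ since $\omega\geq 2$; your Gray-code device is a clean way to get there if one insists on staying in the ANF representation, while the paper's direct evaluation from the truth table is simpler and is what the implementation actually does.
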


\begin{proof}
  First, we suppose that functions are represented by their truth
  tables.  If they are represented by their ANFs, the translation
  into a truth table costs $\bigoh{mn2^n}$ (see
  \cite[Chap.~9]{add:Joux09}), which is negligible compared to
  the final complexity.

  The computation of one entry of the rank table, corresponding to a
  vector $x \in \F_2^n$, consists in first computing $\Jlin{F}{x}$ and
  then computing the rank of this matrix.  If the truth table of
  $F$ is known, the computation of $\Jlin{F}{x}$ is done in
  $\bigoh{mn}$ binary operations according to~(\ref{eq:J}). The latter
  cost is negligible compared to that of the calculation of the rank
  which is $\bigoh{\max(m,n)^{\omega}}$ binary operations.  This
  should be repeated for any $x \in \F_2^n$, which yields the
  complexity formula.
\end{proof}

\begin{remark}
  It is worth noting that, by {\em rank table}, we refer to an object
  which is different from that used by Dinur in \cite{EC:Dinur18},
  where another ``rank table'' is used to decide affine
  equivalence. Dinur's ``rank table'' is obtained by considering
  the {\em  symbolic ranks} of $F, G$ \cite[Section~2, page 418]{EC:Dinur18},
  which are not related to the ranks of the transformations involved
  in Definition~\ref{def:rank table}.
\end{remark}

We now show that there is a one-to-one correspondence between the rank
distribution of a quadratic function and the distribution of its
DDT.

\begin{proposition}\label{prop:diff}
  Let $F : \F_2^{n} \to \F_2^{m}$ be a quadratic function and
  \(\rkdist{F}\) be its rank distribution. Let \(\ddist{F}\) denote
  the multiplicities of the elements in the differential spectrum of \(F\), i.e.,
  \begin{equation*}
    \ddist{F}[k] = \#\big\{ (a, b) \in (\ftwo^{n})^{2} ~|~
    \delta_F(a,b)=k \big\} ~.
  \end{equation*}
Then, for any \(r\), \(0 \leq r \leq \min(m,n)\),
  \[\rkdist{F}[r] = 2^{-r}\ddist{F}[2^{n-r}].\]
\end{proposition}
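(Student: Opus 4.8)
The plan is to exploit the fact, established in Proposition~\ref{prop:derivative}, that the derivative of a quadratic function is affine with linear part given by the linear Jacobian. Indeed, rewriting that identity for a fixed $a \in \F_2^n$ yields
\[
  \Delta_a F(x) = \Jlin{F}{a} \cdot x + \Delta_a F(0),
\]
so that $x \mapsto \Delta_a F(x)$ is an affine map whose linear part is the matrix $\Jlin{F}{a}$. This single observation is what turns the combinatorial count of DDT entries into a question about solution sets of linear systems, and it is the crux of the argument.

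First I would express each DDT entry as a solution count. For fixed $(a,b) \in \F_2^n \times \F_2^m$,
\[
  \delta_F(a,b) = \#\{x \in \F_2^n : \Jlin{F}{a} \cdot x = b + \Delta_a F(0)\}.
\]
Standard linear algebra over $\F_2$ then tells us that this system is either inconsistent, in which case $\delta_F(a,b)=0$, or consistent, in which case its solution set is a coset of $\ker \Jlin{F}{a}$ and hence has exactly $2^{n - \rank \Jlin{F}{a}}$ elements.

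Next I would fix $a$ and set $r = \rank \Jlin{F}{a}$. As $b$ ranges over $\F_2^m$, the right-hand side $b + \Delta_a F(0)$ ranges over all of $\F_2^m$, so the system is consistent for exactly $\#\,\im \Jlin{F}{a} = 2^r$ values of $b$; for each of these the entry equals $2^{n-r}$, and for the remaining values of $b$ it is $0$. Thus, for a fixed $a$ whose Jacobian has rank $r$, the value $2^{n-r}$ is attained by $\delta_F(a, \cdot)$ exactly $2^r$ times and no other nonzero value occurs. The degenerate case $a = 0$, where $\Jlin{F}{0}$ is the zero matrix of rank $0$ and the single nonzero entry $\delta_F(0,0) = 2^n$ occurs once, fits this pattern as well.

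Finally I would sum over $a$, grouping by the rank of $\Jlin{F}{a}$. Since $r \mapsto 2^{n-r}$ is injective on $\{0, \dots, \min(m,n)\}$, the DDT value $2^{n-r}$ can only be produced by those $a$ with $\rank \Jlin{F}{a} = r$, each contributing exactly $2^r$ pairs, giving
\[
  \ddist{F}[2^{n-r}] = \sum_{a\,:\,\rank \Jlin{F}{a} = r} 2^r = 2^r\, \rkdist{F}[r],
\]
which rearranges to the claimed identity. No step presents a genuine obstacle once the affinity of $\Delta_a F$ is in hand; the only point demanding care is the injectivity of $r \mapsto 2^{n-r}$, which is what guarantees that distinct Jacobian ranks feed into distinct DDT values so that the counts do not interfere with one another.
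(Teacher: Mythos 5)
Your proposal is correct and follows essentially the same route as the paper: both arguments rest on Proposition~\ref{prop:derivative} to identify $x \mapsto \Delta_a F(x) + \Delta_a F(0)$ with the linear map $\Jlin{F}{a}$, and then count, for each fixed $a$ of Jacobian rank $r$, that the row of the DDT indexed by $a$ takes the value $2^{n-r}$ exactly $2^r$ times and is zero elsewhere. Your write-up is slightly more explicit than the paper's (the coset argument for general $b$, the degenerate case $a=0$, and the injectivity of $r \mapsto 2^{n-r}$ when aggregating over $a$), but these are refinements of the same proof rather than a different approach.
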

\begin{proof}
  Let \(a \in \F_2^n\). Proposition~\ref{prop:derivative} implies that
  \if\IEEEversion1
  \begin{align*}
    \Jlin{F}{a} \cdot x & =  \Jlin{F}{x} \cdot a\\
                        & =  \Delta_aF(x) + \Delta_aF(0).
    \end{align*}
  \else
  \[\Jlin{F}{a} \cdot x =  \Jlin{F}{x} \cdot a = 
    \Delta_aF(x) + \Delta_aF(0).\]
  \fi
  Then, the elements
  \(x \in \F_2^n\) in the right kernel of \(\Jlin{F}{a}\), i.e., in
  the set of all \(x\) such that \(\Jlin{F}{a} \cdot x = 0\), are those
  such that \(\Delta_aF(x) + \Delta_aF(0) =0\). Since \(F\) is
  quadratic, this second set is a linear space whose dimension
  equals~\(i\) where
\[\delta_F(a,b)\in \{0, 2^i\}, \; \forall b \in \F_2^n.\]
Let \(r\) denote the rank of \(\Jlin{F}{a}\).
We then
deduce that
\if\IEEEversion1
\begin{align*}
  \# \{x \in &\F_2^n  : \Jlin{F}{a} \cdot x = 0\}\\
  & =  2^{n-r} \\
  & =   \# \{x \in \F_2^n : \Delta_aF(x) + \Delta_aF(0)=0\} \\
  & =  2^i,
\end{align*}
\else
\begin{eqnarray*}
 \# \{x \in \F_2^n  : \Jlin{F}{a} \cdot x = 0\} & = & 2^{n-r} \\
  & = &  \# \{x \in \F_2^n : \Delta_aF(x) + \Delta_aF(0)=0\} \\
 & = & 2^i,
\end{eqnarray*}
\fi
where \(\{\delta_F(a,b) : b \in \F_2^m\} = \{0, 2^i\}\).
Then, the
entries in the row of the DDT indexed by \(a\) are \(0\) and
\(2^{n-r}\), and the value \(2^{n-r}\) appears \(2^{r}\) times. It
follows that
\[\rkdist{F}[r] = 2^{-r}\ddist{F}[2^{n-r}].\]
\end{proof}
\begin{remark}
It is worth noting that
the previous proposition implies that, for any quadratic function
$F$, we have $\rkdist{F}[n] =
0$, since the values in the differential spectrum of a Boolean
function are always even.
\end{remark}

As another consequence, we get that the differential spectrum of a
quadratic function contains two values only, \(0\) and \(\delta(F)\),
if and only if the matrices \(\Jlin{F}{x}\) for all nonzero
\(x \in \F_2^n\) have the same rank. This includes for instance the
case of bent functions from \(\F_2^n\) to \(\F_2^m\), \(n \geq 2m\),
and the case of APN functions. When \(m=n\), this can be seen as a
restatement of the QAM property from~\cite[Th.~1]{add:YuWanLi14},
using the correspondence exhibited in Prop.~\ref{prop:QAM}.
\begin{corollary}\label{cor:rank_table_APN}
  Let $F : \ftwo^{n} \to \ftwo^{m}$ be a quadratic function. Then,
  \(F\) is APN if and only if \(\Jlin{F}{x}\) has rank \((n-1)\) for
  all nonzero \(x\).
\end{corollary}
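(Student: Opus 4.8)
The plan is to derive this corollary directly from Proposition~\ref{prop:diff}, which already establishes the bijection $\rkdist{F}[r] = 2^{-r}\ddist{F}[2^{n-r}]$ between the rank distribution and the differential spectrum. The key observation is that the APN property is, by definition, a statement purely about the differential uniformity, namely $\delta(F) = 2$, and Proposition~\ref{prop:diff} lets us translate this directly into a statement about the ranks of the matrices $\Jlin{F}{x}$.

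First I would recall that, by the argument in the proof of Proposition~\ref{prop:diff}, for a quadratic $F$ each row of the DDT indexed by a nonzero $a \in \F_2^n$ takes only the two values $0$ and $2^{n-r}$, where $r = \rank \Jlin{F}{a}$. Consequently, the largest DDT entry on that row is $2^{n-r}$, and so
\[
  \delta(F) = \max_{a \neq 0} 2^{n - \rank \Jlin{F}{a}} = 2^{\,n - \min_{a \neq 0} \rank \Jlin{F}{a}}.
\]
From here the equivalence is almost immediate. By Corollary~\ref{cor:a_in_kerJ(a)}, the nonzero vector $a$ always lies in the right kernel of $\Jlin{F}{a}$, so $\rank \Jlin{F}{a} \leq n-1$ for every nonzero $a$; this is the upper bound that makes $n-1$ the best rank one can hope for. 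Then $F$ is APN, i.e.\ $\delta(F) = 2 = 2^{n - (n-1)}$, if and only if $\min_{a \neq 0} \rank \Jlin{F}{a} = n-1$, which combined with the uniform upper bound forces $\rank \Jlin{F}{a} = n-1$ for \emph{all} nonzero $a$.

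I expect no serious obstacle, since all the heavy lifting has already been done in Proposition~\ref{prop:diff} and Corollary~\ref{cor:a_in_kerJ(a)}; the only care needed is the bookkeeping that relates the differential uniformity $\delta(F)$ to the \emph{minimum} rank over nonzero inputs (as opposed to the distribution), and the use of Corollary~\ref{cor:a_in_kerJ(a)} to certify that $n-1$ is genuinely the maximal attainable rank. One subtlety worth stating explicitly is that we must restrict to $a \neq 0$ throughout, which matches the definition of $\delta(F)$ as a maximum over nonzero differences; the case $a = 0$ is excluded in both the differential-uniformity definition and the rank-table interpretation. With these pieces assembled, the biconditional follows directly.
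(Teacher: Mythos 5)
Your proof is correct and follows essentially the route the paper intends: the corollary is stated as a direct consequence of Proposition~\ref{prop:diff} (whose proof shows each nonzero DDT row of a quadratic $F$ contains only $0$ and $2^{n-\rank \Jlin{F}{a}}$), combined with the bound $\rank \Jlin{F}{a} \leq n-1$ from Corollary~\ref{cor:a_in_kerJ(a)}. Your explicit bookkeeping relating $\delta(F)$ to the minimum rank is exactly the argument the paper leaves implicit.
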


\subsubsection{On the Rank of the Linear
  System~\eqref{eq:main_linear_system}}
\label{sec:rank_of_the_system}
\begin{proposition}\label{prop:upper_bound_for_ranks_of_systems}
  Let $F, G$ be two quadratic functions and $v, w \in \F_2^n$ be such
  that \[\rank \Jlin{G}{v} = \rank \Jlin{F}{w}.\] Denote by $r$ the
  above rank. Then the first part of 
  System~\eqref{eq:main_linear_system}, {\em i.e.} the linear system
with unknowns
  $(X, Y) \in \F_2^{m \times m} \times \F_2^{n \times n}$:
  \begin{equation}\label{eq:linear_equations_on_A_B}
    X \cdot \Jlin{G}{v} - \Jlin{F}{w} \cdot Y = 0
  \end{equation}
  has $m^2 + n^2$ unknowns in \(\F_2\), $mn$ equations and rank less than or equal
  to $r(m+n-r)$.
  Moreover, the affine system \eqref{eq:main_linear_system}
  has $m^2 + n^2$ unknowns, $(m+1)n$ equations and rank less than
  or equal to $r(m+n-r) + (n-r)$.
\end{proposition}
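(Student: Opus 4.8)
The plan is to compute the rank of each system as the dimension of the row space of its coefficient matrix, viewed as a subspace of the dual of the unknown space $\F_2^{m\times m}\times \F_2^{n\times n}$. Throughout I abbreviate $J_F := \Jlin{F}{w}$ and $J_G := \Jlin{G}{v}$, two $m\times n$ matrices of rank $r$, and I use the trace bilinear form $\langle M,N\rangle := \trace(M^{T}N)$ on matrices. The counting of unknowns and equations is immediate: the unknowns $(X,Y)$ account for $m^{2}+n^{2}$ coordinates, the matrix equation $X\cdot J_G - J_F\cdot Y = 0$ contributes $mn$ scalar equations, and the extra constraint $Y\cdot v = w$ contributes $n$ more, giving $(m+1)n$ in total.

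For the first system, I would describe its row space $A$ explicitly. A linear combination of its $mn$ equations with coefficient matrix $\Lambda\in\F_2^{m\times n}$ is the functional $(X,Y)\mapsto \langle\Lambda, X J_G - J_F Y\rangle = \langle \Lambda J_G^{T}, X\rangle + \langle J_F^{T}\Lambda, Y\rangle$, so $A = \{(\Lambda J_G^{T},\, J_F^{T}\Lambda) : \Lambda\in\F_2^{m\times n}\}$. Its dimension is $mn$ minus the dimension of the space of $\Lambda$ satisfying both $\Lambda J_G^{T}=0$ and $J_F^{T}\Lambda=0$. The first condition forces every row of $\Lambda$ into the right kernel of $J_G$ (of dimension $n-r$) and the second forces every column of $\Lambda$ into the left kernel of $J_F$ (of dimension $m-r$); writing $\Lambda=P\Lambda'$ for a fixed $m\times(m-r)$ matrix $P$ whose columns span the left kernel of $J_F$ shows that such matrices form a space of dimension $(m-r)(n-r)$. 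Hence $\dim A = mn-(m-r)(n-r)=r(m+n-r)$, which gives the claimed bound for the first system (with equality).

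For the full affine system I would add the row space $B$ of the constraint $Y\cdot v = w$. A combination of its $n$ equations with coefficients $\mu\in\F_2^{n}$ is the functional $(X,Y)\mapsto \langle \mu v^{T}, Y\rangle$, so $B=\{(0,\mu v^{T}):\mu\in\F_2^{n}\}$, of dimension $n$ when $v\neq 0$. Since the rank of the full system is $\dim(A+B)=\dim A+\dim B-\dim(A\cap B)$, it suffices to prove $\dim(A\cap B)\geq r$. This is where Corollary~\ref{cor:a_in_kerJ(a)} is essential: because $J_G v=\Jlin{G}{v}\cdot v=0$, every rank-one matrix $\Lambda=y v^{T}$ with $y\in\F_2^{m}$ satisfies $\Lambda J_G^{T}=y(J_G v)^{T}=0$ and $J_F^{T}\Lambda=(J_F^{T}y)v^{T}$, so it produces the functional $(0,(J_F^{T}y)v^{T})$ lying in both $A$ and $B$. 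As $y$ ranges over $\F_2^{m}$ the vector $\mu=J_F^{T}y$ ranges over the row space of $J_F$, of dimension $r$, and since $v\neq 0$ the map $\mu\mapsto\mu v^{T}$ is injective; this exhibits an $r$-dimensional subspace of $A\cap B$. Combining the three estimates yields rank $\leq r(m+n-r)+n-r$.

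The routine parts are the counting and the two row-space descriptions; the crux, and the one genuinely structural step, is the overlap bound $\dim(A\cap B)\geq r$ in the affine system, which does not follow from the shapes of the matrices alone but relies on the fact that $v$ lies in the right kernel of $\Jlin{G}{v}$ (Corollary~\ref{cor:a_in_kerJ(a)}). I would also dispatch the degenerate case $v=0$ separately: there the constraint $Y\cdot v=w$ has identically zero linear part, so the full system has the same rank as the first one, and the bound holds because $r\leq n$. A minor point to justify carefully is the tensor-type dimension count $(m-r)(n-r)$ for the common kernel, which the substitution $\Lambda=P\Lambda'$ reduces to counting $(m-r)\times n$ matrices whose rows lie in an $(n-r)$-dimensional space.
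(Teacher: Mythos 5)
Your proof is correct, and it takes a genuinely different route from the paper's. The paper argues on the primal side: it chooses non-singular $U$ and $V$ that push $\Jlin{G}{v}\cdot U$ and $V\cdot\Jlin{F}{w}$ into block form with a zero column-block and a zero row-block respectively, substitutes $X'=VX$, $Y'=YU$, and reads off that the bottom-right $(m-r)\times(n-r)$ block of the transformed matrix equation is identically $0=0$, leaving at most $mn-(m-r)(n-r)=r(m+n-r)$ nontrivial equations; for the affine part it notes that any solution of \eqref{eq:linear_equations_on_A_B} forces $Yv$ into the $(n-r)$-dimensional right kernel of $\Jlin{F}{w}$ (via Corollary~\ref{cor:a_in_kerJ(a)}), so the $n$ extra equations contribute at most $n-r$ independent conditions. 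You instead work on the dual side, parametrizing the row space by $\Lambda\mapsto(\Lambda J_G^{T},\,J_F^{T}\Lambda)$ and computing the kernel of this map exactly, then bounding $\dim(A\cap B)$ from below by the explicit rank-one family $\Lambda=yv^{T}$ --- which uses the same Corollary~\ref{cor:a_in_kerJ(a)}, but to kill $\Lambda J_G^{T}$ rather than to confine $Yv$. Your computation buys something the paper does not prove: the rank of the homogeneous system \eqref{eq:linear_equations_on_A_B} \emph{equals} $r(m+n-r)$, whereas the paper only establishes the upper bound and reports sharpness as an experimental observation. The paper's block decomposition is arguably more elementary and visual, and its primal argument for the affine part is slightly more informal than your intersection-dimension count; your treatment of the degenerate case $v=0$ and of the tensor-type count $(m-r)(n-r)$ is careful and closes the small gaps one might otherwise worry about.
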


\begin{proof}
  The number of unknowns corresponds to the number of entries
  of $X, Y$. The number of linear equations equals the number of
  entries of the resulting matrix $X\cdot \Jlin{G}{v} - \Jlin{F}{w}
  \cdot Y$ which is $mn$. Let us investigate the rank.
  Since $\Jlin{G}{v}$ has rank $r$, there exists a non-singular
  matrix $U \in \F_2^{n\times n}$ such that the $n-r$ rightmost
  columns of $\Jlin{G}{v}\cdot U$ are zero. Hence it has the following
  shape: 
  \begin{center}
    \begin{tikzpicture}
      \node at (0,0) {\( \Jlin{G}{v} \cdot U =
        \begin{pmatrix}
          {\rm J}_1G & 0 \\
          {\rm J}_2G & 0
        \end{pmatrix}
        \)};
      \draw[<->] (.55, -.55) -- (1.3, -.55);
      \draw[<->] (1.45, -.55) -- (2.15, -.55);
      \draw[<->] (2.35, -.4) -- (2.35, -.05);
      \draw[<->] (2.35, 0) -- (2.35, .4);
      \node at (2.6, .2) {${\scriptstyle r}$};      
      \node at (2.82, -.2) {${\scriptstyle m-r}$};      
      \node at (.93, -.73) {${\scriptstyle r}$};      
      \node at (1.8, -.73) {${\scriptstyle n-r}$};      
    \end{tikzpicture}
  \end{center}
  for some matrices ${\rm J}_1G, {\rm J}_2G$.
Similarly, there exists a
  non-singular matrix $V \in \F_2^{m \times m}$ such that
  \( V \cdot \Jlin{F}{w} \) has the following shape:
  \begin{center}
    \begin{tikzpicture}
      \node at (0,0) {\( V \cdot \Jlin{F}{w} =
        \begin{pmatrix}
          {\rm J}_1F & {\rm J}_2 F \\
          0 & 0
        \end{pmatrix}
        \)};
      \draw[<->] (.55, -.55) -- (1.3, -.55);
      \draw[<->] (1.45, -.55) -- (2.25, -.55);
      \draw[<->] (2.5, -.4) -- (2.5, -.05);
      \draw[<->] (2.5, 0) -- (2.5, .4);
      \node at (2.7, .2) {${\scriptstyle r}$};      
      \node at (2.95, -.2) {${\scriptstyle m-r}$};      
      \node at (.93, -.73) {${\scriptstyle r}$};      
      \node at (1.85, -.73) {${\scriptstyle n-r}$};      
    \end{tikzpicture}
  \end{center}
  for some matrices ${\rm J}_1F, {\rm J}_2F$.
  Thus, setting $X' := VX$ and $Y' := YU$, we get a new and
  equivalent linear system
  \begin{equation}\label{eq:equiv_linear system}
    X' \cdot \Jlin{G}{v} \cdot U - V \cdot \Jlin{F}{v} \cdot Y' = 0.
  \end{equation}
  If we denote the block decompositions of $X', Y'$ as
  \if\IEEEversion1
  \begin{center}
    \begin{tikzpicture}
      \node at (0,0) {\( X' =
        \begin{pmatrix}
          X'_1 & X'_2 \\ X'_3  & X'_4
        \end{pmatrix}
        \)};
      \node at (1.6, .2) {${\scriptstyle r}$};      
      \node at (1.85, -.2) {${\scriptstyle m-r}$};      
      \node at (0.05, -.73) {${\scriptstyle r}$};      
      \node at (.85, -.73) {${\scriptstyle m-r}$};      
      \draw[<->] (-.3, -.55) -- (.4, -.55);
      \draw[<->] (.5, -.55) -- (1.2, -.55);
      \draw[<->] (1.4, -.4) -- (1.4, -.05);
      \draw[<->] (1.4, 0) -- (1.4, .4);
    \end{tikzpicture}
    \end{center}
    and
    \begin{center}
          \begin{tikzpicture}
      \node at (5,0) {\(Y' =
        \begin{pmatrix}
          Y'_1 & Y'_2 \\ Y'_3 & Y'_4
        \end{pmatrix}
        \)};
      \draw[<->] (4.7, -.55) -- (5.4, -.55);
      \draw[<->] (5.5, -.55) -- (6.2, -.55);
      \draw[<->] (6.4, -.4) -- (6.4, -.05);
      \draw[<->] (6.4, 0) -- (6.4, .4);
      \node at (6.6, .2) {${\scriptstyle r}$};      
      \node at (6.85, -.2) {${\scriptstyle n-r}$};      
      \node at (5.05, -.73) {${\scriptstyle r}$};      
      \node at (5.85, -.73) {${\scriptstyle n-r}$}; 
     \end{tikzpicture}
   \end{center}
  \else
  \begin{center}
    \begin{tikzpicture}
      \node at (0,0) {\( X' =
        \begin{pmatrix}
          X'_1 & X'_2 \\ X'_3  & X'_4
        \end{pmatrix}
        \)};
      \node at (2.8, 0) {and};
      \node at (5,0) {\(Y' =
        \begin{pmatrix}
          Y'_1 & Y'_2 \\ Y'_3 & Y'_4
        \end{pmatrix}
        \)};
      \draw[<->] (-.3, -.55) -- (.4, -.55);
      \draw[<->] (.5, -.55) -- (1.2, -.55);
      \draw[<->] (1.4, -.4) -- (1.4, -.05);
      \draw[<->] (1.4, 0) -- (1.4, .4);
      \node at (1.6, .2) {${\scriptstyle r}$};      
      \node at (1.85, -.2) {${\scriptstyle m-r}$};      
      \node at (0.05, -.73) {${\scriptstyle r}$};      
      \node at (.85, -.73) {${\scriptstyle m-r}$};      
      \draw[<->] (4.7, -.55) -- (5.4, -.55);
      \draw[<->] (5.5, -.55) -- (6.2, -.55);
      \draw[<->] (6.4, -.4) -- (6.4, -.05);
      \draw[<->] (6.4, 0) -- (6.4, .4);
      \node at (6.6, .2) {${\scriptstyle r}$};      
      \node at (6.85, -.2) {${\scriptstyle n-r}$};      
      \node at (5.05, -.73) {${\scriptstyle r}$};      
      \node at (5.85, -.73) {${\scriptstyle n-r}$}; 
     \end{tikzpicture}
   \end{center}
   \fi
  then the block decomposition of 
  System~\eqref{eq:equiv_linear system} gives\if\IEEEversion1
  \eqref{eq:block_decomp}.
  \begin{figure*}[!h]
    \begin{equation}\label{eq:block_decomp}
    \begin{tikzpicture}
      \node at (0,0) {\(
        \begin{pmatrix}
          X_1' {\rm J}_1G +  X_2' {\rm J}_2G + {\rm J}_1F Y_1' + {\rm J}_2F Y_3'
          & {\rm J}_1 F Y_2' + {\rm J}_2 F Y_4' \\
          X_3' {\rm J}_1 G + X_4' {\rm J}_2 G & 0
        \end{pmatrix} =
        \begin{pmatrix}
          0 & 0 \\ 0 & 0
        \end{pmatrix}
        \)};
      \draw[<->] (-4.9, -.55) -- (0.67, -.55);
      \draw[<->] (0.72, -.55) -- (3.2, -.55);
      \draw[<->] (-5.55, -.4) -- (-5.55, -.05);
      \draw[<->] (-5.55, 0) -- (-5.55, .4);
      \node at (-5.75, .2) {${\scriptstyle r}$};      
      \node at (-6, -.2) {${\scriptstyle m-r}$};      
      \node at (-2, -.73) {${\scriptstyle r}$};      
      \node at (2, -.73) {${\scriptstyle n-r}$};      
      \draw[<->] (3.8, -.55) -- (4.5, -.55);
      \draw[<->] (4.6, -.55) -- (5.4, -.55);
      \draw[<->] (5.6, -.4) -- (5.6, -.05);
      \draw[<->] (5.6, 0) -- (5.6, .4);
      \node at (5.8, .2) {${\scriptstyle r}$};      
      \node at (6, -.2) {${\scriptstyle m-r}$};      
      \node at (4.15, -.73) {${\scriptstyle r}$};      
      \node at (5., -.73) {${\scriptstyle n-r}$};   
   \node at (6.4, 0) {.}; 
 \end{tikzpicture}
      \end{equation}
      \end{figure*}
   \else:
  \begin{center}
    \begin{tikzpicture}
      \node at (0,0) {\(
        \begin{pmatrix}
          X_1' {\rm J}_1G +  X_2' {\rm J}_2G + {\rm J}_1F Y_1' + {\rm J}_2F Y_3'
          & {\rm J}_1 F Y_2' + {\rm J}_2 F Y_4' \\
          X_3' {\rm J}_1 G + X_4' {\rm J}_2 G & 0
        \end{pmatrix} =
        \begin{pmatrix}
          0 & 0 \\ 0 & 0
        \end{pmatrix}
        \)};
      \draw[<->] (-4.9, -.55) -- (0.67, -.55);
      \draw[<->] (0.72, -.55) -- (3.2, -.55);
      \draw[<->] (-5.55, -.4) -- (-5.55, -.05);
      \draw[<->] (-5.55, 0) -- (-5.55, .4);
      \node at (-5.75, .2) {${\scriptstyle r}$};      
      \node at (-6, -.2) {${\scriptstyle m-r}$};      
      \node at (-2, -.73) {${\scriptstyle r}$};      
      \node at (2, -.73) {${\scriptstyle n-r}$};      
      \draw[<->] (3.8, -.55) -- (4.5, -.55);
      \draw[<->] (4.6, -.55) -- (5.4, -.55);
      \draw[<->] (5.6, -.4) -- (5.6, -.05);
      \draw[<->] (5.6, 0) -- (5.6, .4);
      \node at (5.8, .2) {${\scriptstyle r}$};      
      \node at (6, -.2) {${\scriptstyle m-r}$};      
      \node at (4.15, -.73) {${\scriptstyle r}$};      
      \node at (5., -.73) {${\scriptstyle n-r}$};   
   \node at (6.4, 0) {.}; 
        \end{tikzpicture}
      \end{center}
      \fi
      Each of the $mn$ entries of the left-hand matrix\if\IEEEversion1
      of \eqref{eq:block_decomp}
      \fi
      \ yields a linear
  equation relating the entries of $X', Y'$.  Hence, one can ignore
  the entries of the bottom right--hand corner yielding equations of
  the form $0 = 0$, which leaves $r(m+n-r)$ equations. This yields the upper bound
  on the rank of System~\eqref{eq:equiv_linear system} and hence that
  of System~\eqref{eq:linear_equations_on_A_B}.

  Concerning the whole system~\eqref{eq:main_linear_system}, including
  the affine equations $Y \cdot v = w$ amounts to considering $n$
  additional affine equations. However, these equations are never
  independent. Indeed, if $(X, Y)$ is a solution of
  (\ref{eq:linear_equations_on_A_B}), then
  $X \Jlin{G}{v} = \Jlin{F}{w} Y$ and, according to
  Corollary~\ref{cor:a_in_kerJ(a)}, we deduce that $Y\cdot v$ should
  lie in the right kernel of $\Jlin{F}{w}$ which has dimension $n-r$.
  Thus, the $n$ additional affine equations on the entries of $Y$
  given by $Y\cdot v = w$ impose at most $n-r$ new independent conditions
  on the entries of $Y$.
\end{proof}

Using SageMath \cite{add:sage}, we ran about 500 tests for various
values of $n,m \in \{6,8\}$. For each test, we considered the cases
where $F, G$ are EA-equivalent, and when they are not. To do so $F$ was drawn
uniformly at random among quadratic functions, then $G$ was either
drawn uniformly at random or a triple $(A, B, C)$ was drawn at random and $G$
was defined as $A \circ F \circ B + C$.
Our experimental observations
showed that for a fixed pair $(v, w)$ the upper bounds on the ranks of
(\ref{eq:main_linear_system}) and (\ref{eq:linear_equations_on_A_B})
were always reached. Thus, we claim that these bounds are sharp.
On the
other hand, when considering $s$--tuples of pairs $(v_1, w_1), \dots,
(v_s, w_s)$, the concatenation of $s$ systems of the form
(\ref{eq:main_linear_system}), {\em i.e.} a system of the form
(\ref{eq:linear_system_with_s_guesses}), does not in general lead to a
system of rank
\begin{equation}\label{eq:upper_bound_rank}
  \sum_{i=1}^s \left(r_i(m+n-r_i)+(n-r_i)\right),
\end{equation}
but to a system of slightly smaller rank and we did not succeed in getting a
sharper estimate as soon as $s > 1$, which is actually the use-case of our
algorithm.
These experimental results are summarized in
  Table~\ref{tab:rank_experiments}.  Our experimental observations show
  that for many use-cases, the choice $s = 3$ is appropriate since it leads
  to a system with few solutions.
  \if\IEEEversion1
  \begin{table*}[h!tb]
  \else
  \begin{table}[h!tb]
  \fi
  \centering
  \renewcommand\arraystretch{1.3}
  \setlength\tabcolsep{4pt}
  \tiny
  \begin{tabular}{clccccc}
    \toprule
    {\normalsize $m$} & {\normalsize $n$} & {\normalsize $m^2+n^2$} & {\normalsize $s$} & {\normalsize
    Jacobian ranks} &
    {\normalsize Expected ranks for (\ref{eq:linear_equations_on_A_B})
    and (\ref{eq:main_linear_system})}  &
    {\normalsize Observed ranks (intervals)} \\
    \midrule
    6 & 6 & 72 & 1 & (3) & (27, 30) & (27, 30) \\
    \rowcolor{gray!10}
    6 & 6 & 72 & 1 & (4) & (32, 34) & (32, 34) \\
    6 & 8 & 100 & 1 & (4) &  (40, 44) & (40, 44) \\
    \rowcolor{gray!10}
    8 & 6 & 100 & 1 & (4) & (40, 42) & (40, 42) \\
    \midrule
    6 & 6 & 72 & 2 & (3,3) & (54, 60) & (48$\dots$53, 50$\dots$54) \\
    \rowcolor{gray!10}
    6 & 6 & 72 & 2 & (3,4) & (59, 64) & (52$\dots$56, 56$\dots$57) \\
    6 & 6 & 72 & 2 & (4,4) & (64, 68) & (56$\dots$60, 60$\dots$61) \\
    \rowcolor{gray!10}
    6 & 8 & 100 & 2 & (4,4)  & (80, 88) & (72$\dots$80, 78$\dots$82) \\
    8 & 6 & 100 & 2 & (4,4) & (80, 84) & (68$\dots$72, 72$\dots$73) \\
    \midrule
    \rowcolor{gray!10}
    6 & 6 & 72 & 3 & (3,4,4) & (72, 72) & (65$\dots$72, 69$\dots$72) \\    
    6 & 6 & 72 & 3 & (4,4,4) & (72, 72) & (60$\dots$72, 66$\dots$72) \\    
    \rowcolor{gray!10}
    6 & 8 & 100 & 3 & (4,4,4) & (100, 100) & (90$\dots$100, 93$\dots$100) \\    
    8 & 6 & 100 & 3 & (4,4,5) & (100, 100) & (95$\dots$96, 96) \\    
    \bottomrule
  \end{tabular}
  \caption{This table summarizes some experimental results on about
    500 tests.  The column {\em Jacobian ranks} gives the ranks of
    $\Jlin{G}{v_i}$ (or equivalently $\Jlin{F}{w_i}$) for the
    $(v_i, w_i)$'s we tested. The column {\em Expected ranks}
    corresponds to the previously obtained upper bounds on the ranks
    of the two systems given in
    Proposition~\ref{prop:upper_bound_for_ranks_of_systems}.  The
    column {\em Observed ranks}, provides the experimentally observed
    ranks for these systems. They correspond to the expected ones for $s=1$,
    while for higher $s$ they can be smaller and vary in the given intervals.
  }
  \label{tab:rank_experiments}
  \if\IEEEversion1
\end{table*}
\else
\end{table}
\fi

For APN functions, the smallest rank occurring in the rank
distribution is $n-1$. In this situation the rank of a system of the
form (\ref{eq:linear_equations_on_A_B}) is larger and in such a case,
the choice $s = 2$ might be sufficient, as suggested by the following
example.

\begin{example}\label{ex:rank_of_system_for_APN}
  Consider the functions \(F\) from \(\F_{2^n}\) to \(\F_{2^n}\)
  defined by \(F(x) = x^{2^i+1}\), with $i$ coprime to $n$. These
  functions, known as the Gold functions, form one the earliest known
  families of APN functions. When regarded as vectorial Boolean
  functions $\F_2^n \rightarrow \F_2^n$, they are quadratic.  For
  $n = 7$, choosing $s = 2$ yields solution spaces of dimension $7$
  while for $s = 3$ these spaces have dimension at most~$1$.
  Therefore, for such functions, it seems unclear which choice for $s$
  is the most suitable. On the one hand, choosing $s = 2$ will make it
  necessary to perform a brute-force search over a set of size
  $\bigoh{2^{2n}}$ but each step will require a second brute-force in
  the solution space of
  System~\eqref{eq:linear_system_with_s_guesses}, which is reasonable
  for $n = 5, 7$ but non negligible. On the other hand, the
  brute-force search is performed on a set of guesses of
  $\bigoh{2^{3n}}$ elements, but each step of the search is much less
  expensive.  These cases have been practically compared and, in terms
  of running times, the choice $s = 2$ seems to be more favorable (see
  Table~\ref{tab:algo_APN_tests} in Section~\ref{subsec:implem}).
\end{example}

\subsubsection{Deducing the Full Extended-Affine
  Equivalence}\label{ss:finishing}
Once a pair $(A_0, B_0)$ is computed, the remainder of the $4$--tuple \((A_0,B_0,C_0,a)\)
defining EA-equivalence,
can be computed as follows. Let $G_1$ be the function defined by
$\ G_1(x) := A_0 F (B_0 x)$ and recall that
$G(x) = A_0 F(B_0 x) + C_0 x + a$. Since $G_1$ can be computed from
the triple $(F, A_0, B_0)$, then, one gets $a$ using
\[
  a = G(0) + G_1(0).
\]
Finally, $C_0$ can be computed as the matrix
satisfying
\[
  \forall x\in \F_2^n, \quad C_0 x = G(x) + G_1(x) + a.
\]

\subsubsection{The Algorithm}\label{subsec:algo}
The pseudo-code of the full procedure for recovering extended-affine
equivalence is summarized in Algorithm~\ref{algo:get_EA_equivalence}.
Below we give a description of the successive steps of the algorithm with several comments.
\begin{enumerate}
\item Compute the rank tables and the rank distributions of $\Jlin{F}{x}$
  and $\Jlin{G}{x}$. If these distributions differ, then the functions
  are not EA-equivalent.
\item Otherwise, estimate a reasonable number of guesses $s$ yielding
  Systems~\eqref{eq:linear_system_with_s_guesses} with few
  solutions.  For many parameters, the choice $s = 3$ turns out to be
  appropriate.
\item Choose $s$ {\em reference vectors} $w_1, \dots, w_s \in \F_2^n$
  for which the values $\rank \Jlin{F}{w_i}$ lie among the least
  frequently occurring values in the rank distribution.
\item Perform a brute-force search over all $s$--tuples $(v_1, \dots, v_s)$
  such that $\rank \Jlin{G}{v_i} = \rank \Jlin{F}{w_i}$ for any
  $i \in \{1, \dots, s\}$.  For each such guess, solve the
  system\if\IEEEversion1
  \eqref{eq:new_system}.
  \begin{figure*}[!h]
    \centering
    \begin{equation}
      \label{eq:new_system}
      \left\{
          \begin{array}{ccc}
            X \cdot \Jlin{G}{v_i} -  \Jlin{F}{w_i}\cdot Y & = & 0\\
            Y \cdot v_i & = & w_i, \quad \forall i \in \{1, \dots, s\}.
          \end{array}
        \right.
    \end{equation}
  \end{figure*}
  \else:
      \[
        \left\{
          \begin{array}{ccc}
            X \cdot \Jlin{G}{v_i} -  \Jlin{F}{w_i}\cdot Y & = & 0\\
            Y \cdot v_i & = & w_i, \quad \forall i \in \{1, \dots, s\}.
          \end{array}
        \right.
       \]
       \fi
     \item If the above system has ``too many solutions'', make
       another guess.  A threshold $T$ for the dimension of the space
       of solutions should have been chosen in advance. In our
       experiments, we set it to $10$ in order to have at most $1024$
       solutions for the system. The relevance of the choice for $T$
       is discussed further in Remark~\ref{rem:variant}.
       If the threshold is exceeded too
       frequently, the value of~$s$ may be underestimated and restarting the
       algorithm with a larger $s$ would be appropriate.
    \item For each guess for which the space of solutions of the
      system is small enough, we perform brute-force search in this
      solution space for a pair of matrices $(X, Y)$ which are both
      non-singular. Such a pair provides a candidate for
      $(A_0^{-1}, B_0)$. For such a candidate, we use the calculations
      of Section~\ref{ss:finishing} to deduce a $4$--tuple
      $(A_0, B_0, C_0, a)$. If such a $4$--tuple is found, the problem
      is solved, if
      not, we keep on searching.
\end{enumerate}

\if\IEEEversion1
\begin{algorithm*}[!h]
\else
\begin{algorithm}
\fi
  \begin{itemize}
  \item {\bf Input: } A pair of vectorial Boolean functions
    $F,G : \F_2^n \longrightarrow \F_2^m$ and a threshold $T$.\item {\bf Output: } A 4-tuple $(A_0, B_0, C_0, a)$ such that
    \[\forall x \in \F_2^n,\ G(x) = A_0 \cdot F(B_0 x) + C_0x +a\] if it exists.
    Otherwise, returns ``NOT EQUIVALENT'' or ``NO EQUIVALENCE FOUND''.
  \end{itemize}
  \vspace{-.4cm}
  \hrulefill
  \begin{algorithmic}[1]
    \State{Compute $\rktable{F}$ and $\rktable{G}$ and the
      corresponding rank distributions.}
    \If{$\rkdist{F} \neq \rkdist{G}$} \State{\Return{``NOT
        EQUIVALENT''}} \EndIf \State{Let $r_0$ be one of the least frequently occurring
      nonzero ranks.}  \State{Determine the number $s$ of vectors
      to guess (in general $s=3$ is enough).}  \State{Choose $s$
      linearly independent reference vectors
      $w_1,\dots, w_s \in \rktable{F}[r_0]^s$.}  \For{any $s$--tuple of
      linearly independent vectors
      $(v_1, \dots, v_s) \in \rktable{G}[r_0]^s$}\label{forloop}
    \State{Compute the solution space $\mathcal S$ of the system with
      variables $(X, Y)\in \F_2^{m\times m} \times \F_2^{n \times n}$
      \[
        \left\{
          \begin{array}{ccc}
            X \cdot \Jlin{G}{v_i} -  \Jlin{F}{w_i}\cdot Y & = & 0\\
            Y \cdot v_i & = & w_i.
          \end{array}
        \right.,
        \quad \forall i \in \{1, \dots, s\}.
      \]
    }
    \If{$\dim \mathcal S \leq T$}
    \For{$(X, Y) \in \mathcal S$}
    \If{both $X$ and $Y$ are non-singular}
    \State{$A_0 := X^{-1}$ and $B_0 := Y$}
    \If{Some $4$--tuple $(A_0, B_0, C_0, a)$ may be deduced using
      Section~\ref{ss:finishing}}
    \State{\Return $(A_0, B_0, C_0, a)$}
    \EndIf
    \EndIf
    \EndFor
    \EndIf
    \EndFor
    \State{\Return{``NO EQUIVALENCE FOUND''}}
  \end{algorithmic}
  \caption{Algorithm for EA-equivalence recovery.}
  \label{algo:get_EA_equivalence}
\if\IEEEversion1
\end{algorithm*}
\else
\end{algorithm}
\fi

\begin{remark}
  The reference vectors $(w_1, \ldots, w_s)$ should be chosen to be linearly
  independent.  Indeed, if some $w_i$ can be expressed as a linear combination of
  the others, then it does not contribute anything to the linear
  system (\ref{eq:linear_system_with_s_guesses}), since the
  corresponding equations can be derived from the other ones.
\end{remark}

\begin{remark}\label{rk:hybrid_cases}
  It may happen that $\rkdist{F}[r_0]$ is smaller than $s$, or more
  generally that $\rktable{F}[r_0]$ does not contain $s$ linearly
  independent elements. This situation is actually advantageous:
  assume for instance that $s = 3$ and there are only $2$ vectors
  $w_1, w_2$ in $\rktable{F}[r_0]$ and 2 vectors $v_1, v_2$ in
  $\rktable{G}[r_0]$. In this situation, we choose $r_1$ which is the
  index of the smallest entry larger than $\rkdist{G}[r_0]$ in the rank
  distribution, choose $v_3 \in \rktable{G}[r_1]$ and perform a
  brute-force search for a $w_3 \in \rktable{F}[r_1]$ such that either
  $((v_1, w_1), (v_2, w_2), (v_3, w_3))$ or
  $((v_1, w_2), (v_2, w_1), (v_3, w_3))$ is an appropriate guess.  The
  number of guesses we should investigate is at most
  $2 \rkdist{F}[r_1]$ instead of $\rkdist{F}[r_0]^3$.
\end{remark}

\begin{remark}\label{rem:variant}
  In the way it is described, the algorithm might fail to return the
  solution while the functions are EA-equivalent.  Indeed, if for a
  given guess $(v_1, \ldots, v_s)$,
  System~\eqref{eq:linear_system_with_s_guesses} has a solution space
  whose dimension exceeds the threshold $T$, this guess is not
  investigated further. For this reason, we might fail to find the
  \(4\)--tuple giving the EA-equivalence.  This also raises the
  question of the most relevant choice for the pair $(s,T)$.

  Let us first discuss how to appropriately choose $(s,T)$.
  Actually, one observes that the dimension of the solution space of
  System~(\ref{eq:linear_system_with_s_guesses}) almost always lies in
  a given range that can be estimated by running a few tests.  By this
  manner, one can fix a bound $d$ such that the dimension of this
  solution space is almost always less than or equal to $d$.  Then,
  the complexity of the algorithm is $\bigoh{(m^2+n^2)^\omega 2^dR^{s}}$.
  If $d \leq \log_2(R)$, then the choice of $s$ is appropriate, otherwise it
  is probably better to increment $s$. As soon as $s$ is fixed and $d$
  is estimated, a reasonable value for $T$ would be $d$ or a bit more
  in order to take into account rare cases when the dimension of the
  solution space of System~(\ref{eq:linear_system_with_s_guesses}) is
  larger.  For all the tests we ran, we had $n < 10$. Since we always
  have $R < 2^n$, taking $T = 10$ seemed to be a reasonable
  choice because the quantity $d$ was always less than
  $\log_2 (R) < n < 10$.

  Finally, once $(s,T)$ is fixed,
  there is still a risk to miss some exceptional guesses for
  which the dimension of
  System~(\ref{eq:linear_system_with_s_guesses}) is larger than $T$.
  A possible way to address this issue would be to run some recursive
  call when the dimension of the solution space of
  System~(\ref{eq:linear_system_with_s_guesses}) exceeds $T$ {and to
    perform an exhaustive search for an $(s+1)$--th vector \(v_{s+1}\)
    in order to reduce the dimension of the solution space of
    System~(\ref{eq:linear_system_with_s_guesses})}.  Such an improvement
  of the algorithm, which has not been implemented, would permit to
  avoid situations where the equivalence is not detected.
\end{remark}

\subsubsection{Complexity}
According to Lemma~\ref{lem:cost_rk_table}, the cost of the
computation of the rank table is $\bigoh{\max(n,m)^\omega 2^n}$.
Next, we have to evaluate the cost of the part of the algorithm that
performs the search. For any guess $v_1, \dots, v_s$, we have to solve
a linear system of $m^2+n^2$ unknowns and $\bigoh{m^2+n^2}$ equations;
actually the choice of $s$ is done so that this system has a small
solution space and hence is represented by a matrix which is close to
square.

The number of guesses we should make is of order $\bigoh{R_1 \cdots R_s}$,
where for any $i \in \{1, \dots, s\}$, $R_i := \rkdist{F}[r_i]$ with
$r_i := \rank \Jlin{F}{w_i}$, the $w_i$'s being the reference vectors
defined in Section~\ref{subsec:algo}. In summary, for a uniformly random quadratic function $F$, denoting by
$R = \max_i \{R_i\}$, the running time of the algorithm is of order
\[\bigoh{\max(n,m)^\omega 2^n + R^s (m^2+n^2)^\omega},\]
and we recall that in general $s = 3$ is an appropriate choice.  In
practice, for random quadratic functions, the vectors providing the
minimal entries of the rank distribution are very rare, resulting in a
very fast running time of the algorithm (see Remark~\ref{rk:hybrid_cases}).

On the other hand, the situation where the algorithm is the least
efficient is when the functions are APN. Indeed, as proved in
Corollary~\ref{cor:rank_table_APN}, APN functions are precisely
the ones whose nonzero ranks are all equal to $n-1$. For such functions, the
complexity of the algorithm is 
\[\bigoh{n^{2\omega}2^{sn}}.\]
Here again, the choice $s = 3$ seems appropriate.  Note that, according
to Remark~\ref{ex:rank_of_system_for_APN}, it may be possible that the
choice $s = 2$ is sufficient, and corresponds to a number of guesses of
$2^{2n}$ at the cost of some overhead for each step of the brute-force
search.  We have not been able to estimate the asymptotic behaviour of
this overhead.

\subsubsection{Examples of Running Times}\label{subsec:implem}
The algorithm has been implemented using SageMath \cite{add:sage} and
tested on a personal machine equipped with an
Intel\textsuperscript{\textregistered} Core\textsuperscript{(TM)}
i5-8250U CPU @ 1.60GHz.  The source code is available on
GitHub.\footnote{\if\IEEEversion1
  \href{https://github.com/alaincouvreur/EA_equivalence_for_quadratic_functions}{https://github.com/alaincouvreur/EA\_equivalence\_for\_}
  \href{https://github.com/alaincouvreur/EA_equivalence_for_quadratic_functions}{quadratic\_functions}
  \else
  \href{https://github.com/alaincouvreur/EA_equivalence_for_quadratic_functions}{https://github.com/alaincouvreur/EA\_equivalence\_for\_quadratic\_functions}
  \fi
}\footnote{The variant of the algorithm mentioned in
  Remark~\ref{rem:variant} has not been implemented.}  We ran our
implementation for various cases, which are described in the sequel.

\paragraph{Random quadratic functions}
The first case we considered is when $F$ is a random quadratic
function and $G = A \circ F \circ B + C$ and $A, B, C$ are drawn at
random.  In this situation, the behaviour and the running time highly
depend on the rank distribution, as shown in
Table~\ref{tab:algo_tests}.  

\if\IEEEversion1
\begin{table*}[h!tb]
\else
\begin{table}[h!tb]
  \fi
  \centering
  \renewcommand\arraystretch{1.3}
  \setlength\tabcolsep{4pt}
  \tiny
  \begin{tabular}{ccccc}
    \toprule
    {\normalsize $m$} & {\normalsize $n$} & {\normalsize Rank distribution} &
    {\normalsize Number of tries} & {\normalsize
    Time (seconds)} \\
    \midrule
    6 & 6 & $[1, 0, 0, 2, 18, 43, 0]$ & 21 & 0.68\\
    \rowcolor{gray!10}
    6 & 6 & $[1, 0, 0, 1, 24, 38, 0]$ & 386 & 5.36\\
    6 & 6 & $[1, 0, 0, 0, 27, 36, 0]$ & 4605 & 61.1\\
    \midrule
    \rowcolor{gray!10}
    6 & 8 & $[1,0,0,0,9,96,150]$ & 127 & 15.5 \\
    6 & 8 & $[1, 0, 1, 12, 98, 144]$ & 24 & 13.8 \\
    \midrule
    \rowcolor{gray!10}
    8 & 6 & $[1, 0, 0, 0, 0, 63, 0]$ & 11067 & 195.1 \\
    8 & 6 & $[1, 0, 0, 0, 3, 60, 0]$ & 318 & 53.4 \\
    \midrule
    \rowcolor{gray!10}
    8 & 8 & $[1,0,0,0,0,6,93,156,0]$ & 95 & 20.3 \\
    8 & 8 & $[1,0,0,0,1,13,104,137,0]$ & 36 & 15.3 \\
    \bottomrule    
  \end{tabular}
  \caption{This table lists some experiments conducted on pairs of
    random quadratic functions that are EA-equivalent. Each row
    corresponds to a single experiment. All of them consisted in
    guessing $s = 3$ vectors $(v_1, v_2, v_3)$ and with threshold
    $T = 10$.  The fourth column gives the number of iterations, {\em
      i.e.} the number of times we made a guess before finding a triple
    $(v_1, v_2, v_3)$ leading to a solution.}
  \label{tab:algo_tests}
  \if\IEEEversion1
\end{table*}
\else
\end{table}
\fi

The case when $F, G$ are both drawn at random and hence
are not EA-equivalent is in general much more efficiently solved since
most of the time, their rank tables are different which permits to conclude
that the functions are not EA-equivalent.

\paragraph{APN functions}
We also considered the case already mentioned in
Example~\ref{ex:rank_of_system_for_APN} by considering $F$
to be $F(x) = x^3$ when regarded as a function from $\F_{2^n}$
into $\F_{2^n}$.

We first tested the case where $G$ is EA-equivalent to $F$ for
$n \in \{7,8,9\}$.  Since we noticed in
Example~\ref{ex:rank_of_system_for_APN} that the best choice between
$s=2$ and $s=3$ was unclear, we tested both for $n=7$ and observed
that in practice choosing $s=2$ leads to better running times. For
each set of parameters we ran $50$ tests and the average times and
number of tries are presented in Table~\ref{tab:algo_APN_tests}. Note
that, each time, the number of tries we need to perform before
guessing the good vectors is much smaller than $R^{s} = 2^{2n}$. This
is explained by the fact that $F$ has a lot of ``automorphisms'', {\em
  i.e.}  there are a lot of nontrivial triples $(A, B, C)$ such that
$F = A \circ F \circ B + C$. Consequently, there are a lot of triples
$(A, B, C)$ such that $G = A \circ F \circ B + C$ making the search much
easier.

\if\IEEEversion1
\begin{table*}[h!tb]
\else
\begin{table}[h!tb]
\fi  
  \centering
  \renewcommand\arraystretch{1.3}
  \setlength\tabcolsep{4pt}
  \tiny
  \begin{tabular}{cccc}
    \toprule
    {\normalsize $m=n$} & {\normalsize $s$} & 
    {\normalsize Average number of tries} & {\normalsize
    Average time (seconds)} \\
    \midrule
    7 & 2 & 16.8 & 11.47\\
    \rowcolor{gray!10}
    7 & 3 & 1954.72 & 34.84\\
    \midrule
    8 & 2 & 39.2 & 89.49\\
    \midrule
    \rowcolor{gray!10}
    9 & 2 & 47.83 & 211 \\
    \bottomrule    
  \end{tabular}
  \caption{This table lists some experiments conducted for the APN
    function $F(x)=x^3$ (regarded as a function
    $\F_{2^n} \rightarrow \F_{2^n}$).  Each row corresponds to 50
    experiments. The threshold was set to $T = 10$. The third column
    gives the average number of times we made a guess and the last one
    is the average running time.}
  \label{tab:algo_APN_tests}
\if\IEEEversion1
\end{table*}
\else
\end{table}
\fi

Finally, we also tested the case where $F(x) = x^3$ and $G(x) = x^5$
and hence the two functions are not EA-equivalent. In this situation,
the algorithm is much less efficient since it is necessary to explore
all the possible guesses. Therefore, the running time should
not be as variable as in the case of EA-equivalent functions (where
the algorithm stops as soon as the guess is good and hence does not
have to explore the full list of possible guesses). For this reason,
we ran only one test for $n = 7$. Our program proved the functions
were non equivalent in less than 3 hours.

 \section{EA-partitioning}
\label{sec:invariants}

In this section, unlike above, we will focus on
EA-partitioning rather than EA-recovery. EA-partitioning can be solved by combining an efficient EA-invariant and an algorithm for EA-recovery.
In the particular case of quadratic APN functions, we
introduce in Section~\ref{sec:invariants-list} a new family of invariants based
on the properties of the ortho-derivative, and we compare its performance
with the previously known invariants.

We then show how these invariants can be used for partitioning the
CCZ-class of a given function into EA-classes, and we detail two
applications. First, Section~\ref{sec:applications-kim} presents
partitions of the CCZ-classes of all the 6-bit APN quadratic functions
into EA-classes. Then, in order to compare the different invariants, we look
  at the various partitions that they define over the set of all known 8-bit
  APN functions. Our results are in
  Section~\ref{sec:applications-8bit}. As we will see, in the case of
  quadratic APN functions, the invariants based on the
  ortho-derivative are by far the most discriminating.

\subsection{Solving the EA-partitioning Problem}
\label{sec:invariants-list}
A standard approach in solving the EA-partitioning problem consists of the following two consecutive steps:
\begin{enumerate}
  \item Compute some quantities for each function that are invariant under
EA-equivalence. The set of all these quantities is then used as a
bucket label: if two distinct functions fall in different buckets,
then they cannot be EA-equivalent.
\item In the case where several functions
are in the same bucket, solve the EA-recovery problem
for each pair of functions in order to sort them in different
EA-classes.
\end{enumerate}
It is worth noting that this algorithm is correct, i.e., returns the solution, if the EA-recovery algorithm used in the second step does not give any false negatives. This is the case for instance of the variant of our EA-recovery algorithm described in Remark~31.

This general approach could be applied to other forms of equivalence,
but we focus here on the cases of CCZ- and extended affine
equivalence. The invariants discussed in this paper are summarized in
Table~\ref{tab:list-invariants}, along with their time
complexities.\footnote{The number of subspaces, the thickness
  spectrum, and the subspaces among non-bent components all rely on a
  vector space search which can be done using the algorithm
  from~\cite{AC:BonPerTia19}. However, as this algorithm is
  essentially a highly optimized tree search, its time complexity is
  hard to predict.} Their definitions, as well as details on their efficiency
and distinguishing power, can be found in the very recent survey paper
by Kaleyski~\cite{add:Kaleyski21}. Note that other \emph{ad hoc} invariants can
be defined such as the one based on the presence of permutations in
the EA-class that we apply to the CCZ-class of the Kim mapping in
Section~\ref{sec:applications-kim}. Other invariants can be defined, for instance invariants
  for code equivalence, like the automorphism and multiplier groups of
  the code associated to a vectorial Boolean
  function~\cite{add:BDKM09}. Those listed in
Table~\ref{tab:list-invariants} are the ones which apply to the largest or most
  interesting classes of functions.
\if\IEEEversion1
  \begin{table*}[h!tb]
  \else
  \begin{table}[h!tb]
    \fi
  \centering
  \renewcommand\arraystretch{1.3}
  \setlength\tabcolsep{4pt}
  \begin{tabular}{clccc}
    \toprule
    Equivalence & Invariant & Condition & Complexity & Ref. \\

    \midrule

    \multirow{6}{*}{CCZ}
    & Extended Walsh spectrum & -- & $n2^{n+m}$ & -- \\
    & Differential spectrum & -- & $2^{n+m}$ & -- \\
    & $\#$ of subspaces with $\dim = n$ in $\walshZeroes{F}$ & -- & ? & \cite{add:CanPer18} \\
    & $\Gamma$-rank & $m=n$ & $2^{\omega n}$ & \cite{add:BDKM09} \\
    & $\Delta$-rank & $m=n$ & $2^{\omega n}$ & \cite{add:BDKM09} \\
    & Distance invariant $\Pi$ & &$2^{3n}$  & \cite{add:BCHK20}\\

    \midrule
    \multirow{6}{*}{EA}

    & Algebraic degree & -- & $m 2^{n}$ & -- \\
    & Thickness spectrum & -- & ? & \cite{add:CanPer18} \\
    & $\Sigma^{k}$-multiplicities & $k, n$ even & $n2^{n+m}$ & \cite{add:Kaleyski20} \\
    & $\#$ of subspaces in non-bent components & $\deg(F)=2$ & ? & \cite{add:BCCCV21,add:GolPav21} \\
    & \multirow{2}{*}{Affine equivalence class of $\pi_{F}$} & $\deg(F)=2$, & \multirow{2}{*}{$2^{n+m}$} & \multirow{2}{*}{Sec.~\ref{sec:invariants-list-new}} \\[-4pt]
    & & APN, $m=n$ & \\
    \bottomrule
  \end{tabular}
  \caption{Main invariants for CCZ- and
      EA-equivalences.}
  \label{tab:list-invariants}
\if\IEEEversion1
\end{table*}
\else
\end{table}
\fi

\subsubsection{Invariants from the Literature}
\label{sec:invariants-list-old}

\paragraph{CCZ-invariants.}
It is well-known that both the differential and the extended Walsh
spectra are constant within a CCZ-class, and hence within an
EA-class. This is also the case of the \emph{$\Gamma$-rank} and the
\emph{$\Delta$-rank}~\cite{add:BDKM09}.

The {\em distance invariant} \(\Pi\) is another CCZ-invariant, recently
introduced in~\cite{add:BCHK20}. Its complexity is \(\bigoh{2^{3n}}\)
but can be significantly reduced in the case
of quadratic functions.  It has been observed in~\cite{add:BCHK20} that the distinguishing
power of this invariant is very good for quadratic APN functions over
\(\F_2^8\), but very low in odd dimension.

\paragraph{EA-invariants.}
While the previously mentioned quantities are invariant under
CCZ-equivalence, the algebraic degree is constant within 
an EA-class, but not, in general, within a CCZ-class. Another EA-invariant introduced in~\cite{add:CanPer18} is based on the linear subspaces which are contained in the {\em Walsh zeroes} of the function.

\begin{definition}[Walsh Zeroes]\cite[Def.~5]{add:CanPer18}
  Let $F : \ftwo^{n} \to \ftwo^{m}$ be a function and let $\walsh{F}$
  be its Walsh transform, so that $\walsh{F}(a, b) = \sum_{x \in
    \ftwo^{n}}(-1)^{a \cdot x + b \cdot F(x)}$. We call \emph{Walsh
    zeroes} of $F$ the set 
  \begin{equation*}
    \walshZeroes{F} := \left\{ (a, b) \in \ftwo^{n}\times \ftwo^{m}, \walsh{F}(a, b) =
    0\right\} \cup \{ (0,0) \}.
  \end{equation*}
\end{definition}

The \emph{thickness
  spectrum} is then an EA-invariant defined in~\cite[Def.~9]{add:CanPer18} which is derived from the structure of the Walsh zeroes.
\begin{definition}[Thickness Spectrum]\cite[Def.~9]{add:CanPer18}
  Let $F : \ftwo^{n} \to \ftwo^{m}$ be a function, and let
  $\walshZeroes{F}$ be the set of its Walsh zeroes. Furthermore, let
  $\{V_{i}\}_{0 \leq i < \ell}$ be the set of all $\ell$ vector spaces
  of dimension $n$ that are contained in $\walshZeroes{F}$. The {\em thickness} of a subspace \(V \subset \ftwo^{n} \times \ftwo^{m}\) is the dimension of its projection on
  $\{ (0, x), x \in \ftwo^{m} \}$. 

  The \emph{thickness spectrum} of $F$ is the set of positive integers $\{
  N_{j} \}_{0 \leq j \leq n}$ such that there are exactly
  $N_{j}$ spaces $V_{i}$ having thickness~$j$. 
\end{definition}

Recently, another multiset which is again an EA-invariant but not a
CCZ-invariant was presented by Kaleyski~\cite{add:Kaleyski20}.
\begin{definition}[$\Sigma^{k}$-multiplicities]
  Let $F : \ftwo^{n} \to \ftwo^{m}$ be a function, $k$ be an even
  integer, and let $\Sigma_{k}^{F}(t)$ be defined for any $t \in \ftwo^{n}$ as
  \if\IEEEversion1
  \begin{align*}
    \Sigma_{k}^{F}(t) :=~
    \Bigg\{ \sum_{i=0}^{k-1}F(x_{i}) ~:~ &\{ x_{0},...,x_{k-1} \}
                                       \subseteq \ftwo^{n},\\
    &\text{ and }~ \sum_{i=0}^{k-1}x_{i} = t
    \Bigg\} ~.
  \end{align*}
  \else
  \begin{equation*}
    \Sigma_{k}^{F}(t) :=~
    \left\{ \sum_{i=0}^{k-1}F(x_{i}) : \{ x_{0},...,x_{k-1} \}
      \subseteq \ftwo^{n}, ~\textrm{ and }~ \sum_{i=0}^{k-1}x_{i} = t
    \right\} ~.
  \end{equation*}
  \fi
  We then call \emph{$\Sigma^{k}$-multiplicities} of $F$ the multiset of the
  multiplicities of $\Sigma_{k}^{F}(0)$.
\end{definition}
As established in Proposition~1 of~\cite{add:Kaleyski20}, the multiset of the
$\Sigma^{k}$-multiplicities is an EA-invariant when $k > 2$ is
even. On the other hand, it is easy to verify experimentally that it
is not a CCZ-invariant. In the case of APN functions, this invariant has the same
distinguishing power as the distance
invariant~\cite{add:Kaleyski21}. In particular, it is only useful in
even dimensions, in which it has a very good distinguishing power.

If the function under consideration has some bent components (i.e.,
components such that all elements in their Walsh spectrum equal \(\pm
2^{n/2}\)), then the
number of vector subspaces of a given dimension that are contained
within the set of such components is also an EA-invariant. Similarly,
the number of subspaces contained in the set of non-bent components is
also constant within an EA-class, as mentioned independently
in~\cite{add:BCCCV21} and~\cite{add:GolPav21}.

Functions computing all these invariants in the case where $m=n$ have been
added to the \texttt{sboxU} library.\footnote{\url{https://github.com/lpp-crypto/sboxU}}

\subsubsection{Invariants of Quadratic APN Functions Based on the Ortho-Derivative}
\label{sec:invariants-list-new}

A highly specific but very common case of EA-partitioning consists in
partitioning a set of quadratic APN functions into EA-equivalence classes. The
EA-recovery algorithm described in Section~\ref{sec:jacobian} can then be
used but we could hope for a faster algorithm which efficiently
distinguishes most EA-classes without recovering the involved triple
of affine functions \((A,B,C)\).  To this end, we use a notion related
to the derivatives of a quadratic function. This concept has already
been used in several works,
e.g.~\cite{add:CarChaZin98,add:Kyu07,add:Gor19,add:Gor20}, but without
a well-defined name.

\begin{definition}\label{def:ortho-derivative}
Let \(F: \F_2^n \rightarrow \F_2^n\) be a quadratic function. We say that \(\pi: \F_2^n \rightarrow \F_2^n\) is an {\em ortho-derivative for \(F\)} if, for all \(x\) and \(a\) in \(\F_2^n\),
\[\pi(a) \cdot \big(F(x) + F(x+a) + F(0) + F(a)\big) = 0.\]
\end{definition}

Intuitively, the fact that $F$ is quadratic implies that $x \mapsto
F(x) + F(x+a) + F(0) + F(a)$ is linear, and thus that its image set is
a vector space with a well-defined orthogonal complement.

Since a quadratic function \(F\) is APN if and only if the sets
\(\{F(x) + F(x+a) + F(0) + F(a), x \in \F_2^n\}\) are hyperplanes
(i.e., subspaces of codimension~\(1\)) for all nonzero \(a \in \F_2^n\), we immediately deduce the following result.
\begin{lemma}
Let \(F: \F_2^n \rightarrow \F_2^n\) be a quadratic function. Then,
\(F\) is APN if and only if it has a unique ortho-derivative \(\pi\)
such that \(\pi(0)=0\) and \(\pi(x) \neq 0\) for all nonzero~\(x \in \F_2^n\).
\end{lemma}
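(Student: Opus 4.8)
The plan is to reformulate the ortho-derivative condition as a collection of \emph{independent} pointwise constraints, one for each $a \in \F_2^n$, and then read off existence and uniqueness directly from the ranks of the matrices $\Jlin{F}{a}$. The bridge to the APN property will then be supplied by Corollary~\ref{cor:rank_table_APN}.

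First I would rewrite the defining identity using Proposition~\ref{prop:derivative}: for a quadratic $F$ one has $F(x) + F(x+a) + F(0) + F(a) = \Delta_a F(x) + \Delta_a F(0) = \Jlin{F}{a} \cdot x$, so the map $x \mapsto F(x)+F(x+a)+F(0)+F(a)$ is exactly the linear map $x \mapsto \Jlin{F}{a}\cdot x$, whose image is $\im \Jlin{F}{a}$. As $x$ ranges over $\F_2^n$, this image is swept out, so $\pi$ is an ortho-derivative if and only if, for every $a$, the vector $\pi(a)$ is orthogonal to all of $\im \Jlin{F}{a}$, i.e.\ $\pi(a) \in (\im \Jlin{F}{a})^\perp$. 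The crucial observation is that these constraints decouple: the value $\pi(a)$ is constrained only through $a$, so an ortho-derivative is precisely a function $a \mapsto \pi(a)$ with $\pi(a) \in (\im \Jlin{F}{a})^\perp$ for each $a$, and the two extra requirements $\pi(0)=0$ and $\pi(x) \neq 0$ for $x \neq 0$ can therefore be imposed one value of $a$ at a time.

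Next I would control the spaces $(\im \Jlin{F}{a})^\perp$. For $a = 0$ the matrix $\Jlin{F}{0}$ is zero, so $(\im \Jlin{F}{0})^\perp = \F_2^n$ and the condition $\pi(0)=0$ merely selects a single admissible value there. For nonzero $a$, Corollary~\ref{cor:a_in_kerJ(a)} gives $\Jlin{F}{a}\cdot a = 0$, hence $\rank \Jlin{F}{a} \leq n-1$ and $(\im \Jlin{F}{a})^\perp$ has dimension at least $1$; in particular it always contains a nonzero vector. This already guarantees that a function $\pi$ meeting all the requirements \emph{exists} for any quadratic $F$: take $\pi(0)=0$ and, for each nonzero $a$, any nonzero vector of $(\im \Jlin{F}{a})^\perp$.

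Finally I would settle uniqueness, which is where the APN property enters. The admissible $\pi$ is unique if and only if, for every nonzero $a$, the space $(\im \Jlin{F}{a})^\perp$ contains exactly one nonzero vector, i.e.\ $\dim (\im \Jlin{F}{a})^\perp = 1$, equivalently $\rank \Jlin{F}{a} = n-1$; by Corollary~\ref{cor:rank_table_APN} this holds for all nonzero $a$ precisely when $F$ is APN. In the APN case this forces $\pi(a)$ to be the unique nonzero element of the one-dimensional space $(\im \Jlin{F}{a})^\perp$, giving uniqueness; conversely, if $F$ is not APN then some $(\im \Jlin{F}{a_0})^\perp$ has dimension $\geq 2$ and thus at least two distinct nonzero vectors, and swapping $\pi(a_0)$ between them while leaving every other value unchanged yields two distinct admissible ortho-derivatives, contradicting uniqueness. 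The only points requiring care are the degenerate $a=0$ case and the verification that the pointwise constraints genuinely decouple; once these are in hand the equivalence is immediate.
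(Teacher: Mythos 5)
Your proof is correct and takes essentially the same route as the paper, which deduces the lemma in one line from the fact that a quadratic $F$ is APN if and only if each image set $\{F(x)+F(x+a)+F(0)+F(a) : x \in \F_2^n\}$ is a hyperplane for all nonzero $a$ --- precisely the condition $\rank \Jlin{F}{a} = n-1$ that you invoke via Corollary~\ref{cor:rank_table_APN}. Your write-up only makes explicit what the paper leaves implicit: the pointwise decoupling of the constraints $\pi(a) \in (\im\,\Jlin{F}{a})^{\perp}$, the degenerate case $a=0$, and the existence (but non-uniqueness) of admissible ortho-derivatives when $F$ is not APN.
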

From now on, we will focus on quadratic APN functions and on the
unique ortho-derivative defined as in the previous lemma.  This
function is strongly related to the Jacobian matrix introduced in
Section~\ref{sec:jacobian} as explained in the following statement.

\begin{proposition}\label{prop:ortho-deriv_and_Jlin}
  Let \(F: \F_2^n \rightarrow \F_2^n\) be a quadratic APN function.
  For any $a \in \F_2^n\setminus \{0\}$, the vector $\pi_F(a)$
  is the unique nonzero vector in the left kernel of $\Jlin{F}{a}$.
\end{proposition}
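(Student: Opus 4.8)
The plan is to read off the defining property of the ortho-derivative as a statement about the left kernel of $\Jlin{F}{a}$, using the identity already established in Proposition~\ref{prop:derivative}, and then to invoke the APN hypothesis to force that kernel to be one-dimensional.

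First I would recall that, by Proposition~\ref{prop:derivative}, for every nonzero $a$ and every $x \in \F_2^n$,
\[
  \Jlin{F}{a}\cdot x = \Delta_a F(x) + \Delta_a F(0) = F(x+a) + F(x) + F(a) + F(0).
\]
Thus, as $x$ ranges over $\F_2^n$, the products $\Jlin{F}{a}\cdot x$ sweep out exactly the set $\{F(x) + F(x+a) + F(0) + F(a) : x \in \F_2^n\}$ occurring in Definition~\ref{def:ortho-derivative}; this set is precisely the image (column space) of the matrix $\Jlin{F}{a}$. The defining relation
\[
  \pi_F(a)\cdot\big(F(x) + F(x+a) + F(0) + F(a)\big) = 0 \quad\text{for all } x
\]
therefore says exactly that $\pi_F(a)$ is orthogonal to every vector in the image of $\Jlin{F}{a}$. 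Since the standard inner product on $\F_2^n$ is non-degenerate, the orthogonal complement of the column space coincides with the left kernel, so this is equivalent to $\pi_F(a)$ lying in the left kernel of $\Jlin{F}{a}$.

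It remains to prove uniqueness, and this is the only place where the APN hypothesis is needed. Because $F$ is quadratic and APN, Corollary~\ref{cor:rank_table_APN} tells us that $\Jlin{F}{a}$ has rank exactly $n-1$ for every nonzero $a$; hence its left kernel has dimension $n-(n-1)=1$. Over $\F_2$ a one-dimensional subspace consists of the zero vector and a single nonzero vector. Finally, the lemma immediately preceding the statement guarantees that $\pi_F(a)\neq 0$ for all nonzero $a$, so $\pi_F(a)$ must be that unique nonzero element of the left kernel, which is the claim.

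I do not expect a serious obstacle here: the entire argument hinges on recognizing that the image set in the definition of the ortho-derivative is the column space of $\Jlin{F}{a}$, after which both membership and uniqueness are immediate. The only point requiring care is the bookkeeping of dimensions, namely invoking the rank-$(n-1)$ characterization of quadratic APN functions (Corollary~\ref{cor:rank_table_APN}) to reduce the left kernel to a single nonzero vector.
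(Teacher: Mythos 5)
Your proof is correct and follows essentially the same route as the paper: both use Proposition~\ref{prop:derivative} to identify the image set in Definition~\ref{def:ortho-derivative} with the column space of $\Jlin{F}{a}$, so that the defining orthogonality relation places $\pi_F(a)$ in the left kernel. You additionally spell out the uniqueness step via the rank-$(n-1)$ characterization of Corollary~\ref{cor:rank_table_APN}, which the paper's proof leaves implicit; that is a welcome completion rather than a different argument.
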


\begin{proof}
  From Definition~\ref{def:ortho-derivative} together with
  Proposition~\ref{prop:derivative}, we have, for all \(x \in \F_2^n\),
  \if\IEEEversion1
  \begin{align*}
    \pi_F(a) \cdot ( F(x) + F(x&+a) + F(0) + F(a)) \\
    &=
      \pi_F(a)\cdot(\Jlin{F}{a} \cdot x)\\
    &=0.
  \end{align*}
  \else
  \[
    \pi_F(a) \cdot ( F(x) + F(x+a) + F(0) + F(a)) =
    \pi_F(a)\cdot(\Jlin{F}{a} \cdot x)=0\,.
  \]
  \fi
  Thus, the vector $\pi_F(a) \cdot \Jlin{F}{a}$ is orthogonal to any
  $x\in \F_2^n$ and hence is zero.
\end{proof}

\begin{proposition}
  \label{prop:ortho-EA}
  Let $F : \ftwo^{n} \to \ftwo^{n}$ be a quadratic APN function and let
  $\pi_{F}$ be its ortho-derivative. Furthermore, let $A$ and $B$ be
  affine permutations of $\ftwo^{n}$ and $C : \ftwo^{n} \to \ftwo^{n}$
  be an affine function. Finally, let $A_0$ and \(B_0\) be the linear
  parts of $A$ and $B$ respectively. Then the ortho-derivative of $G
  : x \mapsto (A \circ F \circ B)(x) + C(x)$ is
  \begin{equation*}
    \pi_{G} = (A_0^{T})^{-1} \circ \pi_{F} \circ B_0 ~.
  \end{equation*}
\end{proposition}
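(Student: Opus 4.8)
The plan is to characterize the ortho-derivative of $G$ via the left kernel of its linear Jacobian, exactly as Proposition~\ref{prop:ortho-deriv_and_Jlin} does for $F$, and then transport that characterization through the chain-rule formula for quadratic functions. The key observation is that $G = A \circ F \circ B + C$ with $C$ affine means $G$ is again quadratic APN (APN-ness and the quadratic degree are preserved under EA-equivalence), so $G$ has a well-defined ortho-derivative $\pi_G$, and by Proposition~\ref{prop:ortho-deriv_and_Jlin} the vector $\pi_G(a)$ is the unique nonzero element of the left kernel of $\Jlin{G}{a}$ for each nonzero $a$.

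First I would invoke Proposition~\ref{prop:chain_rule_for_jacqueline}, which gives
\[
  \Jlin{G}{a} = A_0 \cdot \Jlin{F}{B(a)} \cdot B_0
\]
for all $a \in \F_2^n$. The crucial point is that the linear Jacobian of $G$ is evaluated at $a$ on the left but the Jacobian of $F$ is evaluated at $B(a)$ — however, since only the \emph{linear} part $\Jlin{F}{\cdot}$ appears and this is genuinely linear in its argument, I would replace $B(a) = B_0 a + b$ by $B_0 a$: the constant contribution $b$ can be absorbed, because $\Jlin{F}{\cdot}$ is a linear function of its spatial argument, so $\Jlin{F}{B(a)} = \Jlin{F}{B_0 a + b}$ and I should check that only the linear part $B_0 a$ matters for the left-kernel computation. (This is where I would need to be slightly careful; see below.)

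Next I would compute the left kernel of $\Jlin{G}{a}$ directly. A row vector $y$ lies in the left kernel of $A_0 \cdot \Jlin{F}{B_0 a} \cdot B_0$ iff $y \cdot A_0 \cdot \Jlin{F}{B_0 a} \cdot B_0 = 0$; since $B_0$ is non-singular this is equivalent to $y \cdot A_0 \cdot \Jlin{F}{B_0 a} = 0$, i.e. to $(A_0^T y)$ lying in the left kernel of $\Jlin{F}{B_0 a}$ (reading $y \cdot A_0 = (A_0^T y^T)^T$). By Proposition~\ref{prop:ortho-deriv_and_Jlin} applied to $F$, the unique nonzero element of that left kernel is $\pi_F(B_0 a)$. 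Therefore $A_0^T y = \pi_F(B_0 a)$, giving $y = (A_0^T)^{-1} \pi_F(B_0 a)$, which is precisely the claimed formula $\pi_G = (A_0^T)^{-1} \circ \pi_F \circ B_0$. Uniqueness of the nonzero left-kernel vector transfers automatically through the non-singular maps $A_0, B_0$, so this $y$ is the unique nonzero solution and hence equals $\pi_G(a)$.

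The main obstacle I anticipate is the bookkeeping around the affine shift $b$ in $B$ and the constant in $C$, together with keeping the transpose conventions straight. The formula claims $\pi_G(a)$ depends on $\pi_F(B_0 a)$, not $\pi_F(B(a))$, so I must confirm that the \emph{linear part} $\Jlin{F}{\cdot}$ evaluated at $B(a)$ coincides with its evaluation at $B_0 a$ for the purpose of kernels — which follows because $\Jlin{F}{x}$ is itself $\F_2$-linear in $x$ and because the ortho-derivative definition already factors through $\Jlin{F}{a}\cdot x$ rather than through $F$ directly; the additive constant $b$ contributes $\Jlin{F}{b}$, and I would verify that this term does not alter the left kernel, or more cleanly, observe that Proposition~\ref{prop:chain_rule_for_jacqueline} as stated involves $\Jlin{F}{B(x)}$ and that only the multiplicative structure by $A_0$ and $B_0$ governs the kernel. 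The affine function $C$ drops out entirely, which is exactly why equation~\eqref{eq:main} of Proposition~\ref{prop:chain_rule_for_jacqueline} is free of $C$ — this is the conceptual reason the ortho-derivative is an EA-invariant that ignores the affine part $C$.
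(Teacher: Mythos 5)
Your proposal is correct and follows essentially the same route as the paper's proof: both identify $\pi_G(a)$ as the unique nonzero vector in the left kernel of $\Jlin{G}{a}$ via Proposition~\ref{prop:ortho-deriv_and_Jlin}, invoke the chain rule of Proposition~\ref{prop:chain_rule_for_jacqueline}, and transport the kernel through the invertible matrices $A_0$ and $B_0$ (the paper merely runs the computation in the verification direction, checking that $(A_0^{T})^{-1}\pi_F(B_0 a)$ annihilates $\Jlin{G}{a}$ on the left, while you solve for the kernel directly). One caution on the point you flag yourself: your fallback argument that the constant $b$ ``contributes $\Jlin{F}{b}$'' but ``does not alter the left kernel'' is not a valid justification --- adding a fixed nonzero matrix to $\Jlin{F}{B_0 a}$ would in general change the left kernel. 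The correct resolution is that this term never appears: carrying the chain-rule computation through carefully, using that $\Jlin{F}{\cdot}$ is linear in its argument so that $\Jlin{F}{B(x)} + \Jlin{F}{B(0)} = \Jlin{F}{B_0 x}$, gives exactly $\Jlin{G}{x} = A_0 \cdot \Jlin{F}{B_0 x}\cdot B_0$, which is the form the paper uses in its proof of this proposition; alternatively one may assume $b=0$ without loss of generality by absorbing $\Delta_b F \circ B_0$ into $C$, as done in Section~\ref{sec:jacobian-solving}, since $C$ does not enter the final formula.
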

\begin{proof}
  Thanks to Proposition~\ref{prop:ortho-deriv_and_Jlin} we only
  have to prove that $(A_0^T)^{-1}\circ \pi_F \circ B_0 (a)$ is in the
  left kernel of $\Jlin{G}{a}$ for any $a \in \F_2^n$.
  Let $a \in \F_2^{n}\setminus \{0\}$. We have
  \if\IEEEversion1
  \begin{align*}
    \big( \big((A_0^{T})^{-1} \circ \pi_{F} \circ & B_0\big)(a) \big)^T \cdot \Jlin{G}{a}\\
     &=  \pi_{F} (B_0 a)^T \cdot A_0^{-1} \cdot \Jlin{G}{a}.
   \end{align*}
  \else
  \[
    {\left( \left((A_0^{T})^{-1} \circ \pi_{F} \circ B_0\right)(a) \right)}^T \cdot \Jlin{G}{a}
     =  \pi_{F} (B_0 a)^T \cdot A_0^{-1} \cdot \Jlin{G}{a}.
   \]
   \fi
   From Proposition~\ref{prop:chain_rule_for_jacqueline},
   \if\IEEEversion1
   \[\Jlin{G}{a} = A_0 \cdot \Jlin{F}{B_0a} \cdot B_0\]
   \else
   \(\Jlin{G}{a} = A_0 \cdot \Jlin{F}{B_0a} \cdot B_0\)
   \fi
   and hence
   \if\IEEEversion1
   \begin{align*}
     \big( \big(&(A_0^{T})^{-1} \circ \pi_{F} \circ B_0\big) (a)\big)^T
     \cdot \Jlin{G}{a}\\ &= \pi_F(B_0 a)^T \cdot A_0^{-1} \cdot A_0 \cdot \Jlin{F}{B_0 a} \cdot B_0 \\
                       & =
                         \pi_F(B_0a)^T \cdot \Jlin{F}{B_0a} \cdot B_0,
   \end{align*}
   \else
   \begin{align*}
     {\left( \left((A_0^{T})^{-1} \circ \pi_{F} \circ B_0\right) (a)\right)}^T
     \cdot \Jlin{G}{a} &= \pi_F(B_0 a)^T \cdot A_0^{-1} \cdot A_0 \cdot \Jlin{F}{B_0 a} \cdot B_0 \\
                       & =
                         \pi_F(B_0a)^T \cdot \Jlin{F}{B_0a} \cdot B_0,
   \end{align*}
   \fi
   and this last vector is zero by Definition~\ref{def:ortho-derivative}
   together with Proposition~\ref{prop:derivative}.
\end{proof}

\begin{remark}
  A first immediate use of Proposition~\ref{prop:ortho-EA} would
  consist in solving the EA-recovery problem for quadratic APN
  functions, i.e. in finding \((A,B,C)\) such that
  \(G=A \circ F \circ B + C\), by using an algorithm solving the {\em
    affine equivalence}-recovery problem between the ortho-derivatives
  \(\pi_F\) and \(\pi_G\).  Several {\em affine equivalence}-recovery
  algorithms exist in the literature, namely in~\cite{EC:BDBP03} and
  in~\cite{EC:Dinur18}. The former only works for permutations, which
  means that we can use it efficiently to test EA-equivalence when $n$
  is odd, since \(\pi_F\) and \(\pi_G\) are bijective in this
  case~\cite{add:CarChaZin98}. However, the ortho-derivative is not a
  bijection when $n$ is even, meaning that we cannot use the algorithm
  of~\cite{EC:BDBP03}. While the algorithm of~\cite{EC:Dinur18} can
  efficiently handle non-bijective functions, and requires that their
  algebraic degree be {at least} $n-2$, we have found in practice that
  for $n \in \{ 6,8,10 \}$, it in fact requires that the degree be
  $n-1$. Indeed, the algorithm does not succeed in recovering the
  affine equivalence for any of the functions of degree~\((n-2)\) that
  we tried (be they ortho-derivatives or not), since the rank table
  does not provide any information. As we have experimentally observed
  that ortho-derivatives are always of degree $n-2$ (see
  also~\cite{add:Gor20}), this algorithm of~\cite{EC:Dinur18} does not
  work in this context either. Hence, to the best of our knowledge,
  there is no algorithm for efficiently solving the affine
  equivalence recovery problem between the ortho-derivatives \(\pi_F\)
  and \(\pi_G\), implying that the use of ortho-derivatives does not
  enable us to improve the EA-recovery algorithm presented in
  Section~\ref{sec:jacobian-solving}.  Alternatively, we could solve
  the affine equivalence-recovery problem between \(\pi_F\) and
  \(\pi_G\) with an algorithm dedicated to the EA-recovery problem,
  like the code-equivalence algorithm~\cite{add:EdePot09b,add:BDKM09}
  or, when \(n\) is even, the algorithm recently proposed by
  Kaleyski~\cite{add:Kaleyski20}. However, it is unclear how such a
  strategy could improve on the direct application of the same
  algorithm to \((F,G)\).
\end{remark}

Despite this limitation, Proposition~\ref{prop:ortho-EA} still gives
us a very powerful tool to solve the EA-partitioning problem. Indeed, it
implies that if $F$ and $G$ are EA-equivalent quadratic APN functions,
then their ortho-derivatives have to be affine equivalent. If
\(\pi_F\) and \(\pi_G\) are not affine equivalent, then $F$ and $G$
cannot be EA-equivalent (and thus CCZ-equivalent since both notions
coincide when \(F\) and \(G\) are quadratic APN functions~\cite{add:Yoshiara11}). In
practice, we have
found that the differential and extended Walsh spectra of the
ortho-derivatives vary significantly, and in fact provide an EA-invariant which can be
computed very efficiently and has the best distinguishing power among
all invariants from the literature. Indeed, as discussed in
Section~\ref{sec:applications-8bit}, this invariant takes distinct values for all inequivalent quadratic APN functions we have considered.

Note that the algebraic degree of the ortho-derivative cannot be used
as an EA-class invariant. Indeed we have observed that it is always equal to
$n-2$, as conjectured by Gorodilova~\cite{add:Gor20}.

\subsection{Partitioning CCZ-classes into EA-classes}

A very common use-case of EA-partitioning is when we want to partition the CCZ-class of a function.  Indeed, the technique
presented in~\cite{add:CanPer18} enables us to loop through
representatives of all the EA-classes in a CCZ-class. 
This method is derived from the following property related to the Walsh zeroes of the functions.

\begin{proposition}[{\cite{add:CanPer18}}]
  \label{prop:walshzeroes}
  A linear permutation $\mathcal{A}$ of $\ftwo^{2n}$ is admissible for
  a function $F : \ftwo^{n} \to \ftwo^{n}$ if and only if
  $\mathcal{A}^{T}(\spaceInput) \subseteq \walshZeroes{F}$
where \(\spaceInput = \{ (x, 0), x \in \ftwo^{n} \}\).
\end{proposition}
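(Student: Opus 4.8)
The plan is to translate admissibility---the requirement that $\mathcal{A}(\codebook{F})$ be the graph of a function, where $\codebook{F} := \{(x, F(x)) : x \in \F_2^n\}$---into a statement about the vanishing of certain Walsh coefficients, and then to track how those coefficients transform under $\mathcal{A}$. The starting point is the elementary observation that the Walsh transform of $F$ is exactly the Walsh--Hadamard transform of the indicator of its graph: extending the inner product to $\ftwo^{2n}$ by $\scalarprod{(a,b)}{(u,v)} := \scalarprod{a}{u} + \scalarprod{b}{v}$, one has $\walsh{F}(a,b) = \sum_{(u,v) \in \codebook{F}} (-1)^{\scalarprod{(a,b)}{(u,v)}} = \widehat{\charac{\codebook{F}}}(a,b)$, where $\widehat{\;\cdot\;}$ denotes the Walsh--Hadamard transform over $\ftwo^{2n}$.

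First I would record how this transform behaves under $\mathcal{A}$. Since $\mathcal{A}$ is a linear bijection, $\charac{\mathcal{A}(S)}(w) = \charac{S}(\mathcal{A}^{-1}w)$ for any set $S$, and the change of variable $u = \mathcal{A}^{-1}w$ gives $\widehat{\charac{\mathcal{A}(S)}}(\xi) = \widehat{\charac{S}}(\mathcal{A}^{T}\xi)$ for every $\xi \in \ftwo^{2n}$, because $\scalarprod{\xi}{\mathcal{A}u} = \scalarprod{\mathcal{A}^{T}\xi}{u}$. Applied to $S = \codebook{F}$ this yields $\widehat{\charac{\mathcal{A}(\codebook{F})}}(\xi) = \walsh{F}(\mathcal{A}^{T}\xi)$.

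The heart of the argument, and the step I expect to be the main obstacle, is a Fourier characterization of being a functional graph. A set $S \subseteq \ftwo^{2n}$ with $|S| = 2^n$ is the graph of a function $\ftwo^n \to \ftwo^n$ if and only if its projection onto the first $n$ coordinates is a bijection. Introducing the fiber-counting function $N(u) := \#\{ v \in \ftwo^n : (u,v) \in S\}$, a direct computation gives $\widehat{N}(a) = \sum_u N(u)(-1)^{\scalarprod{a}{u}} = \widehat{\charac{S}}(a,0)$. Now $N$ is non-negative integer-valued with $\sum_u N(u) = |S| = 2^n$ spread over $2^n$ points, so $N \equiv 1$ (i.e. $S$ is a graph) if and only if $N$ is constant, which in turn holds if and only if $\widehat{N}(a) = 0$ for all $a \neq 0$. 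Hence $S$ is a graph if and only if $\widehat{\charac{S}}(a,0) = 0$ for every $a \in \ftwo^n \setminus \{0\}$.

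Finally I would combine the three ingredients. Since $\mathcal{A}$ is bijective, $S := \mathcal{A}(\codebook{F})$ has cardinality $2^n$, so the graph criterion applies and reads $\widehat{\charac{\mathcal{A}(\codebook{F})}}(a,0) = 0$ for all nonzero $a$; by the transformation rule this is $\walsh{F}(\mathcal{A}^{T}(a,0)) = 0$, that is, $\mathcal{A}^{T}(a,0) \in \walshZeroes{F}$, for all $a \in \ftwo^n \setminus \{0\}$. As $\mathcal{A}^{T}(0,0) = (0,0) \in \walshZeroes{F}$ holds trivially by the definition of the Walsh zeroes, the collection of these conditions is precisely $\mathcal{A}^{T}(\spaceInput) \subseteq \walshZeroes{F}$, which establishes the claimed equivalence.
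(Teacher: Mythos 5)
Your proof is correct. Note that the paper states this proposition without proof, importing it from \cite{add:CanPer18}, so there is no in-paper argument to compare against; your route --- identifying $\walsh{F}$ with the Walsh--Hadamard transform of the indicator of the graph $\codebook{F}$, using the transformation rule $\widehat{\charac{\mathcal{A}(S)}}(\xi) = \widehat{\charac{S}}(\mathcal{A}^{T}\xi)$, and characterizing ``being a graph'' by the vanishing of the nontrivial Fourier coefficients of the fiber-counting function of the projection onto the first $n$ coordinates --- is sound, correctly handles the special role of $(0,0)$ in $\walshZeroes{F}$, and is essentially the character-sum argument given in the cited reference.
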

As a consequence, it is possible to loop through representatives of all
the EA-classes contained in the CCZ-class of a function $F$ by
identifying all the vector spaces of dimension $n$ contained in
$\walshZeroes{F}$, deducing the admissible mapping corresponding to
each of them, and then applying it to the graph of $F$.
This theoretical approach can be implemented efficiently using the
vector space search algorithm presented
in~\cite{AC:BonPerTia19}. However, while it allows a full exploration
of the EA-classes contained in the CCZ-class, it may return several
representatives for a given EA-class.
In other words, several functions
obtained with this method may lie in the same EA-class. This situation can then be detected by using some of the previously mentioned EA-invariants.

We will now use our EA-partitioning algorithms for studying the EA-classes
contained in the CCZ-classes of all 6-bit APN quadratic functions, with
a particular focus on the EA-classes that contain permutations. 

We will then need the following result established in~\cite{add:CanPer18}.
\begin{proposition}
  \label{prop:walshzeroes-perm}
  A function $F : \ftwo^{n} \to \ftwo^{n}$ is a permutation if and
  only if $\spaceInput \subset \walshZeroes{F}$ and $\spaceOutput
  \subset \walshZeroes{F}$, where
  \if\IEEEversion1
  \begin{equation*}
    \spaceInput = \{ (x, 0), x \in \ftwo^{n} \},
    \end{equation*}
    and
    \begin{equation*}
  \spaceOutput = \{ (0, x), x \in \ftwo^{n} \}.
\end{equation*}
  \else
  \begin{equation*}
  \spaceInput = \{ (x, 0), x \in \ftwo^{n} \},
  ~\textrm{ and }~
  \spaceOutput = \{ (0, x), x \in \ftwo^{n} \} ~.
\end{equation*}
\fi
\end{proposition}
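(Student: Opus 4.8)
The plan is to translate each of the two subspace inclusions into a statement about the values of the Walsh transform along a coordinate axis, and then to recognize the resulting conditions. First I would unwind the definition of $\walshZeroes{F}$: the inclusion $\spaceInput \subseteq \walshZeroes{F}$ asserts that $\walsh{F}(a,0)=0$ for every nonzero $a$, while $\spaceOutput \subseteq \walshZeroes{F}$ asserts that $\walsh{F}(0,b)=0$ for every nonzero $b$ (the origin $(0,0)$ being handled by the explicit union with $\{(0,0)\}$ in the definition of $\walshZeroes{F}$). The key preliminary observation is that the first inclusion is automatic for \emph{every} function $F$, since
\[
\walsh{F}(a,0) = \sum_{x\in\ftwo^n}(-1)^{\scalarprod{a}{x}} = 2^n\,\charac{a=0},
\]
which vanishes as soon as $a\neq 0$. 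Hence the entire content of the statement lies in the second inclusion, and it suffices to prove that $F$ is a permutation if and only if $\walsh{F}(0,b)=0$ for all $b\neq 0$, i.e. if and only if every nonzero component $\scalarprod{b}{F}$ is balanced.

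For the forward direction I would argue that if $F$ is a permutation, then as $x$ runs over $\ftwo^n$ so does $F(x)$, whence
\[
\walsh{F}(0,b) = \sum_{x\in\ftwo^n}(-1)^{\scalarprod{b}{F(x)}} = \sum_{y\in\ftwo^n}(-1)^{\scalarprod{b}{y}} = 2^n\,\charac{b=0},
\]
which is $0$ for every $b\neq 0$, so $\spaceOutput \subseteq \walshZeroes{F}$ as required.

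For the converse I would count preimages by Fourier inversion. Writing $\charac{z=0}=2^{-n}\sum_{b}(-1)^{\scalarprod{b}{z}}$, the number of preimages of a fixed $y$ is
\[
\#\{x \in \ftwo^n : F(x)=y\} = \frac{1}{2^n}\sum_{b\in\ftwo^n}(-1)^{\scalarprod{b}{y}}\,\walsh{F}(0,b).
\]
Under the hypothesis $\walsh{F}(0,b)=0$ for $b\neq 0$, only the term $b=0$ survives, and since $\walsh{F}(0,0)=2^n$ this sum equals $1$ for every $y$. Thus every value is attained exactly once and $F$ is a permutation.

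I do not expect any serious obstacle: the argument is a short character computation. The only points requiring a little care are the bookkeeping around the convention that $(0,0)$ is adjoined to $\walshZeroes{F}$, so that the coordinate-axis inclusions genuinely reduce to statements about nonzero $a$ and $b$, and the remark that the input-axis inclusion $\spaceInput \subseteq \walshZeroes{F}$ carries no information whatsoever; it is precisely this remark that isolates balance of all nonzero components as the true characterization of bijectivity.
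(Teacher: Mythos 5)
Your proof is correct. The paper itself does not prove this proposition --- it is quoted from \cite{add:CanPer18} without proof --- but your argument is the standard one underlying that reference: the inclusion $\spaceInput \subseteq \walshZeroes{F}$ holds vacuously for every $F$, and $\spaceOutput \subseteq \walshZeroes{F}$ is exactly the statement that every nonzero component $x \mapsto \scalarprod{b}{F(x)}$ is balanced, which by the Fourier-inversion count of preimages characterizes permutations. Both the character computations and the bookkeeping around adjoining $(0,0)$ to $\walshZeroes{F}$ are handled correctly.
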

\subsection{Kim Mapping and Dillon et al.'s Permutation}
\label{sec:applications-kim}

Let $n = m = 6$. The Kim mapping is a quadratic APN function $\kappa$
defined over $\field{6}$ by $\kappa(x) = x^{3} + x^{10} + w x^{24}$,
where $w$ is a root of the primitive polynomial \(x^6+x^4+x^3+x+1\).
It is well-known for being CCZ-equivalent to a
permutation~\cite{add:BDMW10}. The CCZ-class of this permutation has
already been investigated by Calderini~\cite{add:Calderini20}, who in
particular was able to show that there are exactly 13 EA-classes in
it, and that 2 among those contain permutations. His approach, based
on~\cite{add:BudCalVil20}, relied on a generation of EA-class
candidates similar to ours, and then on a pairwise comparison of the
representatives of these candidates to check their EA-equivalence
using a code-based approach. In this section, we illustrate how our
invariant-based approach can be used to obtain similar results. We
also give a precise description of the affine equivalence classes and
EA-classes of permutations in the CCZ-class of $\kappa$, and show
which twists\footnote{  It was shown in~\cite{add:CanPer18} that CCZ-equivalence was the combination of two
  particular cases: EA-equivalence, and so-called
  ``twist-equivalence''. The latter has a parameter denoted $t$ which is
  an integer between $0$ and $n$, and which describes an operation
  that is necessary to go from one EA-class to another.
} are used to go from one EA-class to another.

We ran the bases extraction algorithm of~\cite{AC:BonPerTia19} on
$\walshZeroes{\kappa}$ and found that it contains a total of 222
distinct vector spaces of dimension $6$. We then deduced that the
CCZ-class of the Kim mapping contains at most 222 EA-classes.  We
generated representatives of these 222 possibly distinct EA-classes
and we computed their respective thickness spectra. We found 8
different thickness spectra, showing that there are at least 8~distinct
EA-classes among these~222.

Let us now focus on the EA-classes within the CCZ-class of the Kim mapping that contain permutations.
By calculating the dimension
of the projection on $\spaceOutput$ of each of these 222~spaces, we obtain
the thickness spectrum of $\kappa$:
\begin{equation*}
  \{{0} : 1, ~{1} : 63, ~{2} : 126, ~{3} : 32 \} ~.
\end{equation*}
To enumerate the EA-classes that contain permutations, it is
necessary and sufficient to find pairs $(U, V)$ of \(n\)-dimensional vector spaces such
that \(U \cap V = \{0\}\) and $U \cup V$ spans the full space $(\ftwo^{n})^{2}$
(Proposition~\ref{prop:walshzeroes-perm}). Indeed, we then simply need
to construct a linear permutation $L$ of $(\ftwo^{n})^{2}$ such that
$L(U) = \spaceInput$ and $L(V)=\spaceOutput$, and then to apply
$L^{T}$ to the graph
$\Gamma_{F} = \left\{ (x, F(x)): x \in \ftwo^{n} \right\}$ in order to obtain
the graph $\Gamma_{G} = L^{T}(\Gamma_{F})$ of a permutation $G$.

As the dimension of $\spaceOutput$ here is 6, the only spaces that
could be used to construct such pairs have a thickness of 3. By
examining the subspaces of dimension~\(6\) and thickness~\(3\) of
\(\walshZeroes{\kappa}\), we get that
there exist two sets of 16 vector spaces of dimension $6$ which
we denote $\{ V_{i} \}_{i < 16}$ and $\{ U_{i} \}_{i < 16}$, and which
are such that $V_{i} \cup U_{j}$ spans $(\ftwo^{n})^{2}$ for
any $i, j$. The following proposition will allow us to take
  advantage of these permutations to identify two distinct EA-classes among the
corresponding permutations.

\begin{proposition}
  \label{prop:ea-perm}
  Let $F : \ftwo^{n} \to \ftwo^{n}$ and $F' : \ftwo^{n} \to \ftwo^{n}$
  be two CCZ-equivalent functions, and suppose that there
  exist two sets $\{ L_{i} \}_{0 \leq i < \ell}$ and
  $\{ L_{i}' \}_{0 \leq i < \ell'}$ of linear functions such that:
  \begin{itemize}
  \item $\ell > 0$ or $\ell' > 0$,
  \item $F+L_{i}$ is a permutation for all $i < \ell$, and
  \item $F'+L'_{i}$ is a permutation for all $i < \ell'$.
  \end{itemize}
  Suppose moreover that the sets $\{L_i\}_{0\leq i < \ell}$ and
  $\{L_i'\}_{0 \leq i < \ell}$ are {\em maximal} with respect to this property, {\em
    i.e.}  any set of linear functions $\{M_j\}_j$ (resp. $\{M'_j\}_j$) satisfying the
  above properties is contained in $\{L_i\}_{0\leq i < \ell}$
  (resp. $\{L_i'\}_{0 \leq i < \ell}$).  Then $F$ and $F'$ are
  EA-equivalent if and only if $\ell = \ell'$ and if there exists a
  permutation $\sigma$ of $\{ 0, ..., \ell-1 \}$ such that $F+L_{i}$
  is affine equivalent to $F'+L'_{\sigma(i)}$ for all $i < \ell$.
\end{proposition}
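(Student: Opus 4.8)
The plan is to prove the two implications separately, first reducing the maximality hypothesis to a cleaner form. By maximality, $\{L_i\}_{0\le i<\ell}$ is exactly the set $\mathcal{L}_F := \{L \text{ linear} : F + L \text{ is a permutation}\}$, and likewise $\{L'_i\}$ equals $\mathcal{L}_{F'}$: any single $L$ with $F+L$ a permutation forms a one-element family with the required property, so maximality forces $L \in \{L_i\}$, while conversely each $L_i$ already yields a permutation. Thus $\ell = \#\mathcal{L}_F$ and $\ell' = \#\mathcal{L}_{F'}$, and the whole statement becomes a claim about these two sets and the affine-equivalence classes of the associated permutations.

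For the reverse implication, I would assume $\ell=\ell'$ (hence, by the first bullet, $\ell=\ell'>0$) and fix one index $i$. From $F+L_i$ being affine equivalent to $F'+L'_{\sigma(i)}$, I write $F'+L'_{\sigma(i)} = A \circ (F+L_i) \circ B$ for affine permutations $A,B$. Adding $L'_{\sigma(i)}$ on both sides and expanding $A(y)=A_0 y + a$, I collect every term other than $A_0\,F(B(\cdot))$ into a single function of $x$; since $L_i\circ B$ is affine and $L'_{\sigma(i)}$ is linear, this collected term is affine. Hence $F' = A_0 \circ F \circ B + C$ with $C$ affine and $A_0$ a linear permutation, which is precisely EA-equivalence. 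This direction uses only one matching pair.

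The forward implication carries the real content. Assuming $F' = A\circ F\circ B + C$ with $A(y)=A_0 y + a$, $B(x)=B_0 x + b$, $C(x)=C_0 x + c$ and $A_0,B_0$ invertible, I would, for a linear map $L'$, rewrite $F'+L' = A\circ F\circ B + (C+L')$ and then strip off the outer and inner affine permutations. A direct computation gives
\[
A^{-1}\circ (F'+L')\circ B^{-1} = F + \tilde C, \qquad \tilde C := A_0^{-1}\circ (C+L')\circ B^{-1},
\]
where the constant contributions of $A^{-1}$ cancel over $\ftwo$. Since $A^{-1}$ and $B^{-1}$ are bijections, $F'+L'$ is a permutation if and only if $F+\tilde C$ is; and since adding the constant part of $\tilde C$ to a map cannot affect bijectivity, this holds if and only if $F+L$ is a permutation, where $L := A_0^{-1}(C_0+L')B_0^{-1}$ is the linear part of $\tilde C$. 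The key structural observation is then that $\Phi : L' \mapsto A_0^{-1}(C_0+L')B_0^{-1}$ is an \emph{affine bijection} of the whole space of linear endomorphisms of $\ftwo^n$ (the map $M\mapsto A_0^{-1}MB_0^{-1}$ is a linear automorphism, and we merely translate by $A_0^{-1}C_0B_0^{-1}$). Consequently $\Phi$ restricts to a bijection $\mathcal{L}_{F'}\to\mathcal{L}_F$, which already yields $\ell=\ell'$.

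To obtain the affine-equivalence matching, I would read off from the same computation that $F+L = A^{-1}\circ (F'+L')\circ B^{-1} + \tilde c$, where $\tilde c$ is the constant part of $\tilde C$; absorbing $\tilde c$ into the outer affine permutation shows that $F+L$ and $F'+L'$ are affine equivalent whenever $L=\Phi(L')$. Defining $\sigma$ by $L'_{\sigma(i)} := \Phi^{-1}(L_i)$ then produces a permutation of $\{0,\dots,\ell-1\}$ with $F+L_i$ affine equivalent to $F'+L'_{\sigma(i)}$ for every $i$, which completes the argument. The main obstacle, and essentially the only place demanding care, is the bookkeeping of the affine constants: one must check that the spurious constants generated by $A^{-1}$ cancel and that passing from the affine shift $\tilde C$ to its linear part $L$ is legitimate precisely because adding a constant never changes whether a function is a permutation.
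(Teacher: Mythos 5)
Your proof is correct and, for the substantive reverse implication, follows the same route as the paper: pick a single matching pair $(L_i, L'_{\sigma(i)})$, expand the outer affine permutation over $\ftwo$, and absorb all the residual linear/constant terms into the extended-affine function $C$. For the forward implication, which the paper dismisses as ``clear,'' your argument via the affine bijection $\Phi : L' \mapsto A_0^{-1}(C_0+L')B_0^{-1}$ carrying the maximal set for $F'$ onto that for $F$ (and preserving affine equivalence of the corresponding permutations) is a correct and complete filling-in of the omitted details.
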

\begin{proof}
  If $F$ and $F'$ are EA-equivalent then it is clear that $\ell = \ell'$ and that such a permutation $\sigma$ exists. Let us then focus on the opposite, and suppose that $\ell = \ell'$ and that there exists a permutation $\sigma$ such that $F+L_i$ is affine equivalent to $F'+L'_{\sigma(i)}$. Then in particular there exist $i$ and $j$ such that $F+L_{i}$ is affine equivalent to $F'+L'_{j}$. As a consequence, there also exist affine permutations $A$ and $B$ such that $F+L_{i} = B \circ (F'+L'_{j}) \circ A$, which is equivalent to
  \begin{equation*}
    F = B \circ F' \circ A + \underbrace{B \circ L'_{j} \circ A + B(0) + L_{i}}_{C'} ~.
  \end{equation*}
  We then deduce that $F$ and $F'$ are EA-equivalent.
\end{proof}

Given two vector spaces $V_{i}$ and $U_{i}$, we can construct a linear
mapping $\mathcal{L}$ such that $\mathcal{L}(V_{i}) = \spaceInput$ and
$\mathcal{L}(U_{i})=\spaceOutput$. Applying $\mathcal{L}^{T}$ to the
graph $\{ (x, \kappa(x)) | x \in \ftwo^{6} \}$ then yields the graph
of a permutation. Using this approach, we generated the 256 permutations obtained by mapping $(V_{i}, U_{j})$
to $(\spaceInput, \spaceOutput)$ for all
$i, j \in \{ 0, 1,...,15 \}$, and their inverses that we obtained by mapping
$(U_{i}, V_{j})$ to $(\spaceInput, \spaceOutput)$. Using the algorithm
of Biryukov {\em et al.}~\cite{EC:BDBP03}, we found that these 512
permutations fall into only four distinct affine equivalence
classes. We denote these four affine equivalence classes by
$\mathcal{A}_{k}$ for $k \in \{ 0,1,2,3 \}$.

Let us first exhibit the permutations which belong to these
affine equivalence classes.  Recall that the so-called
\emph{generalized open butterfly} as introduced
in~\cite{add:CanDuvPer17} is a family of permutations which contains
in particular some APN permutations for $n=6$. It was obtained by
generalizing the structure first identified
in~\cite{C:PerUdoBir16}. These permutations are parameterised by two
finite-field elements $\alpha$ and $\beta$ of $\field{3}$. They are
the involutions defined as
$\openbutterfly{}{\alpha}{\beta} : (\field{3})^{2} \to
(\field{3})^{2}$, where
\if\IEEEversion1
\begin{equation*}
  \openbutterfly{}{\alpha}{\beta}(x, y) ~=~ \big(T^{-1}_{y}(x),~ T_{T^{-1}_{y}(x)}(y) \big)
  \end{equation*}
  and
  \begin{equation*}
  T_{y}(x) = (x + \alpha y)^{3} + \beta y^{3} ~.
\end{equation*}
\else
\begin{equation*}
  \openbutterfly{}{\alpha}{\beta}(x, y) ~=~ \big(T^{-1}_{y}(x),~ T_{T^{-1}_{y}(x)}(y) \big)
  ~\textrm{ and }~
  T_{y}(x) = (x + \alpha y)^{3} + \beta y^{3} ~.
\end{equation*}
\fi
We experimentally found that the four affine equivalence classes $\mathcal{A}_{k}$, $k \in \{ 0,1,2, 3 \}$, contain the following representatives, where $\alpha \neq 0$, $\trace(\alpha) = 0$:
\begin{itemize}
\item $\mathcal{A}_{0}$ contains
  $\openbutterfly{}{\alpha}{1}$,

\item $\mathcal{A}_{1}$ contains $\openbutterfly{}{\alpha}{\beta}$ with $\beta = \alpha^{3} + 1/\alpha$,

\item $\mathcal{A}_{2}$ contains the permutations $P = \openbutterfly{}{\alpha}{1} + L$ such that $L$ is linear and $P \not\in \mathcal{A}_{0}$, 
  
\item $\mathcal{A}_{3}$ contains the permutations $P' = \openbutterfly{}{\alpha}{\beta} + L$ such that $L$ is linear and $P' \not\in \mathcal{A}_{1}$.
\end{itemize}
Proposition~\ref{prop:ea-perm} implies the existence of at least two EA-classes containing permutations within the CCZ-class of the Kim mapping: one that contains $\mathcal{A}_{0}$ and $\mathcal{A}_{2}$, and another one that contains $\mathcal{A}_{1}$ and $\mathcal{A}_{3}$. Indeed, if \(\mathcal{A}_{0}\) and \(\mathcal{A}_{1}\) were EA-equivalent, then any permutation of the form \((\openbutterfly{}{\alpha}{1} + L)\) in \(\mathcal{A}_{0} \cup \mathcal{A}_{2}\) would be affine equivalent to some \((\openbutterfly{}{\alpha}{\beta} + L')\), while all such functions belong to another affine equivalence class, included in \(\mathcal{A}_{1} \cup \mathcal{A}_{3}\). Since our approach based on vector spaces enumerated all EA-classes (possibly multiple times), and since all the representatives of EA-classes containing permutations ended up in one of these two EA-classes, we can conclude that there exist exactly two EA-classes of permutations in this CCZ-class.  Furthermore, we also found that if $P \in \mathcal{A}_{0}$ then $P^{-1} \in \mathcal{A}_{0}$. The same holds for $\mathcal{A}_{1}$. On the other hand, if $P \in \mathcal{A}_{2}$ then $P^{-1} \in \mathcal{A}_{3}$, and vice-versa.

\begin{remark}
  All known APN permutations in even dimension operate on 6
  bits. Furthermore, up to extended-affine equivalence, all of them
  are generalized open butterflies in the sense
  of~\cite{add:CanDuvPer17}, and they all belong to one of exactly two EA-classes.
\end{remark}

We also remark that the thickness spectrum of the two EA-classes
containing permutations is the same, namely
\begin{equation*}
  \{ 0 : 1,~ 1 : 7,~ 2 : 14,~ 3 : 58,~ 4 : 42,~ 5 : 84,~ 6 : 16 \} ~.
\end{equation*}
Thus, while having different thickness spectra implies being in
distinct EA-classes, the converse is not true. These two EA-classes also share the
same $\Sigma^{4}$-multiplicities, meaning that the same observation applies
to this invariant, as already noted in~\cite{add:Kaleyski21}.

\paragraph{Picture Representation.}
All these results are summarized in Figure~\ref{fig:ccz-kim}, which
contains a graphical representation of the CCZ-class of the Kim
mapping. It is partitioned into 8 parts, each corresponding to a
different thickness spectrum. We also specified the algebraic degree
$d$ in each of these parts. The Kim mapping itself is in the only
quadratic part. Further, using the main result
of~\cite{add:Yoshiara11}, we can claim that this part corresponds to a unique EA-class.

At this stage, we cannot know how many EA-classes are in each of the
other parts, except for the one containing permutations. As discussed
above, it contains two distinct EA-classes: one containing
$\openbutterfly{}{\alpha}{\beta}$, and one containing
$\openbutterfly{}{\alpha}{1}$. The border between these two EA-classes is
represented by a dashed line, while their affine equivalence classes are
represented by circles.

We used blue arrows to represent the mappings called $t$-twists
(see~\cite{add:CanPer18}) that send the Kim mapping to each part of
the CCZ-class. The value of $t$ is given, and we use different lines
for different $t$ as well. For example, since the open butterflies are
involutions, a $6$-twist (which is the same as an inversion) maps
these functions to themselves. Similarly, each EA-class containing
permutations is obtained from a function EA-equivalent to the Kim
mapping via a $3$-twist.

\if\IEEEversion1
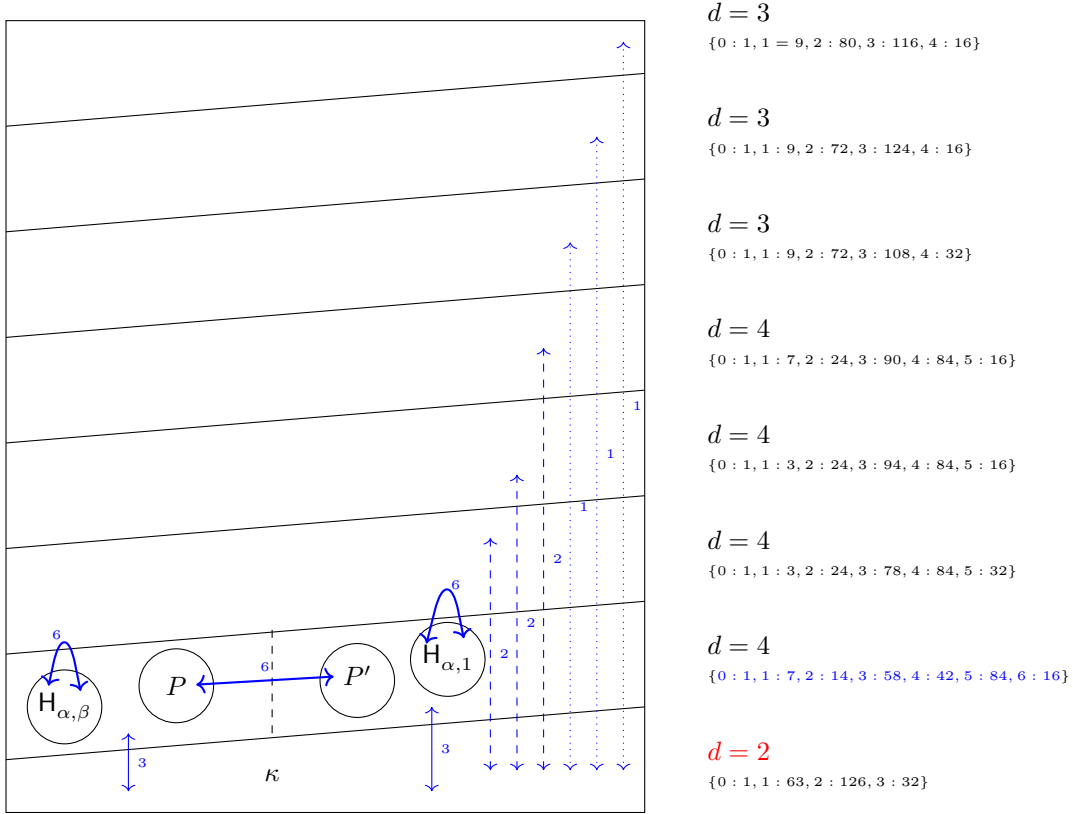
\begin{figure*}[h!tb]
  \else
  \begin{figure}[h!tb]
    \fi
  \centering
  \begin{tikzpicture}[xscale=0.7,yscale=0.7]
\draw (0, 0) rectangle (12, 15) ;
\draw (13, 1.0) node[anchor=base west]{${\color{red}d=2}$} ;
    \draw (13, 0.5) node[anchor=base west]{\tiny $\{{0} : 1, {1} : 63, {2} : 126, {3} : 32\} $} ;
    \draw (0, 1) -- (12, 2) ;
\draw (13, 3.0) node[anchor=base west]{$d=4$} ;
    \draw (13, 2.5) node[anchor=base west]{\tiny $\{{\color{blue}{0} : 1, {1} : 7, {2} : 14, {3} : 58, {4} : 42, {5} : 84, {6} : 16}\} $} ;
    \draw (0, 3) -- (12, 4) ;
\draw (13, 5.0) node[anchor=base west]{$d=4$} ;
    \draw (13, 4.5) node[anchor=base west]{\tiny $\{{0} : 1, {1} : 3, {2} : 24, {3} : 78, {4} : 84, {5} : 32\} $};
    \draw (0, 5) -- (12, 6) ;
\draw (13, 7.0) node[anchor=base west]{$d=4$} ;
    \draw (13, 6.5) node[anchor=base west]{\tiny $\{{0} : 1, {1} : 3, {2} : 24, {3} : 94, {4} : 84, {5} : 16\} $};
    \draw (0, 7) -- (12, 8) ;
\draw (13, 9.0) node[anchor=base west]{$d=4$} ;
    \draw (13, 8.5) node[anchor=base west]{\tiny $\{{0} : 1, {1} : 7, {2} : 24, {3} : 90, {4} : 84, {5} : 16\} $};
    \draw (0, 9) -- (12, 10) ; 
\draw (13, 11.0) node[anchor=base west]{$d=3$} ;
    \draw (13, 10.5) node[anchor=base west]{\tiny $\{{0} : 1, {1} : 9, {2} : 72, {3} : 108, {4} : 32\} $};
    \draw (0, 11) -- (12, 12) ; 
\draw (13, 13.0) node[anchor=base west]{$d=3$} ;
    \draw (13, 12.5) node[anchor=base west]{\tiny $\{{0} : 1, {1} : 9, {2} : 72, {3} : 124, {4} : 16\} $};
    \draw (0, 13) -- (12, 14) ; 
\draw (13, 15.0) node[anchor=base west]{$d=3$} ;
    \draw (13, 14.5) node[anchor=base west]{\tiny $\{{0} : 1, {1} =
      9, {2} : 80, {3} : 116, {4} : 16\} $};
\draw (5, 0.7) node(kim){${\kappa}$} ;
    \draw (1.1, 2.0) node(bb){$\openbutterfly{}{\alpha}{\beta}$} ;
    \draw (3.2, 2.4) node(p){$P$};
    \draw (6.6, 2.6) node(pprime){$P'$};
    \draw (8.3, 2.9) node(b1){$\openbutterfly{}{\alpha}{1}$} ;
\draw [style=dashed] (5, 1.5) -- (5, 3.5) ;
\draw (3.2, 2.4) circle (0.7) ; \draw (6.6, 2.5) circle (0.7) ; \draw (1.1, 2.0) circle (0.7) ;  \draw (8.3, 2.9) circle (0.7) ;  \draw[color=blue,rounded corners=20pt,thick,<->] (1.4, 2.3) -- (1.1, 3.8) -- (0.8, 2.4) node[above,pos=0.5]{\tiny $6$};
    \draw[color=blue,rounded corners=20pt,thick,<->] (7.9, 3.2) -- (8.3, 4.8) -- (8.6, 3.3) node[above,pos=0.5]{\tiny $6$};
    \draw[color=blue,thick,<->] (p) -- (pprime) node[above,pos=0.5]{\tiny $6$};
\draw[color=blue,<->] (2.3, 0.4) -- (2.3, 1.5) node[right,pos=0.5]{\tiny $3$};
    \draw[color=blue,<->] (8, 0.4) -- (8, 2.0) node[right,pos=0.5]{\tiny $3$};
\draw[color=blue,dashed,<->] ( 9.1, 0.8) -- ( 9.1, 5.2) node[right,pos=0.5]{\tiny $2$};
    \draw[color=blue,dashed,<->] ( 9.6, 0.8) -- ( 9.6, 6.4) node[right,pos=0.5]{\tiny $2$};
    \draw[color=blue,dashed,<->] (10.1, 0.8) -- (10.1, 8.8) node[right,pos=0.5]{\tiny $2$};
\draw[color=blue,dotted,<->] (10.6, 0.8) -- (10.6, 10.8) node[right,pos=0.5]{\tiny $1$};
    \draw[color=blue,dotted,<->] (11.1, 0.8) -- (11.1, 12.8) node[right,pos=0.5]{\tiny $1$};
    \draw[color=blue,dotted,<->] (11.6, 0.8) -- (11.6, 14.6) node[right,pos=0.5]{\tiny $1$};
  \end{tikzpicture}
  \caption{The overall structure of the CCZ-class of the Kim
    mapping $\kappa$. {The \(8\) parts correspond to one EA-class or a set
      of EA-classes characterized by a particular thickness
      spectrum.} The circles correspond to the four
    affine equivalence classes of permutations, and arrows correspond to $t$-twists.}
  \label{fig:ccz-kim}
  \if\IEEEversion1
\end{figure*}
\else
\end{figure}
\fi

\subsection{6-Bit Quadratic APN Functions.}
\label{sec:banff}

We looked at the Banff list of the 13 different 6-bit quadratic APN
functions (including the Kim mapping) which can be found for instance
in~\cite{add:BloNyb15} and which is recalled in
Table~\ref{tab:banff}. In Table~\ref{tab:banff-thick}, we list many
properties of the Banff functions, namely their $\Delta$-rank,
$\Gamma$-rank, thickness spectra, as well as upper and lower bounds on
the number of EA-classes within their CCZ-classes.
The upper bound is simply the number of vector spaces of dimension 6
in their Walsh zeroes. The lower bound is obtained for each function
$F$ by iterating through all the vector spaces $V_{i}$ of dimension~6 in its Walsh
zeroes, generating a linear permutation $\LL$ such that
$\LL(\spaceInput) = V_{i}$, and then computing the thickness spectrum
of the function $G$ such that $\codebook{G} =
\LL(\codebook{F})$. Since the thickness spectrum is constant in an
EA-class, two functions with different thickness spectra must be in
distinct EA-classes. Thus, the lower bound is the number of distinct
thickness spectra obtained in this fashion. For the Kim mapping
(number 5 in the list), we increase this number by 1 because we have
established above that two distinct EA-classes share the same
thickness spectrum.

There is a total of 7 distinct thickness spectra among these functions.

\if\IEEEversion1
\begin{table*}[h!tb]
  \else
  \begin{table}[h!tb]
    \fi
    \centering
  \renewcommand\arraystretch{1.6}
  \setlength{\tabcolsep}{8pt}
  \tiny
  \begin{tabular}{r|l}
    \toprule
    {\normalsize $i$} & {\normalsize Univariate representation} \\ \midrule
    \rowcolor{gray!10}
    1  &  $x^{3}$  \\
    2  &  $x^{3} + \alpha^{11} x^{6} + \alpha x^{9}$  \\
    \rowcolor{gray!10}
    3  &  $\alpha x^{5} + x^{9} + \alpha^{4} x^{17} + \alpha x^{18} + \alpha^{4} x^{20} + \alpha x^{24} + \alpha^{4} x^{34} + \alpha x^{40}$  \\
    4  &  $\alpha^{7} x^{3} + x^{5} + \alpha^{3} x^{9} + \alpha^{4} x^{10} + x^{17} + \alpha^{6} x^{18}$  \\
    \rowcolor{gray!10}
    5  &  $x^{3} + x^{10} + \alpha x^{24}$  \\
    6  &  $x^{3} + \alpha^{17} x^{17} + \alpha^{17} x^{18} + \alpha^{17} x^{20} + \alpha^{17} x^{24}$  \\
    \rowcolor{gray!10}
    7  &  $x^{3} + \alpha^{11} x^{5} + \alpha^{13} x^{9} + x^{17} + \alpha^{11} x^{33} + x^{48}$  \\
    8  &  $\alpha^{25} x^{5} + x^{9} + \alpha^{38} x^{12} + \alpha^{25} x^{18} + \alpha^{25} x^{36}$  \\
    \rowcolor{gray!10}
    9  &  $\alpha^{40} x^{5} + \alpha^{10} x^{6} + \alpha^{62} x^{20} + \alpha^{35} x^{33} + \alpha^{15} x^{34} + \alpha^{29} x^{48}$  \\
    10  &  $\alpha^{34} x^{6} + \alpha^{52} x^{9} + \alpha^{48} x^{12} + \alpha^{6} x^{20} + \alpha^{9} x^{33} + \alpha^{23} x^{34} + \alpha^{25} x^{40}$  \\
    \rowcolor{gray!10}
    11  &  $x^{9} + \alpha^{4} x^{10} + \alpha^{9} x^{12} + \alpha^{4} x^{18} + \alpha^{9} x^{20} + \alpha^{9} x^{40}$  \\
    12  &  $\alpha^{52} x^{3} + \alpha^{47} x^{5} + \alpha x^{6} + \alpha^{9} x^{9} + \alpha^{44} x^{12} + \alpha^{47} x^{33} + \alpha^{10} x^{34} + \alpha^{33} x^{40}$  \\
    \rowcolor{gray!10}
    13  &  $\alpha x^{6} + x^{9} + \alpha x^{10} + \alpha^{4} x^{17} + \alpha x^{24} + \alpha x^{33}$  \\
    \bottomrule
  \end{tabular}
  \normalsize
  \caption{The Banff list of quadratic APN functions operating on 6
    bits. The element \(\alpha\) is a root of \(x^6+x^4+x^3+x+1\).}
  \label{tab:banff}
  \if\IEEEversion1
\end{table*}
\else
\end{table}
\fi

\if\IEEEversion1
\begin{table*}[h!tb]
  \else
  \begin{table}[h!tb]
    \fi
  \centering
  \renewcommand\arraystretch{1.4}
  \setlength{\tabcolsep}{4pt}
  \tiny
  \begin{tabular}{r|l|c|cc|cc|l}
    \toprule
    \multirow{2}{*}{{\footnotesize $i$}} &
    \multirow{2}{*}{{\footnotesize Thickness Spectrum}} &
    \multirow{2}{*}{{\footnotesize Linearity}} &
    \multicolumn{2}{c|}{\footnotesize rank} &
    \multicolumn{2}{c|}{{\footnotesize \# EA}} &
    \multirow{2}{*}{{\footnotesize Diff. Spec. of $\pi_{F}$}} \\
    & & & $\Gamma$ & $\Delta$ & {\normalsize min} & {\normalsize max} & \\ \midrule
    \rowcolor{gray!10}
    1   &  $\{ {0}: 1,~ {1}: 63,~ {2}: 126 \}$  & 16 & 1102 & 94 & 3 &  190  & $\{0: 2205, 2: 1764, 8: 63\}$  \\
    2   &  $\{ {0}: 1,~ {1}: 63,~ {2}: 126 \}$  & 16 & 1146 & 94 & 3  &  190 & $\{0: 2583, 2: 1008, 4: 378, 8: 63\}$  \\
    \rowcolor{gray!10}
    3   &  $\{ {0}: 1,~ {1}: 63,~ {2}: 30 \}$  & 16 & 1158 & 96 & 4  &  94 & $\{0: 2454, 2: 1176, 4: 370, 6: 30, 10: 2\}$  \\
    4   &  $\{ {0}: 1,~ {1}: 63,~ {2}: 42 \}$  & 16 & 1166 & 94 & 5  &  106 & $\{0: 2338, 2: 1428, 4: 210, 6: 56\}$  \\
    \rowcolor{gray!10}
    5   &  $\{ {0}: 1,~ {1}: 63,~ {2}: 126,~ {3}: 32 \}$  & 16 & 1166 & 96 & \textbf{8+1}  &  222 & $\{0: 2373, 2: 1428, 4: 168, 8: 63\}$
  \\
    6   &  $\{ {0}: 1,~ {1}: 63,~ {2}: 54 \}$  & 16 & 1168 & 96 & 9  &  118 & $\{0: 2442, 2: 1229, 4: 303, 6: 51, 8: 7\}$  \\
    \rowcolor{gray!10}
    7   &  $\{ {0}: 1,~ {1}: 63,~ {2}: 30 \}$  & \textbf{32}  & 1170 & 96 & 6  &  94 & $\{0: 2401, 2: 1371, 4: 195, 6: 50, 14: 15\}$  \\
    8   &  $\{ {0}: 1,~ {1}: 63,~ {2}: 42 \}$  & 16 & 1170 & 96 & 8  &  106 & $\{0: 2426, 2: 1255, 4: 297, 6: 49, 8: 5\}$  \\
    \rowcolor{gray!10}
    9   &  $\{ {0}: 1,~ {1}: 63,~ {2}: 54 \}$  & 16 & 1170 & 96 & 9  &  118 & $\{0: 2439, 2: 1235, 4: 297, 6: 57, 8: 4\}$  \\
    10  &  $\{ {0}: 1,~ {1}: 63,~ {2}: 54 \}$  & 16 & 1170 & 96 & 9  &  118 & $\{0: 2422, 2: 1271, 4: 279, 6: 53, 8: 7\}$  \\
    \rowcolor{gray!10}
    11  &  $\{ {0}: 1,~ {1}: 63,~ {2}: 42,~ {3}: 8 \}$  & 16 & 1172 & 96 & 20  &  114 & $\{0: 2385, 2: 1339, 4: 258, 6: 45, 8: 2, 12: 3\}$  \\
    12  &  $\{ {0}: 1,~ {1}: 63,~ {2}: 54,~ {3}: 8 \}$  & 16 & 1172 & 96 & 20  &  126 & $\{0: 2404, 2: 1307, 4: 261, 6: 53, 8: 7\}$  \\
    \rowcolor{gray!10}
    13  &  $\{ {0}: 1,~ {1}: 63,~ {2}: 42 \}$  & 16 & 1174 & 96 & 9  &  106 & $\{0: 2414, 2: 1271, 4: 303, 6: 37, 8: 7\}$  \\
    \bottomrule
  \end{tabular}
  \normalsize
  \caption{Several CCZ-class invariants for the functions in the Banff list and
  bounds on the number of EA classes in their CCZ-classes.}
\label{tab:banff-thick}
\if\IEEEversion1
\end{table*}
\else
\end{table}
\fi

Combining all the invariants listed in Table~\ref{tab:list-invariants}
that are not based on the ortho-derivative, we still could not see
that all these functions fit into different EA-classes as they are
identical for Functions 9 and 10. However, as we can see, the
differential spectrum of the ortho-derivative is sufficient on its own
to show that they are indeed in different EA-classes. We also get the same partition if the differential spectrum of the ortho-derivative is replaced by its extended Walsh spectrum. In general, combining both quantities provides a finer
grained view, but it is not necessary here.

Interestingly, all ortho-derivatives have the trivial thickness
spectrum (i.e. $\{ {0}:1 \}$), except for the cube mapping and for
the Kim mapping. Their thickness spectra
are given by:
\begin{equation*}
  \begin{split}
    \textrm{thickness spectrum of }~ \pi_{x^{3}} &=
\{{0}: 1, {3}: 9, {6}: 54 \} \\
    \textrm{thickness spectrum of }~ \pi_{\kappa} &=
\{ {0}: 1, {6}: 5 \}~,
  \end{split}
\end{equation*}
implying that both are EA-equivalent to permutations.

\subsection{8-bit Quadratic APN functions}
\label{sec:applications-8bit}

As Dillon {\em et al.} derived their APN permutation on \(6\)~variables
from a quadratic APN function, there have been attempts to reproduce
this general approach by finding ways to generate large numbers of
quadratic APN functions on an even number of variables, and then
checking if they are in fact CCZ-equivalent to a permutation. While
none of the obtained functions is CCZ-equivalent to a permutation,
more than $20,000$ distinct 8-bit quadratic APN functions have been
exhibited, the first $8,000$ having been obtained using the
QAM~\cite{add:YuWanLi14}, and the next $12,000$ through an optimized
guess-and-determine approach focusing on functions with internal
symmetries~\cite{add:BeiLea20,add:BeiLea20data}.

Combining both lists gave us $21,102$ distinct quadratic APN
functions.  It turns out that all of these functions can be put into
distinct buckets in a few minutes using the extended Walsh spectrum
and differential spectrum of their ortho-derivative as a
distinguisher. The fact that these functions are thus in distinct
CCZ-equivalence classes\footnote{Recall that CCZ-equivalence and
  EA-equivalence coincide in the case of quadratic APN
  functions~\cite{add:Yoshiara11}.}  is not a new result,\footnote{In
  fact, the authors of~\cite{add:BeiLea20} used our method based on
  the ortho-derivative---and indeed our implementation---to solve this
  problem.} but the speed of our method is noteworthy (see
also~\cite{add:Kaleyski20}). It has a low memory complexity, and
handles the 21102 8-bit functions under investigation in about 70~seconds on
a desktop computer.\footnote{More precisely, all these experiments
  were run on a dell Precision 3630 with an Intel Core i5-8500 CPU at
  3.00GHz, and 32GB of RAM.} The second best invariant for this
purpose is undoubtedly the $\Sigma^{4}$-multiplicities, which takes 19367
distinct values for our functions, meaning that it has almost the same
distinguishing power as the ortho-derivative. The computation of the
$\Sigma^{4}$-multiplicities took about 1 hour and 26~minutes on the same computer.  We
thus claim that our ortho-derivative-based approach is at the moment
the best solution to the EA-partitioning problem in the case of
quadratic APN functions. While this setting may be narrow, it is
arguably one of the most interesting ones.

We can still use the other invariants to learn more about these
functions. 
First, there are only 6 distinct extended Walsh spectra in the whole list of the 21,102 known \(8\)-bit quadratic APN functions:
\begin{equation*}
  \begin{split}
    & \{0: 16320,~ 16: 43520,~ 32: 5440\}, \\
    & \{0: 15600,~ 16: 44544,~ 32: 5120,~ 64: 16\}, \\
    & \{0: 14880,~ 16: 45568,~ 32: 4800,~ 64: 32\}, \\
    & \{0: 14160,~ 16: 46592,~ 32: 4480,~ 64: 48\}, \\
    & \{0: 13440,~ 16: 47616,~ 32: 4160,~ 64: 64\}, \\
    & \{0: 12540,~ 16: 48640,~ 32: 4096,~ 128: 4\},~
  \end{split}
\end{equation*}
meaning that there are many functions with identical extended Walsh
spectra but distinct thickness spectra. On the other hand, some
functions have identical thickness spectra but different Walsh spectra
(see the bottom of Table~\ref{tab:8bit-properties} for an example).
There are $255 = 2^{8}-1$ different thickness spectra, a number that looks
interesting in itself. Indeed, recall that there are $7=2^{3}-1$
different thickness spectra among all 6-bit quadratic APN functions.

We can also fit all these functions into 486 different buckets with
distinct extended Walsh spectrum/thickness spectrum pairs. However,
the functions are not uniformly spread among said buckets, in fact
only 10 of these classes account for about a third of all functions
(see the first rows of Table~\ref{tab:8bit-properties}). We remark
that, for all these large classes, the number of spaces of thickness 2
is always a multiple of 6, and
that the 10~most common thickness spectra correspond to those having
between 108 and 162 spaces of thickness~2.  However, it is not
necessary for $N_{2}$ to be a multiple of 6 as witnessed for example
by the function with thickness spectrum such that
$N_{2}=104 \equiv 2 \mod 6$ (see line 11 of
Table~\ref{tab:8bit-properties}).

\if\IEEEversion1
\begin{table*}[h!tb]
\else
\begin{table}[h!tb]
  \fi
  \centering
  \rowcolors{2}{}{gray!10}
  \renewcommand\arraystretch{1.7}
  \tiny
  \begin{tabular}{cll}
    \toprule
    \footnotesize{Cardinality} & \footnotesize{Extended Walsh spectrum} & \footnotesize{Thickness Spectrum} \\
    \midrule
    617 & $\{0: 16320, 16: 43520, 32: 5440\}$ & $\{ 0: 1, 1: 255, 2: 162 \}$ \\
    681 & $\{0: 16320, 16: 43520, 32: 5440\}$ & $\{ 0: 1, 1: 255, 2: 156 \}$ \\
    617 & $\{0: 16320, 16: 43520, 32: 5440\}$ & $\{ 0: 1, 1: 255, 2: 150 \}$ \\
    606 & $\{0: 16320, 16: 43520, 32: 5440\}$ & $\{ 0: 1, 1: 255, 2: 144 \}$ \\
    640 & $\{0: 16320, 16: 43520, 32: 5440\}$ & $\{ 0: 1, 1: 255, 2: 138 \}$ \\
    681 & $\{0: 16320, 16: 43520, 32: 5440\}$ & $\{ 0: 1, 1: 255, 2: 132 \}$ \\
    635 & $\{0: 16320, 16: 43520, 32: 5440\}$ & $\{ 0: 1, 1: 255, 2: 126 \}$ \\
    664 & $\{0: 16320, 16: 43520, 32: 5440\}$ & $\{ 0: 1, 1: 255, 2: 120 \}$ \\
    639 & $\{0: 16320, 16: 43520, 32: 5440\}$ & $\{ 0: 1, 1: 255, 2: 114 \}$ \\
    616 & $\{0: 16320, 16: 43520, 32: 5440\}$ & $\{ 0: 1, 1: 255, 2: 108 \}$ \\
    \midrule
    1   & $\{0: 16320, 16: 43520, 32: 5440\}$ & $\{ 0: 1, 1: 255, 2: 104 \}$ \\
    
    \midrule
    22  & $\{0: 15600, 16: 46520, 32: 5440\}$ & $\{ {0}: 1, {1}: 255\}$ \\
    \midrule
    1   & $\{0: 16320, 16: 43520, 32: 5440\}$ & $\{ 0: 1, 1: 255, 2: 294, {3}: 56, {4}: 64\}$ \\
    1   & $\{0: 16320, 16: 43520, 32: 5440\}$ & $\{ 0: 1, 1: 255, 2: 210, 3: 56, 4: 64\}$ \\
    \midrule 
    78  & $\{0: 15600, 16: 43520, 32: 5440 \}$ & $\{ {0}: 1, {1}: 255, {2}: 194\}$ \\
    9   & $\{0: 15600, 16: 44544, 32: 5120, 64: 16\}$ & $\{ {0}: 1, {1}: 255, {2}: 194 \}$ \\
    \bottomrule
  \end{tabular}
  \caption{The properties of some interesting classes of 8-bit quadratic APN functions.}
  \label{tab:8bit-properties}
  \if\IEEEversion1
\end{table*}
\else
\end{table}
\fi

Focusing now on the least common spectra, we observe that 143
functions belong to classes that contain only one function. For
instance, the function with the highest number of vector spaces of
dimension $n$ in its Walsh zeroes (669 in total) does not share its
thickness spectrum with any other APN function in the list. Only two
functions have spaces of thickness 4 in their Walsh zeroes.

All functions have $N_{1}=255$, a quantity which was explained to be
related to the derivatives of quadratic functions
in~\cite{add:CanPer18}. Interestingly, there are 22 functions for
which there is nothing else in the thickness spectrum. The most
prominent function in this set is the cube mapping $x \mapsto
x^{3}$. There is a wide variety of thickness spectra of the form $N_{0}=1,
N_{1}=255, N_{2}=\ell$ as $\ell$ varies from 12 to 264. We give the
number of functions with each such thickness spectrum in
Figure~\ref{fig:N_2} (the Walsh spectra are not taken into account in
this figure). As we can see, most functions satisfy $N_{2} \equiv 0
\mod 6$, and the distribution of such functions seems to follow a
Gaussian distribution with mean $132.06$. There are fewer functions
satisfying $N_{2} \not\equiv 0 \mod 6$, and those seem to follow their
own Gaussian distribution with a different mean of $166.50$.

\if\IEEEversion1
\begin{figure*}[h!tb]
  \else
  \begin{figure}[h!tb]
\fi
  \centering
  \includegraphics[width=12cm]{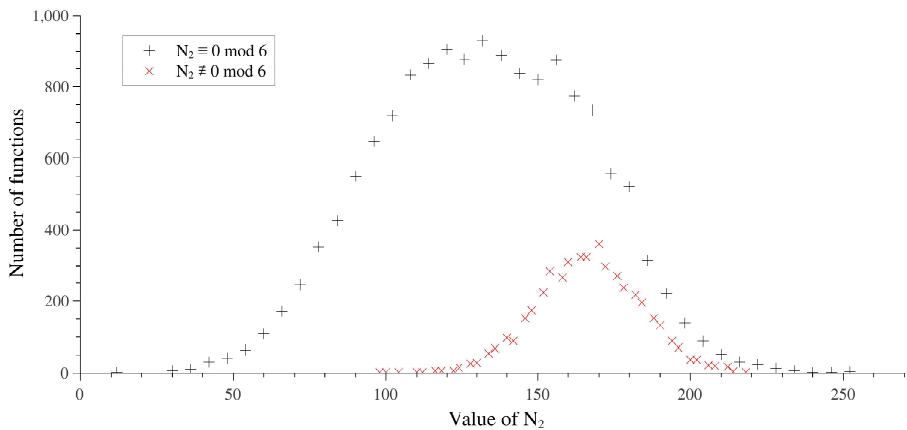}
  \caption{The number of known 8-bit quadratic APN functions such that
    the spaces in their Walsh zeroes have a maximum thickness of
    2. Different symbols are used depending on whether $N_{2} \equiv 0
    \mod 6$ or not.}
  \label{fig:N_2}
  \if\IEEEversion1
\end{figure*}
\else
\end{figure}
\fi

Finally, we remark that the ortho-derivatives of all of the more than
$20,000$ functions we investigated have a trivial thickness spectrum,
i.e. $\{ {0}:1 \}$.

\section{Conclusion}
\label{sec:conclusion}

We can efficiently solve both EA-recovery and EA-partitioning in a new set
of cases, especially for quadratic APN functions that are of the most
importance to researchers working on the big APN problem. In
particular, our use of the ortho-derivative of quadratic APN functions
for EA-partitioning has already enabled us to classify the new APN functions found in~\cite{add:BeiLea20}.

However, a general solution to both problems that could be applied in all cases, without conditions on the algebraic degree of the functions or on the form of the affine mappings involved, remains to be found.

\bibliographystyle{alpha}
\newcommand{\etalchar}[1]{$^{#1}$}

\end{document}